\newtheorem{theorem}{Theorem}
\newtheorem{lemma}[theorem]{Lemma}
\newtheorem{proposition}[theorem]{Proposition}
\theoremstyle{plain}
\newcommand{\vect}[1]{\overrightarrow{#1}}
\newcommand{\cq}{\mathcal{Q}}
\newcommand{\cv}{\mathcal{V}}
\newcommand{\bw}{\mathbf{w}}
\DeclareMathOperator{\abov}{Above}
\DeclareMathOperator{\righ}{Right}
\DeclareMathOperator*{\height}{height}
\DeclareMathOperator*{\width}{width}
\DeclareMathOperator*{\bigL}{\mathcal{L}}
\DeclareMathOperator*{\bigC}{\mathcal{C}}
\DeclareMathOperator*{\bigN}{\mathbb{N}}
\DeclareMathOperator*{\bigZ}{\mathbb{Z}}
\DeclareMathOperator{\occ}{Occ}
\DeclareMathOperator*{\dom}{dom}
\title{Comparison of Coverability and Multi-Scale Coverability in
  One and Two Dimensions}
\author{Guilhem Gamard and Gwena{\"{e}}l Richomme}
\date{June 20, 2015}
\begin{document}

\maketitle

\begin{abstract}
  A word is \emph{quasiperiodic} (or \emph{coverable}) if it can be
  covered with occurrences of another finite word, called its
  \emph{quasiperiod}. A word is \emph{multi-scale quasiperiodic} (or
  \emph{multi-scale coverable}) if it has infinitely many different
  quasiperiods. These notions were previously studied in the domains
  of text algorithms and combinatorics of right infinite words.
  
  We extend them to infinite pictures (two-dimensional words). Then we
  compare the regularity properties (uniform recurrence, uniform
  frequencies, topological entropy) of quasiperiodicity with
  multi-scale quasiperiodicity, and we also compare each of them with
  its one-dimensional counterpart.
  
  We also study which properties of quasiperiods enforce properties on
  the quasiperiodic words.
\end{abstract}

\section{Introduction}
\label{sec:intro}

At the beginning of the 1990's, in the area of text algorithms,
Apostolico and Ehrenfeucht introduced the notion of
\emph{quasiperiodicity}~\cite{ApostolicoEhrenfeucht1993TCS}. Their
definition is as follows: ``a string $w$ is quasiperiodic if there is
a second string $u \neq w$ such that every position of $w$ falls
within some occurrence of $u$ in $w$''. The word $w$ is also said to
be $u$-quasiperiodic, and $u$ is called a \emph{quasiperiod} (or a
\emph{cover}) of $w$. For instance, the string:
\begin{equation*}
  ababaabababaababababaababa
\end{equation*}
is $aba$-quasiperiodic and $ababa$-quasiperiodic.

In 2004, Marcus extended this notion to right-infinite words and
observed some basic facts about this new class. He opened several
questions~\cite{Marcus2004BEATCS}, most of them related to Sturmian
words and the subword complexity. First answers were given
in~\cite{LeveRichomme2004BEATCS}. A characterization of right-infinite
quasiperiodic Sturmian words was given in~\cite{LeveRichomme2007TCS}
and extended to episturmian words in~\cite{GlenLeveRichomme2008TCS}.
More details on the complexity function were given
in~\cite{PolleyStaiger2010DFCS, MarcusMonteil2006Arxiv}.

In~\cite{MarcusMonteil2006Arxiv}, Marcus and Monteil showed that
quasiperiodicity is independent from several other classical notions
of symmetry in combinatorics on words. They also introduced a stronger
notion, namely \emph{multi-scale quasiperiodicity}, with better
properties.

Finally, in~\cite{CrochemoreIliopoulosKorda1998Algorithmica}, a
two-dimensional version of quasiperiodicity was introduced. In
particular, a linear-time algorithm computing all square quasiperiods
of a square matrix of letters was given.

\paragraph{Warning}
Note that in some contexts, most notably in the fields of sub-shifts,
symbolic dynamics and tilings, ``quasiperiodic'' means ``uniformly
recurrent''. Since we intend to move towards these areas in the
future, the risk of collision is very high. Hence, from now on, we
refer to quasiperiodic words as \emph{coverable} words; each
quasiperiod is a \emph{cover} (or \emph{covering pattern}).

\medskip\noindent
In~\cite{GamardRichomme2015Lata}, we continued the study of
two-dimensional coverability by generalizing the results
from~\cite{MarcusMonteil2006Arxiv} to infinite pictures. In
particular, we have shown some dependence and independence results
between coverability (resp. multi-scale coverability) and
aperiodicity, uniform recurrence, uniform frequencies, and topological
entropy.

Our idea was (and still is) that coverability is a local rule. Hence a
natural question, related to dynamical systems and tilings, is: does
this local rule enforce some global order? (For a broader study of
this question in the general context of tilings, see
e.g.~\cite{DurandLevinShen2005MathInt}). Independence results are
negative answers to this question: coverability, which is a local
rule, does not imply global properties. This is why we focus on a
stronger notion, multi-scale coverability, in the last part of this
paper.

Our preliminary results (from~\cite{GamardRichomme2015Lata}) are
summarized in the following table. Here, $\bot$ means ``independent'',
$?$ means ``not treated yet'' and $\implies$ means ``implies something
about''.

\begin{center}
  \begin{tabular}{| c | c c c c |}
    \hline %
    & Aperiodicity & Uniform recurrence & Frequencies & Entropy
    \\
    \hline
    Coverability
    & $\bot$ & $?$ & $?$ & $?$
    \\
    Multi-scale
    & $\bot$ & $\implies$ & $?$ & $\implies$
    \\
    \hline
  \end{tabular}
\end{center}

In this article, we extend these results in various ways. We complete
our independence (and dependence) results with coverability. We also
show that multi-scale coverability implies the existence of uniform
frequencies. As a summary, we have get the following table.

\begin{center}
  \begin{tabular}{| c | c c c c |}
    \hline %
    & Aperiodicity & Uniform recurrence & Frequencies & Entropy
    \\
    \hline
    Coverability
    & $\bot$ & $\bot$ & $\bot$ & $\implies$
    \\
    Multi-scale
    & $\bot$ & $\implies$ & $\implies$ & $\implies$
    \\
    \hline
  \end{tabular}
\end{center}

\noindent
The paper is structured as follows.

First, we recall notation and definitions of classical properties of
symmetry on pictures, notably uniform recurrence, uniform frequencies
and topological entropy. Then we adapt an elementary proof from the
one-dimensional case to show that, in two dimensions, coverability is
independent from these properties. This proof relies on a very
specific cover; we conclude this section by showing that, for many
other covers, coverability implies zero topological entropy. This new
result might seem surprising, as it is different from dimension one.
We conjecture that coverability may imply zero topological entropy,
except for a very specific class of covers (which are essentially
one-dimensional words). This would make things very different from the
one-dimensional case. %
(Section~\ref{sec:coverability})

In the previous section, we observed that independence of coverability
from other properties depend on the cover. Therefore, we give
characterizations of all the covers for which the independence results
about aperiodicity, uniform recurrence and uniform frequencies hold.
(Our previous work only did this for aperiodicity). %
(Section~\ref{sec:construct})

Finally, we study relations between multi-scale coverability and
topological entropy, uniform recurrence and uniform frequencies.
Multi-scale coverability is a good notion of symmetry in one
dimension, as it implies uniform recurrence, uniform frequencies and
zero topological entropy. In our preliminary work, we have studied
links between multi-scale coverability and uniform recurrence and
topological entropy in one dimension. We present these results, along
with a new one: in two dimensions, multi-scale coverability also
implies the existence of frequencies. This proof is purely
combinatorics and does not involve ergodic theory. %
(Section~\ref{sec:msc})

\section{Independence and Dependence Results}
\label{sec:coverability}

\subsection{Definitions and Notation}

In this section, we give all the notation and the definitions we will
use afterwards. The four next paragraphs adapt notation from
combinatorics on one-dimensional words to the context of pictures. The
rest of the section recalls definitions of classical properties of
symmetry on words, such as periodicity and uniform recurrence, adapted
for pictures. The most important one is the definition of
\emph{coverability}, which is central in this article. We will
occasionally use classical notation and well-known results from
combinatorics on one-dimensional words; for these,
see~\cite{Lothaire1997}.

\bigskip\noindent%
Let $\Sigma$ be a finite alphabet. An \emph{infinite picture} (or
\emph{two-dimensional word}, or $\bigZ^2$-word) is a function from
$\bigZ^2$ to $\Sigma$. Unless otherwise stated, those functions are
assumed to be total. When clarification is needed, we note $\dom(\bw)$
the domain of $\bw$, i.e. the set of coordinates where it has defined
letters.

A \emph{finite picture} (or \emph{block}, or \emph{rectangular word})
is a function $w$ such that
$\dom(w) = \{i, \dots, i + n - 1\} \times \{j, \dots, j + m - 1\}$,
for $i, j \in \bigZ$ and $n, m \in \bigN$. In that case, let
$\width(w) = n$ and $\height(w)=m$. Moreover, the function $w$ such
that $\dom(w) = \emptyset$ is \emph{the empty block}; it is considered
as a block and has width and height equal to $0$. Conversely, any
block which has either width or height equal to $0$ is the empty
block. The set of blocks of dimension $n \times m$ is denoted by
$\Sigma^{n \times m}$. More generally, if $u$ is a block, then
$u^{n \times m}$ denotes the $n\width(u) \times m\height(u)$-block
which consists only in occurrences of $u$.

If $u$ and $v$ are blocks, then let $|u|_v$ denote the number of
occurrences of $v$ in $u$. Let $u[x,y]$ denote the image of $(x,y)$ by
$u$. If $\bw$ is an infinite picture,
$\bw[(x,y), \cdots, (x+w-1, y+h-1)]$ denote the restriction of $\bw$
to the rectangle $\{x, \dots, x+w-1\} \times \{y, \dots, y+h-1\}$, for
$x,y \in \bigZ$ and $w,h \in \bigN$. If either $w$ or $h$ equal $0$,
then this denotes the empty block.

We will sometimes need to see finite blocks as one-dimensional words
whose alphabets are ``columns'' or ``lines''. %
Let $\bigC_{\Sigma, n}$ (resp. $\bigL_{\Sigma,m}$) denote the set of
$n$-columns (resp. $m$-lines) over $\Sigma$, i.e. $1 \times n$-blocks
(resp. $m \times 1$-blocks) over $\Sigma$. Concatenation in $\bigC$
and $\bigL$ is done respectively horizontally or vertically.

\bigskip\noindent %
In what follows, let $\bw$ be an infinite picture and let $u,v$ be
blocks. We recall some classical notions from combinatorics on words,
adapted to the two-dimensional case.

By definition, $\bw$ has a \emph{vector of periodicity}
$(k,\ell) \in \bigZ^{2} \backslash \{(0,0)\}$ if, for all positions
$(x,y) \in \bigZ^2$, we have $\bw(x,y) = \bw(x+k, y+\ell)$. Moreover,
we say that $\bw$ is \emph{periodic} if it has at least \emph{two
  non-colinear} vectors of periodicity.

We say that $u$ is a \emph{cover} (or a \emph{covering pattern}) of
$\bw$ if, for all $(x,y) \in \bigZ^2$, there exists
$(i,j) \in \bigN^2$ with $0 \leq i < \width(u)$ and
$0 \leq j < \height(u)$ such that
$\bw[(x-i, y-j) \dots (x-i+\width(u)-1, y-j+\height(u)-1])$ is equal
to $u$. Intuitively, $u$ is a cover of $\bw$ when each position of
$\bw$ belongs to an occurrence of $u$. If $\bw$ has at least one
cover, then it is \emph{coverable}.

The picture $\bw$ is \emph{uniformly recurrent} if, for all
$k \in \bigN$, there exists some $\ell \in \bigN$ such that all
$k \times k$-blocks of $\bw$ appear in all $\ell \times \ell$-blocks
of $\bw$. Intuitively, this means that any block of $\bw$ appears
infinitely often with bounded gaps.

Let $c_\bw(n,m)$ be the number of different $n \times m$-blocks of
$\bw$. Note that $c_\bw$ is known as the \emph{block complexity
  function} of $\bw$, and links between periodicity and block
complexity are currently investigated (see
e.g.~\cite{Cassaigne1999DLT}). In this paper, we will focus on the
\emph{topological entropy} of $\bw$, which is the following quantity:
\begin{equation*}
  H(\bw) = \lim_{n \to \infty} \frac{\log_{|\Sigma|} c_\bw(n,n)}{n^2}
\end{equation*}
Intuitively, if $c_\bw(n,n) \simeq |\Sigma|^{\varepsilon n^2}$, then
$H(\bw) \simeq \varepsilon$. In other words, when the complexity
function of $\bw$ is polynomial, $\bw$ has zero entropy. This is a
classical regularity property on words, often used in the context of
dynamical systems. Note that there are several kinds of entropy, in
addition to topological entropy. However, topological entropy is the
canonical notion when studying finite-type sub-shifts, our area of
interest.

Finally, the \emph{frequency} of $u$ in $\bw$ is the following
quantity:
\begin{equation*}
  f_u(\bw) = \lim_{n \to \infty} \frac{|\bw[(-n,-n) \dots (+n, +n)]|_u}{n^2}
\end{equation*}
if it exists. If $f_u(\bw)$ exists for all blocks $u$ of $\bw$, then
$u$ is said to \emph{have frequencies} (or \emph{have uniform
  frequencies}). This is another common regularity property coming
from dynamical systems, where it is more often called \emph{unique
  ergodicity}.

\subsection{Coverability is Independent from Classical Notions of
  Symmetry}

Now let us warm up with an easy independence result, already known in
one-dimension (see~\cite{MarcusMonteil2006Arxiv}).

First, recall that $\Sigma$ is a finite alphabet and let $h$ denote a
function from $\Sigma$ to $\Sigma^{n \times m}$, for $n,m \in \bigN$.
Then, if $\bw$ is a picture (either finite or infinite), then $h(\bw)$
is the word $\bw'$ such that
$\bw'[(xn,ym), \cdots, (xn + n-1, ym + m-1)] = h(\bw[x,y])$ for all
$x,y \in \bigN$. We say that $h$ is a \emph{morphism on pictures}. We
will use such a morphism in the next proof.

\begin{proposition}
  \label{proposition:coverable-nothing}
  On infinite pictures, coverability is independent from uniform
  recurrence, existence of frequencies and topological entropy.
\end{proposition}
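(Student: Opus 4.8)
The plan is to prove "independence" by exhibiting, for each pair of properties, a coverable picture that has the first property but not the second (and vice versa), so that coverability forces nothing about any of these three notions. Since the result is announced as "already known in one dimension," the natural strategy is to lift one-dimensional witnesses to pictures using the morphism on pictures $h$ introduced just before the statement, which sends each letter to an $n \times m$-block. Let me sketch how I would organize this.

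\begin{proof}[Proof sketch]
The plan is to construct coverable pictures witnessing that coverability implies none of the three properties. First I would pick a convenient cover, say a very symmetric block such as the $2 \times 2$ all-$a$ block $u = a^{2 \times 2}$ (or a suitable small block), for which coverability is easy to arrange: any picture built by tiling the plane with shifted copies of $u$ in a sufficiently flexible pattern is $u$-coverable, because every position lies in some occurrence of $u$. The key observation is that coverability by such a $u$ leaves enough freedom to encode arbitrary content at a coarse scale.

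Next I would exploit the morphism $h \colon \Sigma \to \Sigma^{n \times m}$ defined in the paragraph preceding the proposition. The idea is to take a one-dimensional (or essentially two-dimensional) sequence $\bw_0$ with prescribed behaviour---for instance a sequence that is \emph{not} uniformly recurrent, or one with positive topological entropy, or one without uniform frequencies---and apply a morphism $h$ whose image blocks are chosen so that $h(\bw_0)$ becomes coverable by $u$ while still inheriting the defect of $\bw_0$. Concretely, I would choose $h$ so that the images of the letters overlap consistently on their borders, guaranteeing that $u$ covers every position of $h(\bw_0)$, yet the arrangement of distinct image-blocks faithfully records the original sequence. Because block complexity, recurrence, and frequencies of $h(\bw_0)$ are controlled (up to bounded factors depending only on $n,m$) by the corresponding quantities of $\bw_0$, any pathology of $\bw_0$ survives in $h(\bw_0)$.

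I would then instantiate this construction three times. For entropy, take $\bw_0$ with positive topological entropy (e.g.\ a full shift restricted to one axis), so that $h(\bw_0)$ is coverable with $H(h(\bw_0)) > 0$, while the constant picture shows coverability is compatible with zero entropy. For uniform recurrence, take $\bw_0$ containing a block that appears only finitely often or with unbounded gaps, yielding a coverable but non-uniformly-recurrent picture, contrasted with a periodic coverable picture that is uniformly recurrent. For frequencies, take $\bw_0$ for which some letter frequency oscillates and fails to converge, transported through $h$ to a coverable picture without uniform frequencies.

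The main obstacle I anticipate is the simultaneous requirement that $h(\bw_0)$ be \emph{coverable} and that it \emph{faithfully transmit} the irregularity of $\bw_0$: these pull in opposite directions, since coverability is a strong local constraint while the pathologies require local variation. The delicate part is designing the morphism $h$ (or equivalently the shape and overlap structure of the image blocks) so that the cover $u$ sweeps across block boundaries---knitting adjacent images together---without erasing the distinction between different letters of $\bw_0$. I expect this to hinge on a careful choice where $u$ is strictly smaller than the image blocks and sits astride their seams, so that coverage is automatic yet the ``interior'' of each image block still carries one bit of information. Verifying that the entropy, recurrence, and frequency of $h(\bw_0)$ differ from those of $\bw_0$ only by controlled factors is then a routine but necessary computation.
\end{proof}
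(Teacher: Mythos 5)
There is a genuine gap, and it sits exactly at the point your proposal treats as a detail to be filled in later. Your concrete suggestion for the cover, $u = a^{2\times 2}$, cannot work: by definition of coverability, every position of a $u$-coverable picture lies inside an occurrence of $u$, and since $u$ consists only of the letter $a$, every letter of the picture must be $a$. The only $u$-coverable picture is the constant picture $a^{\bigZ^2}$, which is periodic, uniformly recurrent, has frequencies, and has zero entropy — so no pathology can be encoded at any scale. Your ``key observation'' that coverability by such a block leaves freedom to encode arbitrary content is false; the freedom in coverability comes entirely from non-trivial self-overlaps (borders) of the cover, and a constant block has none that help. (This is not a minor slip: the paper's Section~\ref{sec:construct} shows that covers whose primitive root lacks a suitable border admit \emph{only} periodic coverable pictures, so your strategy is doomed for a whole class of ``symmetric'' covers, not just this one.)

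The remainder of your plan — lift a pathological word through a letter-to-block morphism whose images remain coverable — is indeed the paper's approach for entropy and frequencies, but the entire content of the proof is the explicit design you defer as ``delicate'' and ``routine computation.'' The paper takes the cover $aba$ (a $3\times 1$ block) and the morphism $\nu(a)=ababaaba$, $\nu(b)=abaababa$: each of these two words is \emph{internally} covered by occurrences of $aba$, so the image of \emph{any} picture, with the images concatenated in any order, is $aba$-coverable line by line; and since the two image words are distinct, $\nu$ is injective and transmits positive (or zero) complexity and the failure of frequencies. Note also that the working mechanism is the opposite of what you anticipate: no occurrence of the cover needs to ``sit astride the seams'' of adjacent image blocks, and the image blocks do not overlap at all (it is a genuine morphism); coverage is achieved inside each image block separately, which is precisely why arbitrary concatenations cause no trouble. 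Finally, for uniform recurrence the paper uses a separate explicit $3\times 3$ cover and a non-uniformly recurrent picture (Figure~\ref{fig:non-ur}), contrasted with the periodic picture over the same cover; your morphism route could also handle this case, but only once the morphism itself is correctly constructed, which is the step your proposal never supplies.
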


\begin{proof}
  For uniform recurrence, observe that $q=
  \begin{smallmatrix}
    b&b&a\\ b&b&b\\ a&b&b\\
  \end{smallmatrix}$
  is a cover of the non-uniformly recurrent word displayed on
  Figure~\ref{fig:non-ur}. With the same value of $q$, the
  $q$-periodic infinite picture is uniformly recurrent.
  
  Let $\bw$ be a two-dimensional word over $\{a,b\}$ with polynomial
  (resp. exponential, resp. double-exponential) complexity. Consider
  the following function:
  \begin{align*}
    \nu&(a) = ababaaba \\
    \nu&(b) = abaababa
  \end{align*}
  The image $\nu(\bw)$ has polynomial with the same degree (resp.
  exponential, resp. double-exponential) complexity and is
  $aba$-coverable (viewing $aba$ as a $3 \times 1$-block). Therefore,
  we can get either zero or positive topological entropy for coverable
  words.

  Finally, the word $\nu(a^{\bigZ^2})$ has frequencies for all its
  blocks. By contrast, if $\bw$ is a word having no frequencies for
  any block, then $\nu(\bw)$ has no frequencies either.
\end{proof}

\begin{figure}[ht]
  \centering
  \includegraphics[width=0.9\textwidth]{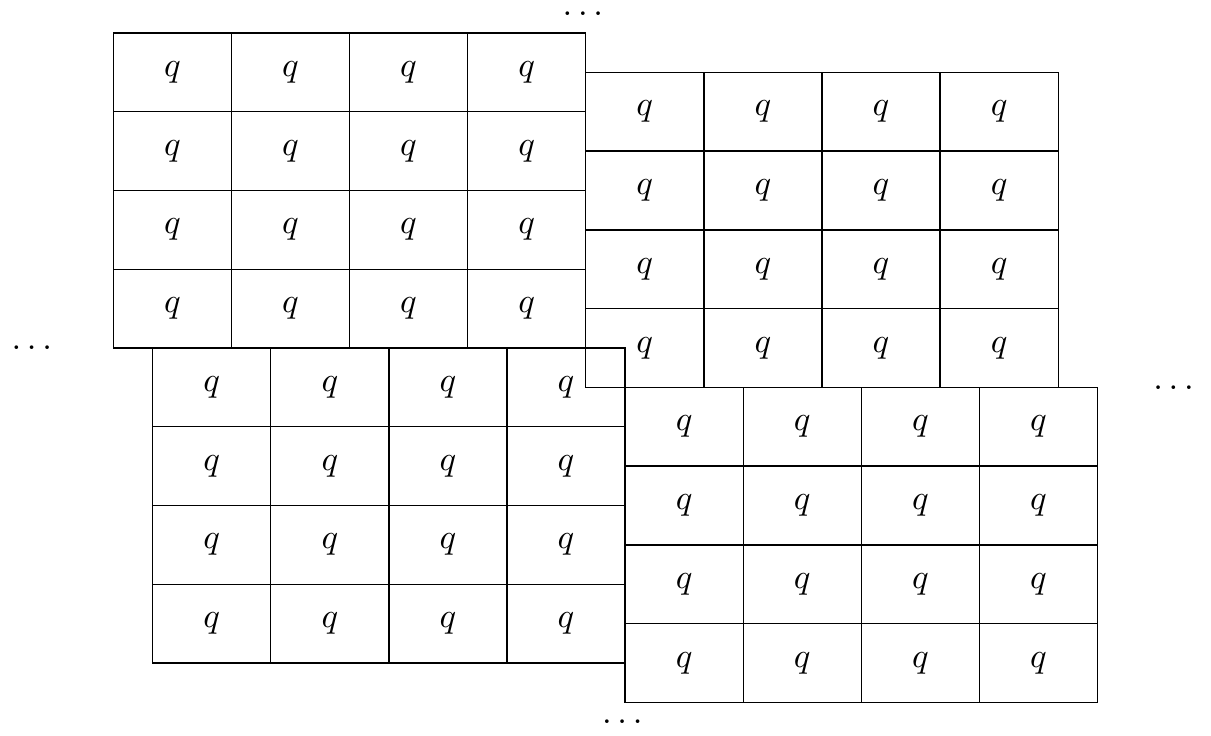}
  \caption{A coverable, non-uniformly recurrent word}
  \label{fig:non-ur}
\end{figure}

Proposition~\ref{proposition:coverable-nothing} shown independence
between coverability and various notions. However, the proof involved
a very specific quasiperiod, which was an $8 \times 1$-rectangle;
i.e., we reproduced the behaviour of one-dimensional coverability of
each line of an infinite picture. In next section, we use more
specifically the two dimensions and show that a reasonable condition
on the cover $q$ can enforce a global property of $q$-coverable
words.

\subsection{Topological Entropy of Coverable Pictures}
\label{sec:qp-ent0}

Let $q$ and $u$ be a finite blocks such that $q \neq u$ and $q$ is not
empty. We say that $u$ is \emph{a border of} $q$ when $u$ occurs in
two opposite corners of $q$. We are going to show that, if $q$ has a
corner without any (non-empty) border, then all $q$-coverable pictures
have zero topological entropy.

This is not a contradiction with
Proposition~\ref{proposition:coverable-nothing}, as we impose a
condition on the cover $q$. However, there are no equivalent results
in one dimension: no non-trivial condition on $q$ can force
topological entropy to be $0$ on right-infinite words. Therefore this
is a striking contrast with the one-dimensional case: even though
coverability and topological entropy are independent, the latter is
forced to be zero for a large class of covers. Hence global order
might arise from coverability if the cover is sufficiently
well-chosen.

\begin{theorem}
  \label{theorem:qp-ent0}
  Let $q$ be a finite picture of size $w \times h$ and $\bw$ an
  infinite $q$-coverable picture. If $q$ has a corner without any
  non-empty border, then $\bw$ has zero topological entropy.
\end{theorem}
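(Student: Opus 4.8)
The plan is to translate the combinatorial hypothesis on $q$ into a rigid constraint on how occurrences of $q$ may overlap, and then to bound the block complexity $c_\bw(n,n)$ subexponentially. By reflecting the picture if necessary, assume the corner without a non-empty border is the bottom-left one, whose opposite corner is the top-right one. First I would prove an overlap dichotomy. Take two occurrences of $q$ in $\bw$ with bottom-left corners $(x_1,y_1)$ and $(x_2,y_2)$ that genuinely overlap, i.e. $|x_1-x_2|<\width(q)$ and $|y_1-y_2|<\height(q)$. Reading the overlap region in the local coordinates of each occurrence, it is simultaneously a corner block of the first occurrence and the opposite-corner block of the second; for occurrences in the NE/SW relative position (and, taking full-height or full-width strips, also for the horizontal and vertical positions) this common block occurs at the top-right and at the bottom-left corner of $q$, hence is a non-empty border of the bottom-left corner. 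The hypothesis therefore forbids every overlap except the strictly NW/SE diagonal one. This is not in conflict with Proposition~\ref{proposition:coverable-nothing}, since there the cover has no such unbordered corner.

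Next I would exploit this constraint to get a one-dimensional rigidity. Fix a row $y=y_0$. Every occurrence meeting it has its bottom-left ordinate in the $\height(q)$ consecutive values $\{y_0-\height(q)+1,\dots,y_0\}$, and any two such occurrences whose horizontal extents overlap are forced into the NW/SE relation, so moving rightwards their ordinates strictly decrease. Since only $\height(q)$ levels are available, the occurrences covering the row organize into abutting ``descending staircases'' of length at most $\height(q)$, each carrying only $O(1)$ worth of data. The same holds for columns by symmetry. The point is that the cover, and hence $\bw$ itself, is essentially determined along each line by a bounded amount of information per occurrence, rather than by free choices at every cell.

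Finally I would run a counting argument. The goal is to show that an $n\times n$ window of $\bw$ is determined by $O(n)$ symbols of data --- for instance the occurrences covering its bottom row and left column --- by propagating into the interior: because overlaps only go NW/SE, the occurrence that covers an interior cell is pinned down by the already-reconstructed cells to its south and west. This would give $c_\bw(n,n)\le|\Sigma|^{O(n)}$ and hence $H(\bw)\le\lim_{n\to\infty}O(n)/n^2=0$. I expect the propagation/determination step to be the main obstacle: one must control occurrences that protrude out of the window and, above all, handle the staircase ``resets'' --- the places where a descending chain restarts at a fresh level --- and check that the total number of such free choices inside the window stays $O(n)$ rather than growing with its area. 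Establishing this bounded-choice bookkeeping is where the real work lies; the overlap dichotomy of the first step is the structural input that makes it possible.
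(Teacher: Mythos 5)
Your plan coincides with the paper's proof in both structure and substance: the same overlap dichotomy (the unbordered corner forbids every genuine overlap except the NW/SE diagonal one, and in particular forbids full-width and full-height overlaps), and the same counting target (an $n\times n$ block is determined by the occurrences meeting its bottom row and left column plus one offset, giving $c_\bw(n,n)\le |q|\cdot 2^{2n}$ and hence zero entropy). So the approach is the right one; the problem is that the step you yourself single out as ``where the real work lies'' is exactly the content of the paper's key lemma, and you have not supplied the idea that makes it work.

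The paper fills this hole with Proposition~\ref{prop:uniq-def}. For an occurrence $o$ with bottom-left corner $(x,y)$, set $\righ(o)=\occ(x+w,y+h-1)$ and $\abov(o)=\occ(x+w-1,y+h)$. The no-border hypothesis pins $\righ(o)$ to abscissa exactly $x+w$ and $\abov(o)$ to ordinate exactly $y+h$, leaving vertical and horizontal offsets $d_R(o)$ and $d_A(o)$ which are either zero or the dimensions of a diagonal border; the proposition then states the commutation $\righ(\abov(o))=\abov(\righ(o))=\occ\bigl(x+w+d_A(o),\,y+h+d_R(o)\bigr)$, this occurrence being uniquely determined by $o$, $d_A(o)$ and $d_R(o)$. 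This is precisely the ``bounded-choice bookkeeping'' you were missing: it shows the interior of the window involves \emph{no} free choices at all --- every interior occurrence is generated by iterating $\abov$ and $\righ$ from the two boundary chains, so the ``staircase resets'' you worry about can only occur along the bottom row and left column. There the paper bounds the choices by two per step (either $u_{i+1}=\righ(u_i)$ or $\abov(u_{i+1})=\righ(u_i)$), which is where the factor $2^{2n}$ comes from. Without this commutation lemma (or an equivalent statement), your outline does not yet yield a subexponential bound on $c_\bw(n,n)$; with it, your plan becomes essentially the paper's proof.
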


Suppose, without loss of generality, that all the borders are in the
top right-hand corner of $q$. Moreover, $q$ has no full-width nor
full-height borders.

In what follows, ``\emph{occurrence}'' denotes an occurrence of $q$ in
$\bw$~---~unless otherwise stated. Moreover, $\occ(x,y)$ denotes the
occurrence of $q$ which covers the letter at coordinates $(x,y)$. If
there are several such occurrences, we choose the leftmost one among
the lowest ones. The \emph{coordinates} of an occurrence (or a block)
are the coordinates of its bottom, left-hand corner.

Let $o$ be an occurrence whose domain is
$\{(x,y), \cdots, (x+w-1, y+h-1)\}$. We denote
$\righ(o) = \occ(x+w,y+h-1)$. If $\righ(o)$ has coordinates $(x',y')$,
then $d_R(o) = y' - y$. In particular, the coordinates of $\righ(o)$
are $(x+w, y+d_R(o))$ (since there are no borders in this corner, see
Figure~\ref{fig:illus-base}). Similarly, we denote
$\abov(o) = \occ(x+w-1,y+h)$. If $\abov(o)$ has coordinates
$(x'',y'')$, then $d_A(o) = x'' - x$. The coordinates of $\abov(o)$
are $(x+d_A(o), y+h)$ (for the same reasons).

Observe that if both $d_A(o) > 0$ and $d_R(o) > 0$, then
$(d_A(o), d_R(o))$ is the dimensions of a border of $q$. So, there are
only three possible cases, illustrated by Figure~\ref{fig:illus-base}:
\begin{enumerate}
\item $d_R(o) = 0$ and $0 \leq d_A(o) < w$;
\item $d_A(o) = 0$ and $0 \leq d_R(o) < h$; or
\item $(d_A(o), d_R(o))$ is the dimensions of a border.
\end{enumerate}

\begin{figure}[h]
  \centering
  \includegraphics[width=0.9\textwidth]{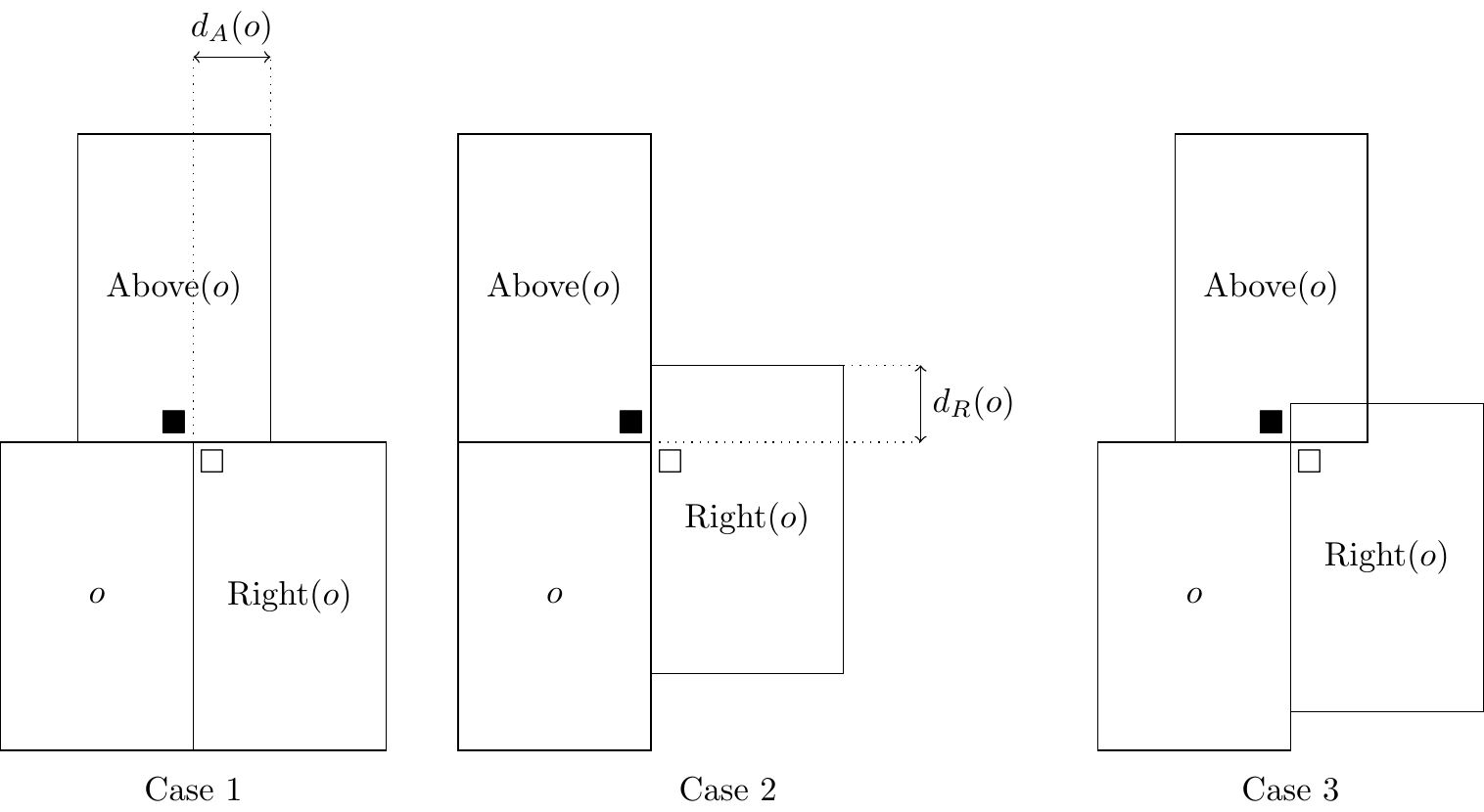}
  \caption[Illustration of basic definitions]{Illustration of $\abov$,
    $\righ$, $d_A$ and $d_R$. The squares $\square$ and $\blacksquare$
    show which letter $\righ(o)$ and $\abov(0)$ should cover,
    respectively.}
\label{fig:illus-base}
\end{figure}

\begin{proposition}
  \label{prop:uniq-def}
  Let $\bw$ and $q$ be as in Theorem~\ref{theorem:qp-ent0}.

  Let $o$ be an occurrence with coordinates $(x,y)$. Then we have:
  \begin{equation*}
    \righ(\abov(o)) = \abov(\righ(o)) = \occ(x+w+d_A(o), y+h+d_R(o))
  \end{equation*}
  In particular, this occurrence is uniquely determined by $o, d_A(o)$
  and $d_R(o)$.
\end{proposition}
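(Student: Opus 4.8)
The plan is to translate $\bw$ so that $o$ sits at the origin, so that $r := \righ(o)$ has coordinates $(w,d_R)$ and $a := \abov(o)$ has coordinates $(d_A,h)$, writing $d_A,d_R$ for $d_A(o),d_R(o)$ and using the flush alignment already recorded before the statement. Unfolding both sides then gives $\righ(\abov(o)) = \occ(w+d_A,\,2h-1)$ and $\abov(\righ(o)) = \occ(2w-1,\,h+d_R)$, while the target is $\occ(w+d_A,\,h+d_R)$. I want these three applications of $\occ$ to return a single occurrence, so I name the corner cell $P_3=(w+d_A,h+d_R)$ and the two flush cells $P_1=(w+d_A,2h-1)$ and $P_2=(2w-1,h+d_R)$; the whole proposition is the statement $\occ(P_1)=\occ(P_2)=\occ(P_3)$.

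First I would prove a flush lemma, in the exact spirit of the paragraph fixing the coordinates of $\righ(o)$ and $\abov(o)$: every occurrence covering $P_1$ has its left edge on the column $w+d_A$, and every occurrence covering $P_2$ has its bottom edge on the row $h+d_R$. Indeed, an occurrence covering $P_1$ whose left edge lay strictly inside $a$ would meet the top-right part of $a$ in its own bottom-left part, exhibiting a non-empty border of $q$ in the border-free corner pair (the same pair whose absence forces $\righ$ and $\abov$ to be flush); symmetrically for $P_2$ and the top of $r$. Thus $\righ(\abov(o))$ already has the correct $x$-coordinate $w+d_A$, and $\abov(\righ(o))$ already has the correct $y$-coordinate $h+d_R$; only the other coordinate of each remains to be controlled.

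The technical heart is to pin down $O := \occ(P_3)$ and show its bottom-left corner is exactly $P_3$. Since $\bw$ is coverable, $O$ exists and covers $P_3$; write $(X,Y)$ for its coordinates, so $X\in\{d_A+1,\dots,w+d_A\}$ and $Y\in\{d_R+1,\dots,h+d_R\}$. I would exclude $Y<h+d_R$ by a short split on $X$: then $O$ also covers the cell directly below $P_3$, which lies on the top of $r$, so $O$ and $r$ overlap; if $X>w$ the overlap realises the border-free corner pair (excluded), if $X=w$ it is a full-width border (excluded), and if $X\le w-1$ the occurrence $O$ meets $o$ or $a$ again in the border-free corner pair (excluded). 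Hence $Y=h+d_R$, and then $X<w+d_A$ is impossible because $O$ would overlap the right edge of $a$ in the border-free corner pair. This step is where every hypothesis is spent — the restriction of borders to one corner together with the ban on full-width and full-height borders — and I expect the bookkeeping of \emph{which} corner each overlap realises (so as to hit precisely the forbidden pair) to be the main obstacle; the three cases for $(d_A,d_R)$ feed into exactly these sub-cases.

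Once $O$ is known to sit at $P_3$, the rest is formal and uses the minimality built into $\occ$. One checks directly that $O$ covers both $P_1$ and $P_2$. If some occurrence covered $P_1$ strictly lower than $O$, then by the flush lemma it would be flush on the column $w+d_A$, and it would then also cover $P_3$ strictly below $O$, contradicting $O=\occ(P_3)$; symmetrically, a covering of $P_2$ strictly to the left of $O$ would yield a covering of $P_3$ to the left of $O$, again impossible. Therefore $\righ(\abov(o))=\occ(P_1)=O$ and $\abov(\righ(o))=\occ(P_2)=O=\occ(w+d_A,h+d_R)$, which is the desired equality; as a byproduct one reads off the propagation identities $d_A(\righ(o))=d_A(o)$ and $d_R(\abov(o))=d_R(o)$, and the uniqueness clause follows since $O$ depends only on $o$, $d_A(o)$ and $d_R(o)$.
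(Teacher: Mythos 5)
Your proposal is correct and takes essentially the same route as the paper: both arguments pin down $\occ(x+w+d_A(o),\,y+h+d_R(o))$ by showing any occurrence covering that cell elsewhere would overlap $o$, $\righ(o)$ or $\abov(o)$ in the forbidden corner (or as a full-width/full-height border), then check by coordinate inequalities that this occurrence covers the two cells defining $\righ(\abov(o))$ and $\abov(\righ(o))$, and conclude by the lowest-leftmost minimality in the definition of $\occ$. Your ``flush lemma'' is exactly the paper's parenthetical observation used to fix the coordinates of $\righ(o)$ and $\abov(o)$; you simply make the case analysis more explicit than the paper does.
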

\begin{proof}
  By definitions, we have:
  \begin{align*}
      & \abov(\righ(o)) \\
    = & \abov(\occ(x+w, y+d_R(o)) \\
    = & \occ(x + 2w - 1, y + h + d_R(o))
  \end{align*}
  and:
  \begin{align*}
      & \righ(\abov(o)) \\
    = & \righ(\occ(x+d_A(o), y+h)) \\
    = & \occ(x + w + d_A(o), y + 2h - 1)
  \end{align*}

  Consider $o' = \occ(x+w+d_A(o), y+h+d_R(o))$: its coordinates must
  be $(x+w+d_A(o), y+h+d_R(o))$, otherwise, it would overlap in an
  incorrect way (that is, on the upper right-hand corner) with either
  $\abov(o)$ or $\righ(o)$ (see Figure~\ref{fig:uniq-def}). So the top
  right-hand corner of $o'$ must be at coordinates
  $(x+2w+d_A(o)-1, w+2h+d_R(o)-1)$. We can see that both
  $(x + 2w - 1, y + d_R(o) + h)$ and $(x + 2 + d_A(o), y + 2h - 1)$
  are covered by $o'$ with some coordinates-checking:
  \begin{align*}
    x + w + d_A(o) &\leq x + 2w - 1 \leq x+2w+d_A(o)-1
    \\
    y + h + d_R(o) &\leq y + 2h - 1 \leq w+2h+d_R(o)-1
  \end{align*}
  Moreover, $o'$ is the lowest and the most leftwise occurrence
  covering those letters (any other one would overlap either
  $\righ(o)$ or $\abov(o)$ in an incorrect way, see
  Figure~\ref{fig:uniq-def} again). Hence, by definition of $\occ$,
  $o'$ must be the correct one.

\end{proof}

\begin{figure}[h]
  \centering
  \includegraphics[width=0.9\textwidth]{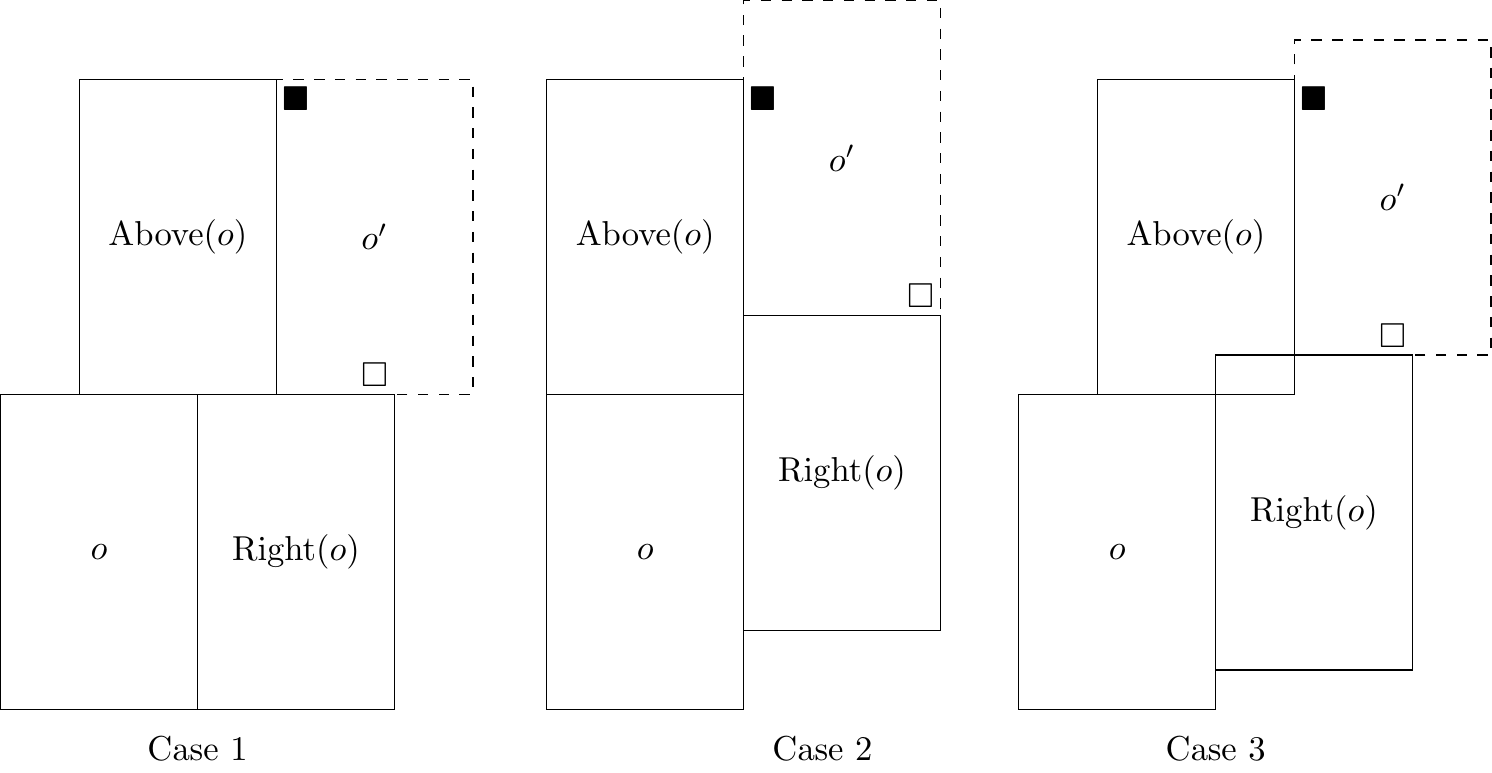}
  \caption[Illustration for proof of Proposition~\ref{prop:uniq-def}]{
    Illustration for proof of Proposition~\ref{prop:uniq-def}. The
    $\blacksquare$ indicates $(x+d_A(o)+w,y+2h-1)$ and $\square$
    indicates $(x+2w-1, y + d_R(o) + h)$.}
\label{fig:uniq-def}
\end{figure}

\begin{proof}[Proof of Theorem~\ref{theorem:qp-ent0}]
  Consider an arbitrary $n \times n$-block $B$ of $\bw$ whose leftmost
  bottom coordinate if $(x,y)$. Since $\bw$ is $q$-coverable, all
  letters of $B$ are covered by occurrences of $q$. Call
  $u_1, \cdots, u_k$ the occurrences of $q$ covering
  $B[(x,y), \cdots, (x+n-1,y)]$ (the bottom frontier of $B$), from
  left to right. Likewise, call $v_1, \cdots, v_\ell$ the occurrences
  of $q$ covering $B[(x,y), \cdots, (x,y+n-1)]$ (the left-hand
  frontier of $B$), from bottom to top. Observe that $u_1 = v_1$.

  By Proposition~\ref{prop:uniq-def}, $B$ is uniquely determined by
  $u_1, \dots, u_k$, $v_1, \dots, v_k$ and the coordinates of $B$
  relative to $u_1$. Let us bound the number of possible such $B$'s.
  
  For all $1 \leq i < k$, we have either $u_{i+1} = \righ(u_i)$ or
  $\abov(u_{i+1}) = \righ(u_i)$. Therefore, the number of possible
  sequences for $u_1, \dots, u_k$, given $u_1$, is bounded by
  $2^k \leq 2^n$. Likewise, for $1 \leq j < \ell$, we have
  $v_{j+1} = \abov(u_i)$ or $\righ(u_{i+1}) = \abov(u_i)$. Therefore,
  the number of possible sequences for $v_1, \dots, v_k$, given $v_1$,
  is bounded by $2^\ell \leq 2^n$. Finally, the number of possible $B$
  is bounded by $|q| \times 2^{2n}$.
  
  Observe that:
  \begin{equation*}
    \lim_{n \to \infty} \frac{\log (|q| \times 2^{2n})}{n^2} =
    \lim_{n \to \infty} \frac{2n \times \log |q|}{n^2}
    \to 0
  \end{equation*}
  therefore, $\bw$ has zero topological entropy.
\end{proof}

By contrast with Theorem~\ref{theorem:qp-ent0}, any block $q$ with
only full-width (or only full-height) borders has either $q$-coverable
pictures with positive entropy, or only periodic $q$-coverable
pictures. Proposition~\ref{proposition:coverable-nothing} is an
example of the first case. We conjecture that, in the remaining case
(if $q$ is a block without full-height or full-width borders, but with
borders in all corners), the topological entropy of the coverable
words is also zero.

Observe that, in the previous remark, we had to exclude the ``only
periodic coverable pictures'' case. This is a common problem with
coverability, which we will address in the next section.

\section{Aperiodic Coverings}
\label{sec:construct}

Here is a natural question about coverability: let $q$ be a block with
some property; do all $q$-coverable infinite pictures get some other
property? In other terms, can a property on a cover enforce another
property to all coverable pictures? For instance, we could imagine
that some property on $q$ would force each $q$-coverable picture to
have uniform frequencies.

Theorem~\ref{theorem:qp-ent0} is already a partial answer to this
question. In this section, we give a more exhaustive answer. We show
that there exist aperiodic $q$-coverable pictures if and and only if
the smallest block which tesselates $q$ does not overlap itself. This
is an answer to the previous question in the case of periodicity. This
is specifically important, as we generally want to exclude the
periodic case when working with coverability.

Finally, we observe that this condition is also necessary and
sufficient to force the existence of coverable non-uniformly recurrent
pictures, and even coverable non-uniform frequencies pictures.
Therefore, except for a trivial class of covers (trivial in the sense
that they only allow periodic coverable pictures), coverability does
not force any interesting property but (possibly) zero topological
entropy.

\subsection{A Condition for Simpler Cases}

First, let us consider the question on one-dimensional words (either
right-infinite or bi-infinite, it does not matter). Let $q$ be a
finite word. Is there a condition on $q$ forcing all $q$-coverable
one-dimensional words to be periodic?

Recall that, in the context of one-dimensional words, a \emph{border}
is a proper factor of $q$ which is both a prefix and a suffix of $q$.
(A word $u$ is a proper factor of $v$ if it is a subword of $v$ and
$u \neq v$). Moreover, the \emph{primitive root} $r$ of $q$ is the
shortest word such that, for some $k \in \bigN$, we have $q = r^k$. We
might have $q=r$ and $k=1$ for some words.

Now let us get some intuition about what is next. Let $q$ be a finite
(one-dimensional) word and $u$ be smallest border of $q$, so that we
have $q = uvu$ for some $v \in \Sigma^*$. Then, the word
$uvuv(uvu)^\omega$ is $q$-coverable (it is covered with occurrences of
$uvu$) seems not to be periodic ($uvuvuv$ only appears visually once).
However, this is not quite true: if there exists some
$r \in \Sigma^*, k,\ell \in \bigN$ such that $u = r^k$ and
$v = r^\ell$, then $uvuv(uvu)^\omega = r^\omega$, which is periodic.
This is why the condition is as follows.

\begin{proposition}
  \label{prop:N-case}
  Let $q$ be a finite one-dimensional word. There exists an aperiodic
  $q$-coverable $\bigN$-word if and only if the primitive root of $q$
  has a non-empty border.
\end{proposition}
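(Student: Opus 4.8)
The plan is to translate coverability into a combinatorial condition on the gaps between consecutive occurrences of $q$, and then to read off periodicity from the arithmetic of those gaps modulo the length of the primitive root. Write $r$ for the primitive root of $q$, so that $q = r^{k}$ with $k \geq 1$, and set $n = |r|$ and $N = |q| = kn$. If $w$ is any $q$-coverable $\bigN$-word, then position $0$ must be covered, so there is an occurrence of $q$ at $0$; listing the occurrences of a minimal covering chain as $0 = p_{0} < p_{1} < \cdots$, each gap $g_{i} = p_{i+1} - p_{i}$ satisfies $1 \leq g_{i} \leq N$ (otherwise some letter is left uncovered) and, whenever $g_{i} < N$, the two overlapping occurrences force $q$ to have a border of length $N - g_{i}$. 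Thus the admissible gaps are exactly the numbers $N - b$ where $b$ ranges over the border lengths of $q$ (with $b = 0$ giving the abutting gap $N$). The whole argument then reduces to understanding which border lengths $q$ can have.

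For the ``only if'' direction I would prove the key lemma: \emph{if $r$ has no non-empty border, then every border length of $q = r^{k}$ is a multiple of $n$}. Granting this, every admissible gap $N - b$ is a multiple of $n$, so every $p_{i}$ is a multiple of $n$; since $q = r^{k}$ is placed on the $n$-grid at each occurrence, $w[x] = r[x \bmod n]$ for all $x$, i.e. $w = r^{\omega}$, which is (ultimately) periodic. To prove the lemma, let $b$ be a border length with $0 < b < N$, so $q$ has period $p = N - b$ together with its native period $n$. If $p < n$ then Fine and Wilf (applicable once $k \geq 2$, and vacuous for $k = 1$ since then $q = r$ has no non-empty border) gives $q$ a period $\gcd(p,n) < n$ dividing $n$, forcing $r$ to be a proper power and contradicting primitivity; hence $p \geq n$ and $b \leq (k-1)n$. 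Writing $b = jn + t$ with $0 \leq t < n$, the equality of the length-$b$ prefix $r^{j}\,\mathrm{pref}_{t}(r)$ and the length-$b$ suffix $\mathrm{suf}_{t}(r)\,r^{j}$ shows, when $t > 0$, that $r$ acquires a period $t < n$, i.e. a non-empty border, again a contradiction. So $t = 0$ and $b = jn$, proving the lemma.

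For the ``if'' direction, suppose $r$ has a non-empty border $u'$ of length $t$ with $0 < t < n$. Then $u'$ is also a border of $q$ (its length-$t$ prefix and suffix coincide with those of $r$), so the gap $g^{*} = N - t$ is admissible and, crucially, $g^{*} \equiv -t \not\equiv 0 \pmod{n}$, whereas the gap $g_{0} = n$ (overlap by $r^{k-1}$) preserves the $n$-phase. I would then build $w$ from a chain whose gaps are $g_{0}$ and $g^{*}$, placing a single $g^{*}$-gap and then a run of $a_{i}$ consecutive $g_{0}$-gaps, with $a_{i} = i$ strictly increasing. Locally $w$ agrees with a rotation of $r^{\omega}$, and a $g^{*}$-gap produces a genuine \emph{phase defect}: because $r$ is primitive its $n$ rotations are pairwise distinct, so within a window of length $2n$ around such a seam the word matches no single rotation of $r^{\omega}$, making the defect detectable from $w$ alone. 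Consequently, if $w$ were ultimately periodic with period $P$, the defect positions would recur with period $P$ from some point on, forcing the spacings between consecutive defects to be eventually periodic; but those spacings are the strictly increasing $a_{i}n + g^{*}$, a contradiction. Hence $w$ is not ultimately periodic, i.e. aperiodic, and it is $q$-coverable by construction.

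The main obstacle is the key lemma of the second paragraph: everything in the ``only if'' direction hinges on showing that a borderless primitive root forbids every ``unaligned'' border of $q$, and this is exactly where primitivity, Fine and Wilf, and the prefix/suffix bookkeeping must be combined carefully. The only other delicate point is making ``phase defect detectability'' precise in the ``if'' direction, so that non-periodic spacing of defects really does preclude ultimate periodicity; this is where I rely on the classical fact that a primitive word has $n$ distinct rotations.
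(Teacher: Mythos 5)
Your proof is correct, and it takes a genuinely different route from the paper's in both directions. For the ``only if'' direction, the paper has a one-line shortcut that makes your key lemma unnecessary: since $q=r^{k}$ is a concatenation of copies of $r$, every $q$-covered word is also $r$-covered, and if $r$ has no non-empty border then two occurrences of $r$ can never overlap, so the covering occurrences of $r$ tile the word, which is therefore exactly $r^{\omega}$. Your gap-arithmetic argument reaches the same conclusion but must first classify the borders of $r^{k}$; that lemma is correct as you prove it (for $j\geq 1$ the equality $r^{j}\,\mathrm{pref}_{t}(r)=\mathrm{suf}_{t}(r)\,r^{j}$ does force period $t$ on $r^{j}$, hence on its prefix $r$), it is just extra work. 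For the ``if'' direction the two proofs genuinely diverge. The paper writes $r=uvu$ with $u$ the least border, defines the morphism $h(a)=(uvu)^{k}$, $h(b)=u(vu)^{k}$, and transfers aperiodicity from an arbitrary aperiodic binary word through injectivity of $h$; this buys an embedding of all aperiodic words into the $q$-coverable ones, but it leans on the nontrivial fact that an injective morphism reflects (ultimate) periodicity, and as literally written it degenerates when $q$ is primitive, since $k=1$ gives $h(a)=h(b)=uvu$. Your construction instead produces a single explicit word whose seams (gaps $g^{*}=N-t$) are separated by strictly increasing runs of phase-preserving gaps, and aperiodicity is proved directly from primitivity (the $n$ rotations of $r$ are pairwise distinct, so a length-$2n$ window straddling a seam matches no single rotation of $r^{\omega}$); this is self-contained, uniform in $k$, and avoids the injectivity subtlety. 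The only points you should still write out are routine: overlapping occurrences in your chain agree because each overlap is exactly a border of $q$, and the final contradiction is cleanest phrased not via ``eventually periodic spacings'' but as follows: defect positions lie within distance $2n$ of seams, the set of defect positions past the periodicity onset would be invariant under translation by $P$, yet consecutive seams are eventually more than $P+4n$ apart, leaving gaps in the defect set longer than $P$.
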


\begin{proof}
  First, suppose that $q$ is a cover of an aperiodic infinite word
  $\bw$. Call $r$ the primitive root of $q$; observe that $r$ is also
  a cover of $\bw$. Suppose by contradiction that $r$ does not have
  any non-empty borders; then, two occurrences of $r$ never overlap.
  Hence $\bw$, which is $r$-covered, only consists in concatenations
  of $r$. Therefore, $\bw$ is $r$-periodic: a contradiction.
  
  Conversely, write $q = r^k$ with $r$ primitive and $k \geq 1$.
  Suppose that $r$ has a non-empty border and let $u$ be the smallest
  one, i.e. $r=uvu$ for some non-empty word $v$. Let $h$ be the
  morphism defined by $h(a) = (uvu)^k$ and $h(b)= u(vu)^{k}$. Both
  $h(a)$ and $uvu \cdot h(b)$ are $q$-coverable, so the image of any
  word begning with $a$ by $h$ is $q$-coverable. Moreover, since
  $r=uvu$ is a primitive word, $u(vu)^k \neq (uvu)^\ell$ for all
  $k, \ell \in \bigN$. Therefore, $h$ is injective, so the image of
  any aperiodic word by $h$ is also aperiodic.
\end{proof}

Now, let us shift to $\bigZ^2$-words. This shift is mainly motivated
by the study of tilings, which is why we chose $\bigZ^2$-words instead
of $\bigN^2$-words. We need some definitions before getting to proof
of our main theorem.

\subsection{Preliminaries for the $\bigZ^2$-Case}

Let $q$ and $r$ be blocks. In this context, $r$ is a \emph{root} of
$q$ if $q = r^{n \times m}$, for some positive integers $n$ and $m$.
If $q$ has no roots except itself, it is said to be \emph{primitive}.
These notions initially came from combinatorics on one-dimensional
words. The following lemma is a classical result about roots in one
dimension: it shows that any one-dimensional finite word has a
smallest root, called its \emph{primitive root}.

\begin{lemma} \label{lemma:lothaire}
  \emph{(See, e.g.,~\cite{Lothaire1997}, Prop.~1.3.1 and~1.3.2.)}
  \\ %
  Given any finite one-dimensional words $u$ and $v$, the following
  statements are equivalent:
  \begin{enumerate}
  \item there exist integers $n,m \geq 0$ with $(n,m) \neq (0,0)$,
    such that $u^n = v^m$;
  \item there exist a word $t$ and positive integers $k$ and $\ell$
    such that $u = t^k$ and $v = t^\ell$;
  \item $uv=vu$.
  \end{enumerate}
\end{lemma}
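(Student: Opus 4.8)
The plan is to prove the three statements equivalent by a cycle of implications, namely (second)~$\Rightarrow$~(first)~$\Rightarrow$~(third)~$\Rightarrow$~(second), where ``first'' is the equal-powers condition $u^n=v^m$, ``second'' is the common-root condition $u=t^k,\,v=t^\ell$, and ``third'' is the commutation condition $uv=vu$. Before starting I would dispose of the degenerate cases: if $u$ or $v$ is empty, then all the conditions can be checked by hand (with the understanding that the substance of the lemma lies in the case of nonempty $u,v$), so from here on I assume $u$ and $v$ are nonempty, and likewise reduce the first statement to the case $n,m\geq 1$ (if one exponent is $0$, the corresponding word is empty and we are back to the degenerate case). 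The implication (second)~$\Rightarrow$~(first) is then immediate: from $u=t^k$ and $v=t^\ell$ one gets $u^\ell=t^{k\ell}=v^k$, with both exponents positive.

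For (first)~$\Rightarrow$~(third), I would pass to infinite words. Set $z=u^n=v^m$. Since $z=u^n$ we have $z^\omega=u^\omega$, and since $z=v^m$ we have $z^\omega=v^\omega$, so $u^\omega=v^\omega=:x$ is a single infinite word. Now $u$ and $v$ are each prefixes of $x$ (being the appropriate initial segments of $u^\omega$ and $v^\omega$). Reading $x=u\cdot x$ shows that $uv$ is a prefix of $x$, and reading $x=v\cdot x$ shows that $vu$ is a prefix of $x$; since both have length $|u|+|v|$, they coincide, i.e. $uv=vu$. This argument is elementary and, pleasantly, avoids any appeal to primitive-root uniqueness.

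The core of the lemma is (third)~$\Rightarrow$~(second), which I would prove by induction on $|u|+|v|$. Assume $uv=vu$ and, without loss of generality, $|u|\geq|v|$. Comparing the two equal words $uv$ and $vu$ on their first $|v|$ letters shows that $v$ is a prefix of $u$, so I can write $u=vw$. Substituting into $uv=vu$ gives $vwv=vvw$, and cancelling the leading $v$ yields $wv=vw$. If $w$ is empty then $u=v$ and we take $t=u$ with $k=\ell=1$; otherwise $w$ and $v$ commute with $|w|+|v|=|u|<|u|+|v|$, so the induction hypothesis provides a common root $t$ with $w=t^i$, $v=t^j$ (positive exponents), whence $u=vw=t^{i+j}$ and $v=t^j$ as required.

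I expect the third implication to be the main obstacle, though a manageable one: the delicate points are justifying that $v$ is a prefix of $u$ from the equality of $uv$ and $vu$, performing the left-cancellation cleanly, and confirming that the induction parameter strictly decreases (which it does because $v$ is nonempty). The other two implications are short, and the infinite-word trick keeps the first-to-third step free of the primitive-root machinery that a length-counting or Fine--Wilf-style argument would otherwise drag in. Once the cycle is closed, the three statements are equivalent and the lemma follows.
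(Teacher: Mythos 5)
Your proof is correct, but be aware that the paper does not actually prove this lemma: it is quoted as a classical fact with a pointer to~\cite{Lothaire1997} (Prop.~1.3.1 and~1.3.2), so any complete argument is ``extra'' relative to the paper. Your organization --- the cycle (2)$\Rightarrow$(1)$\Rightarrow$(3)$\Rightarrow$(2), with the infinite-word trick $u^\omega=v^\omega$ for (1)$\Rightarrow$(3) and the induction on $|u|+|v|$ for (3)$\Rightarrow$(2) --- is sound and is essentially the textbook route; the infinite-word step is a pleasant way to avoid Fine--Wilf or primitive-root machinery, exactly as you say. One genuine caveat: your handling of the degenerate cases is too breezy. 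If exactly one of $u,v$ is empty (say $u=\varepsilon$, $v\neq\varepsilon$), then statement~(1) holds (take $n=1$, $m=0$) and statement~(3) holds ($uv=v=vu$), but statement~(2) \emph{fails} as written, because it demands positive exponents: $u=t^k$ with $k\geq 1$ forces $t=\varepsilon$, hence $v=t^\ell=\varepsilon$, a contradiction. So ``checking by hand'' does not verify the equivalence in that case --- it refutes it. This is a defect of the statement as transcribed in the paper (it vanishes if~(2) allows $k,\ell\geq 0$, or if~(1) demands $n,m\geq 1$), and it is harmless for the paper's applications, where the lemma is only ever invoked for nonempty columns, lines and roots; but your write-up should state explicitly that the equivalence is being proved for nonempty $u,v$ and that the one-empty-word case is an exception, rather than suggest all cases go through.
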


\noindent
Let us show that primitive roots are also well-defined on rectangular
words.

\begin{lemma}
  \label{lemma:primitive-root}
  Let $q$ be a rectangular word. Suppose that $q$ has two distinct
  roots $r_1$ and $r_2$. Then there exists a rectangular word $r$ such
  that $r$ is a root of both $r_1$ and $r_2$.
\end{lemma}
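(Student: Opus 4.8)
The plan is to reduce the two-dimensional statement to the one-dimensional Lemma~\ref{lemma:lothaire} by treating the horizontal and vertical directions separately. Write $r_1 = s^{n_1 \times m_1}$ where $s$ is the primitive root of $r_1$ (and similarly set up $r_2$), but more directly I would work with the dimensions. Suppose $q = r_1^{n_1 \times m_1} = r_2^{n_2 \times m_2}$. Looking at widths, $\width(q) = n_1 \width(r_1) = n_2 \width(r_2)$, and at heights, $\height(q) = m_1 \height(r_1) = m_2 \height(r_2)$. The key observation is that each row of $q$ is a one-dimensional word covered periodically by the rows of $r_1$ and also by the rows of $r_2$; I would like to apply the one-dimensional theory in each dimension to find a common horizontal period $a = \gcd$-type quantity and a common vertical period $b$, then assemble the $a \times b$ block $r$.

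More concretely, the first step is to fix a single row of $q$, say the bottom row, and view it as a one-dimensional word $R$. Since $q = r_1^{n_1 \times m_1}$, this row $R$ is exactly the bottom row of $r_1$ repeated $n_1$ times horizontally, so $R = \rho_1^{n_1}$ where $\rho_1$ is the bottom row of $r_1$; similarly $R = \rho_2^{n_2}$. By Lemma~\ref{lemma:lothaire} (applied to $\rho_1$ and $\rho_2$, using $\rho_1^{n_1} = \rho_2^{n_2}$), there is a one-dimensional word $t$ and exponents making $\rho_1 = t^{k_1}$ and $\rho_2 = t^{k_2}$; so the widths of $r_1$ and $r_2$ share the common divisor $|t|$. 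The hope is that the \emph{same} $t$ works for every row, giving a uniform horizontal root-width, and symmetrically a uniform vertical root-height; the candidate $r$ is then the bottom-left sub-block of $q$ of that width and height.

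The main obstacle, which I would flag as the crux, is coherence across rows: a priori the word $t$ extracted from the bottom row need not equal the one extracted from another row, and one must check that the horizontal and vertical splittings are compatible so that the resulting $\width(r) \times \height(r)$ block $r$ genuinely satisfies $r_1 = r^{(\width(r_1)/\width(r)) \times (\height(r_1)/\height(r))}$ and likewise for $r_2$. To handle this I would argue more structurally rather than row-by-row: set $a = \gcd(\width(r_1), \width(r_2))$ and $b = \gcd(\height(r_1), \height(r_2))$, and show that $q$ has horizontal period $a$ and vertical period $b$ in a way that respects the block structure. Since $q = r_1^{n_1 \times m_1}$, $q$ is horizontally periodic with period $\width(r_1)$ and vertically periodic with period $\height(r_1)$, and likewise with the periods of $r_2$; the two-dimensional analogue of the Fine–Wilf / commutation argument (Lemma~\ref{lemma:lothaire}) should force periods $a$ and $b$. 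Then I would define $r = q[(0,0) \dots (a-1, b-1)]$, the bottom-left $a \times b$ sub-block, and verify that $r$ tiles all of $q$, so that $r$ is a root of $q$, and that because $a \mid \width(r_1)$, $b \mid \height(r_1)$ with $r$ agreeing with the corresponding sub-block of $r_1$ (by periodicity), $r$ is a root of $r_1$ — and symmetrically of $r_2$.

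The remaining verification is routine: one checks the divisibility relations $a \mid \width(r_i)$ and $b \mid \height(r_i)$ hold by construction of the gcd, that periodicity of $q$ with periods $a,b$ makes every $a \times b$ translate of $r$ equal to $r$ itself, and hence that $r^{n \times m} = r_i$ for the appropriate $n,m$. I expect the careful point to be justifying that a block periodic with horizontal period $p$ \emph{and} with horizontal period $p'$ (both dividing its width, or with the width large enough relative to $p + p'$) is periodic with period $\gcd(p,p')$; this is where Lemma~\ref{lemma:lothaire} does the real work, since each row satisfies the hypothesis $u^n = v^m$ with $u,v$ the corresponding row-periods, yielding a common root $t$, and one then observes that this common root is the same for all rows precisely because $q$ was built by stacking identical copies. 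Assembling the horizontal common root and the vertical common root into the single block $r$ completes the argument.
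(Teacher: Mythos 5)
Your route is genuinely different from the paper's, and its core is sound. The paper never takes gcds: it applies Lemma~\ref{lemma:lothaire} only twice, globally. First it raises $r_1$ and $r_2$ to vertical powers $r_1^n$, $r_2^m$ of full height $\height(q)$ and views $q$, $r_1^n$, $r_2^m$ as one-dimensional words over the column alphabet $\bigC_{\Sigma,\height(q)}$, obtaining a common root $c$; then it views the width-$\width(c)$ horizontal prefixes $r_3$, $r_4$ of $r_1$, $r_2$ (which satisfy $r_3^n = r_4^m$) as one-dimensional words over the line alphabet $\bigL_{\Sigma,\width(c)}$ and applies the lemma once more, obtaining the desired common root $r$. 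Your coordinate-wise argument --- per-row and per-column applications of the lemma, followed by assembling the bottom-left $a \times b$ block with $a = \gcd(\width(r_1),\width(r_2))$ and $b = \gcd(\height(r_1),\height(r_2))$ --- costs more invocations of the lemma plus an assembly verification, but it is a viable alternative and makes explicit that the statement is really a two-dimensional gcd/Fine--Wilf fact.

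However, one claim you lean on is false: that ``this common root is the same for all rows precisely because $q$ was built by stacking identical copies.'' Stacking only makes row $y$ of $q$ equal to row $y + \height(r_1)$ of $q$; distinct rows \emph{within} $r_1$ are in general distinct words with incompatible roots. Concretely, take $r_1$ of size $2 \times 2$ with bottom row $ab$ and top row $aa$, $q = r_1^{2 \times 1}$, and $r_2 = q$: the bottom row of $q$ is $abab$, whose roots are $ab$ and $abab$, while the top row is $aaaa$, whose roots are $a$, $aa$, $aaaa$ --- no single $t$ serves both rows. Fortunately, uniformity is not needed, and your structural version survives with a purely local repair: for each row $y$, Lemma~\ref{lemma:lothaire} gives a common root $t_y$ of the two row-periods, so $|t_y|$ divides both $\width(r_1)$ and $\width(r_2)$, hence divides $a$; since row $y$ is a power of $t_y$ and $a$ is a multiple of $|t_y|$ not exceeding the row length, row $y$ has period $a$. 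Together with the symmetric statement for columns (period $b$), this gives $q[x,y] = q[x \bmod a,\, y \bmod b]$, so the bottom-left $a \times b$ block tiles $q$; and since $a \mid \width(r_i)$ and $b \mid \height(r_i)$ while each $r_i$ occurs at the bottom-left corner of $q$, it tiles each $r_i$ as well. So drop the ``same $t$ for every row'' hope (also flagged in your second paragraph) and argue via divisibility of the root lengths instead; with that change your proof is complete.
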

\begin{proof}
  Let $r_1^k$ (resp. $r_2^k$) denote $k$ occurrences of $r_1$ (resp.
  $r_2$) concatenated vertically. Since $r_1$ and $r_2$ are roots of
  $q$, there exist integers $n$ and $m$ such that both $r_1^n$ and
  $r_2^m$ are roots of $q$, with $\height(q) = \height(r_1^n) =
  \height(r_2^m)$. Consider $q$, $r_1^n$ and $r_2^m$ as words over
  $\bigC_{\Sigma,\height(q)}$; by Lemma~\ref{lemma:lothaire}, there
  exists a word $c$ over $\bigC_{\Sigma,\height(q)}$ such that $c$ is
  a root of both $r_1^n$ and $r_2^m$.
  
  Let $r_3$ (resp. $r_4$) be the horizontal prefix of $r_1$ (resp.
  $r_2$) of length $\width(c)$. Both $r_3$ and $r_4$ are prefixes of
  $q$, hence $r_3^n = r_4^m$ (the power is still taken for vertical
  concatenation). Now view $r_3$ and $r_4$ as words over
  $\bigL_{\Sigma,\width(c)}$. By Lemma~\ref{lemma:lothaire}, there
  exists a word $r$ over $\bigL_{\Sigma,\width(c)}$ which is a common
  root of $r_3$ and $r_4$.
  
  As $r_1$ (resp. $r_2$) is obtained by horizontal concatenations of
  occurrences of $r_3$ (resp. $r_4$), we deduce that $r$ is a root of
  $r_1$ and of $r_2$.
\end{proof}

The \emph{primitive root} of a block $q$ is the root minimal for the
``is a root of'' relation. By Lemma~\ref{lemma:primitive-root}, it is
the only root of $q$ which is primitive. Note that $q$ might be its
own primitive root.

Finally, we say that a proper block $b$ of $q$ is a \emph{border} of
$q$ if $b$ occurs in two opposite corners of $q$. Note that it is
possible to have either $\width(b)=\width(q)$ (which we call a
\emph{horizontal border}) or $\height(b)=\height(q)$ (a \emph{vertical
  border}), but not both. If neither case applies, we call $b$ a
\emph{diagonal border}, following the terminology
from~\cite{CrochemoreIliopoulosKorda1998Algorithmica}.

\subsection{Blocks Covering Aperiodic Infinite Pictures}

Now we can state the condition under which a rectangular word can be
the covering pattern of a non-periodic $\bigZ^2$-word.

\begin{theorem}
  \label{theorem:construct}
  Let $q$ be a finite, rectangular word. Then there exists a
  $q$-co\-ve\-ra\-ble, non-periodic $\bigZ^2$-word if and only if the
  primitive root of $q$ has a non-empty diagonal border.
\end{theorem}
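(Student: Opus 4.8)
The plan is to follow the template of the one-dimensional Proposition~\ref{prop:N-case}, reducing everything to the primitive root. First I would replace $q$ by its primitive root $r$, which is well defined by Lemma~\ref{lemma:primitive-root}. Two facts make this legitimate. On one hand, every $q$-coverable picture is $r$-coverable, since $q=r^{n\times m}$ splits each occurrence of $q$ into occurrences of $r$; for the converse I will instead build a picture that is coverable by $q$ itself. On the other hand, $q$ has a non-empty diagonal border if and only if $r$ does: a diagonal border of $r$ lands in the two extreme corners of $q=r^{n\times m}$, and conversely a diagonal border of $q$ can be pushed down to $r$ using the commutation argument of Lemma~\ref{lemma:lothaire}. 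After this reduction it suffices to characterise when a primitive $r$ covers a non-periodic picture.

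For necessity I would argue the contrapositive: if $r$ has no non-empty diagonal border, then every $r$-coverable picture is periodic. The starting point is a corner-chasing observation: $r$ cannot possess both a horizontal and a vertical border, since the block sitting in the shared top-right (equivalently bottom-left) corner would then itself be a diagonal border. This leaves three cases. If $r$ has no border at all, then two occurrences can never overlap, so $\bw$ is an exact tiling by translates of $r$ and is doubly periodic, with period vectors $(\width(r),0)$ and $(0,\height(r))$. If $r$ has only horizontal borders (the only-vertical case being symmetric), occurrences may meet only by sliding vertically; here I would show that the occurrences covering any fixed row tile it by width-$\width(r)$ blocks aligned to a common horizontal lattice, and then propagate this alignment from one row to the next in order to recover the needed periodicity vectors. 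Making this propagation rigorous, rather than merely row-local, is the delicate point of this direction.

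For sufficiency the construction is the real work, and I expect it to be the main obstacle. Let $b$ be the smallest non-empty diagonal border, filling the bottom-left and top-right $\width(b)\times\height(b)$ corners of $r$. Diagonal overlaps then become legal: an occurrence placed at $(0,0)$ and one placed at $\bigl(\width(r)-\width(b),\,\height(r)-\height(b)\bigr)$ agree exactly on $b$, just as the border $u$ lets two copies of $r=uvu$ overlap in one dimension. Mimicking the one-dimensional morphism $h(a)=(uvu)^{k}$, $h(b)=u(vu)^{k}$, I would encode an aperiodic one-dimensional word as a staircase of occurrences running along the direction $\bigl(\width(r)-\width(b),\,\height(r)-\height(b)\bigr)$, choosing at each step a plain shift or a border-overlapping shift according to the letter read, and then fill the remaining cells with further translates of $r$ so that the whole picture is coverable by $q$. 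The two things that must be checked, and where the difficulty concentrates, are that this filling is globally consistent (no two imposed occurrences disagree on their overlap) and that the encoding is injective enough to forbid any second, non-colinear periodicity vector; aperiodicity of the result should then follow exactly as in the one-dimensional case.
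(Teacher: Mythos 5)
Your proposal correctly isolates where the difficulties are, but it defers or gets wrong exactly the steps that constitute the paper's proof, so as it stands there are genuine gaps in both directions. The main one is in the ``if'' direction. Your staircase-plus-filling cannot be completed as described: the occurrences along the staircase do not sit on a common lattice (consecutive positions differ by mixed steps, some shifted by $(-\width(b),-\height(b))$), so any rigid lattice of translates filling the region on one side of the staircase would have to meet the staircase occurrences at offsets whose overlap regions are not borders of $r$, which is impossible. This already fails when \emph{every} step is an overlap step, because $(\width(r)-\width(b),\height(r)-\height(b))$ never lies in the lattice generated by $(\width(r),0)$ and $(0,\height(r))$; repairing it forces the overlap defects to be spread over a two-dimensional array of positions rather than along a single line. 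That is precisely what the paper does: it builds four finite tiles $\alpha,\beta,\gamma,\delta$ from whole occurrences of $q$ (not of $r$ --- note that a staircase of $r$'s does not obviously yield $q$-coverability) overlapping on $b$, glues them via anchors into the image of any ``suitable'' picture under a substitution-like map $\mu$, and then proves exactly the two facts you leave open: Lemma~\ref{lemma:cannot-overlap} (distinct tiles with coinciding anchors cannot overlap; proved by applying Lemma~\ref{lemma:lothaire} twice, over lines and then over columns, contradicting minimality of the primitive root), and a final lemma showing that any periodicity vector of $\mu(\bw)$ forces, by a pigeonhole argument on offsets to anchor points, periodicity of the tiling and hence of $\bw$. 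These verifications are the substance of the theorem; they are not routine transfers from dimension one, where injectivity of the morphism $h$ sufficed.

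The ``only if'' direction also contains a false step. An exact tiling of the plane by translates of $r$ need not be doubly periodic: partition $\bigZ^2$ into horizontal strips of height $\height(r)$ and tile each strip by copies of $r$ with its own phase (take, say, $r$ a $2\times 2$ block with four distinct letters). If the phase sequence is aperiodic, the picture is $r$-coverable, no two occurrences of the covering overlap, and yet it has no periodicity vector with nonzero vertical component --- in particular it does not have the periods $(\width(r),0)$ and $(0,\height(r))$ you claim. Your only-horizontal-borders case hits the same phenomenon: nothing synchronizes the phases of consecutive strips, so the row-to-row propagation you flag as delicate cannot be carried out. The paper argues this direction differently and more briefly: from a non-periodic $q$-coverable word it extracts two overlapping occurrences and observes that some overlap cannot be a power of the primitive root (otherwise the word would be periodic with respect to that root), whence the primitive root has a non-empty border. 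Finally, your preliminary reduction is partly superfluous and partly wrong: the equivalence ``$q$ has a non-empty diagonal border iff $r$ does'' fails in one direction, since $r^{2\times 2}$ always has $r$ itself as a diagonal border even when $r$ has none; fortunately your argument never needs it, because the theorem's condition is already stated on $r$ and $q$-coverability implies $r$-coverability.
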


\noindent
This subsection is entirely dedicated to the proof of
Theorem~\ref{theorem:construct}.

\paragraph{Proof of the ``only if'' part}

First, suppose that $\bw$ is a $\bigZ^2$-word which is both
$q$-coverable and non-periodic. There exists at least two overlapping
occurrences of $q$ in $\bw$ (otherwise, $\bw$ would be $q$-periodic).
Moreover, the overlapping part is not a power of the primitive root of
$q$: if all overlappings were powers of some root $r$ of $q$, then
$\bw$ would be $r$-periodic. Therefore, $q$ must have at least one
border which is not a power of its primitive root. Hence its primitive
root has a non-empty border.

\paragraph{Proof of the ``if'' part}

Suppose that $q$'s primitive root has a non-empty diagonal border. Let
us build an infinite $\bigZ^2$-word which is $q$-coverable, but not
periodic.

\begin{figure}[ht]
  \centering
  \includegraphics[width=0.9\textwidth]{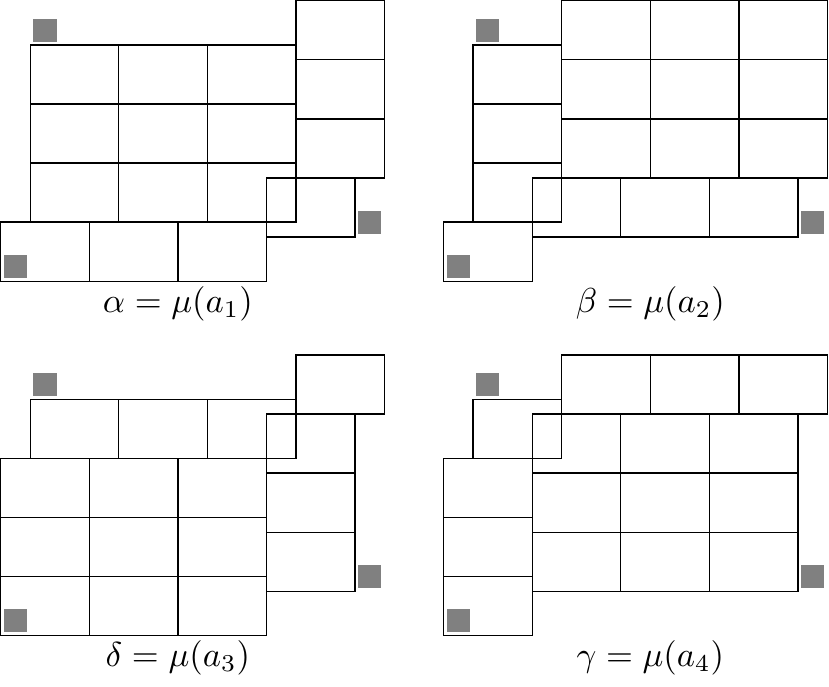}
  \caption{Four tiles to build a $q$-coverable word. Each rectangle is
    an occurrence of $q$.}
  \label{fig:gadgets}
\end{figure}

Let $r$ be the primitive root of $q$ and $b$ be a non-empty diagonal
border of $r$. Consider the four tiles $\alpha$, $\beta$, $\delta$ and
$\gamma$ displayed on Figure~\ref{fig:gadgets}. Each rectangle is an
occurrence of $q$. The overlapping zones are all occurrences of $b$
and the shifts on tile borders are sized accordingly. If the border
$b$ is on the opposite corner, all tiles are built symmetrically.

Let $A = \{a_1,a_2,a_3,a_4\}$ and $\mu$ be the function from
$A^{\bigZ^2}$ to $\Sigma^{\bigZ^2}$, defined by $\mu(a_1)=\alpha$,
$\mu(a_2)=\beta$, $\mu(a_3)=\delta$ and $\mu(a_4)=\gamma$. If its
input is regular enough, $\mu$ behaves more or less like a morphism,
with the following concatenation rules.

On Figure~\ref{fig:gadgets}, each tile has three \emph{anchors}, i.e.
letters marked by a small square. Concatenate two tiles horizontally
by merging the right-anchor of the first one with the left-anchor of
the second one. Concatenate two tiles vertically by merging the
bottom-anchor of the first one with the top-anchor of the second one.

More formally, we have:
\begin{align*}
  \mu(a_i \cdot u) &=
  \mu(a_i) \cup S_{(4\width(q); \height(b))} \circ \mu(u) \\
  \mu \begin{pmatrix}
    u \\ v
  \end{pmatrix} & = \mu(v) \cup S_{(\width(b);4\height(q))}
  \circ \mu(u)
\end{align*}
where $S_{(x,y)}$ denotes the translation (shift) by the vector
$(x,y)$ and the operator $\cup$ denotes the superposition of two
finite words. Recall that we view two-dimensional words as (possibly
partial) functions from $\bigZ^2$ to the alphabet. These functions
have domains which may be strictly included in $\bigZ^2$. If $w_1$ and
$w_2$ are two words with disjoints domains, then
$(w_1 \cup w_2)[x,y] = w_1[x,y]$ where $w_1$ is defined and $w_2[x,y]$
where $w_2$ is defined. In what follows, we will only consider
superpositions where no position $(x,y)$ is defined in both $w_1[x,y]$
and $w_2[x,y]$.

If $u$ is a rectangular word, the leftmost bottom anchor of
$\mu(u[i,j])$ has coordinates:
\begin{equation*}
 (i \times 4 \times \width(q) + j \times \width(b);\;
   j \times 4 \times \height(q) + i \times \height(b))
\end{equation*}
in $\mu(u)$. Figure~\ref{fig:gadgets-exe} gives an example of how
$\mu$ works.

\begin{figure}[ht]
  \centering
  \includegraphics[width=0.9\textwidth]{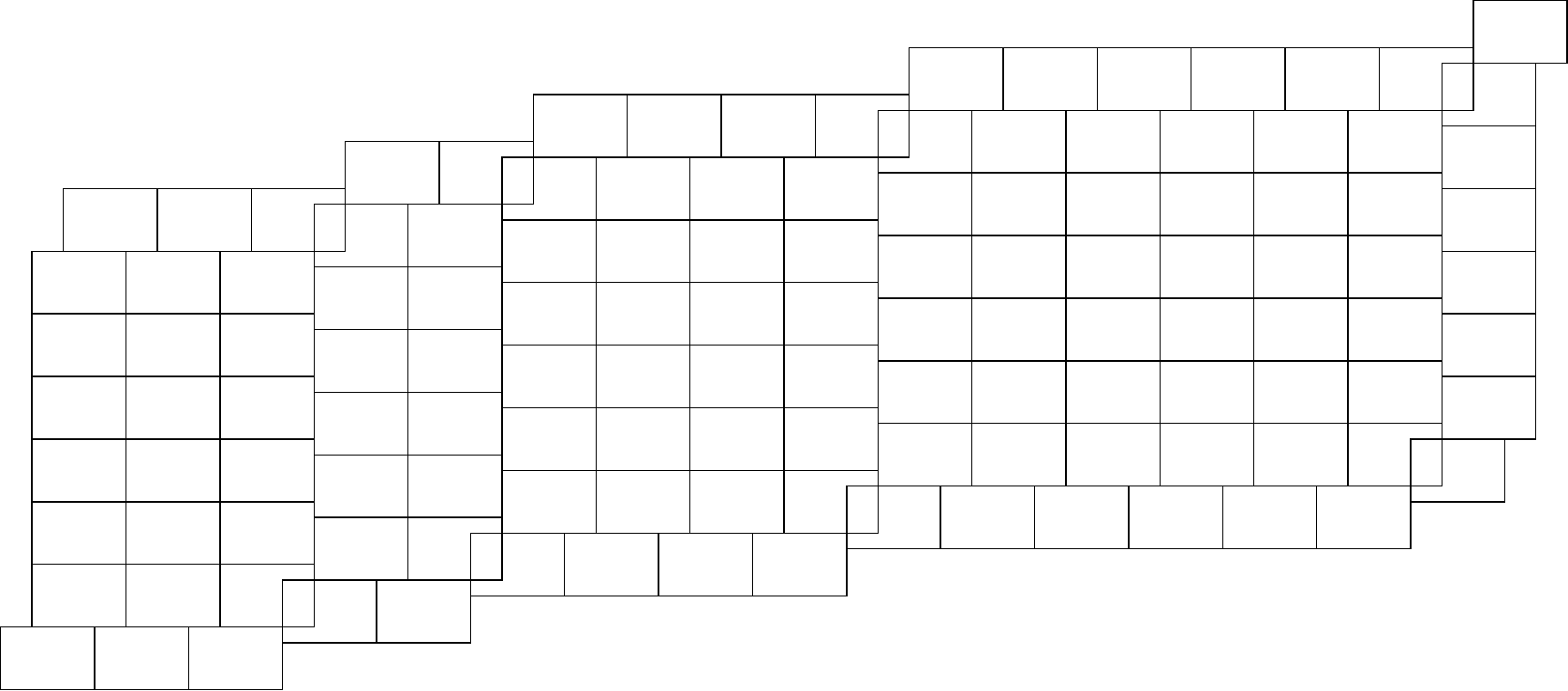}
  \caption[Image of a simple word by $\mu$.]{$\mu\left(\begin{smallmatrix}
      a_3&a_4&a_4&a_3\\a_1&a_2&a_2&a_1\\
    \end{smallmatrix}\right)$, each rectangle is an occurrence of $q$}
  \label{fig:gadgets-exe}
\end{figure}

An infinite picture over $A$ is \emph{suitable} when it satisfies the
following conditions:
\begin{enumerate}
\item each line is either on alphabet $\{a_1, a_2\}$ or on alphabet
  $\{a_3, a_4\}$;
\item each column is either on alphabet $\{a_1, a_3\}$ or on alphabet
  $\{a_2, a_4\}$.
\end{enumerate}

Let us check that if $\bw$ is suitable, then each letter of $\mu(\bw)$
belongs to the image of exactly one letter of $\bw$. This essentially
means that all tiles ``fit together'' with no overlaps.

By construction, tiles $\alpha$ and $\delta$ fit together vertically,
and tiles $\beta$ and $\gamma$ fit as well. Hence
$\mu(\begin{smallmatrix} a_1 \\ a_3 \\ \end{smallmatrix})$ and
$\mu(\begin{smallmatrix} a_2 \\ a_4 \\ \end{smallmatrix})$ are
well-defined. Likewise, tiles $\alpha$ and $\beta$ fit together
horizontally, and tiles $\delta$ and $\gamma$ fit as well. Hence
$\mu(a_1 a_2)$ and $\mu(a_3 a_4)$ and are well-defined. Iterating this
argument, we deduce that the image of any suitable word is
well-defined.
  
Moreover, we let readers check that $\mu(\bw)$ has no ``holes''. More
precisely, if if $\bw$ is a suitable rectangular word, $\mu(w)$
satisfies the following weak convexity properties:
\begin{itemize}
\item for all $i,j,j_1,j_2 \in \bigN$ with $j_1 \leq j \leq j_2$, if
  $(i,j_1)$ and $(i,j_2)$ are in $\dom(\mu(\bw))$, then $(i,j)$ is in
  $\dom(\mu(\bw))$ as well;
\item for all $i,j,i_1,i_2 \in \bigN$ with $i_1 \leq i \leq i_2$, if
  $(i_1,j)$ and $(i_2,j)$ are in $\dom(\mu(\bw))$, then $(i,j)$ is in
  $\dom(\mu(\bw))$ as well.
\end{itemize}
  
As a consequence, the definition of $\mu$ can be extended to suitable
$\bigZ^2$-words. If $\bw$ is a suitable $\bigZ^2$-word, then
$\mu(\bw)$ is a well-defined $\bigZ^2$-word as well.

Now let us prove that if $\bw$, an infinite picture, is aperiodic,
then so is $\mu(\bw)$. First, we need a technical lemma about our
tiles.

\begin{lemma}
  \label{lemma:cannot-overlap}
  Let $x$ and $y$ be different tiles from
  $\{\alpha,\beta,\gamma,\delta\}$. Then an occurrence of $x$ and an
  occurrence of $y$ cannot overlap when their anchor points coincide.
\end{lemma}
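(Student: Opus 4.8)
The plan is to reduce the lemma to a finite verification by first establishing a rigidity principle for occurrences of $q$. Recall that $q = r^{n \times m}$, where $r$ is the primitive root and $b$ is the chosen non-empty diagonal border of $r$. First I would prove that, inside any consistent picture, two occurrences of $q$ (equivalently, of $r$) can overlap only along an occurrence of a border of $r$, and that the only diagonal border at our disposal is $b$: any other overlap would exhibit a nontrivial self-overlap of $r$ and hence, via Lemma~\ref{lemma:lothaire}, contradict the primitivity of $r$. This is precisely what makes each tile a \emph{rigid} assembly: inside $\alpha$, $\beta$, $\gamma$, $\delta$ the constituent occurrences of $q$ meet exactly in occurrences of $b$, so their relative positions are forced.

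With rigidity in hand, the lemma becomes a finite case analysis. Let $x \neq y$ be tiles, suppose an anchor of $x$ is identified with an anchor of $y$, and suppose the domains of the two placed tiles intersect. For each anchor I would locate the occurrence of $q$ inside the tile whose corner carries that anchor; since anchors sit at prescribed corners, read off from Figure~\ref{fig:gadgets}, identifying the anchors forces these two distinguished occurrences of $q$ to be superposed at a common corner. By rigidity, such a superposition is possible only when the two occurrences coincide or overlap exactly in an occurrence of $b$. I would then propagate the constraint to the neighbouring occurrences of $q$ in each tile: because the four tiles differ in how their occurrences are staggered (a bottom row flush versus shifted by $\height(b)$, a left column flush versus shifted by $\width(b)$), the induced positions of the neighbours disagree, producing a coordinate where $x$ and $y$ assign different letters. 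This contradicts the assumed consistent overlap.

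The bulk of the work, and the main obstacle, is the bookkeeping of the staggered shifts $(4\width(q), \height(b))$ and $(\width(b), 4\height(q))$ together with the internal offsets defining $\alpha$, $\beta$, $\gamma$, $\delta$, and checking that no accidental alignment lets two distinct tiles agree on their overlap. I would cut this down using the symmetries of the construction: the four tiles are related by the line-type flip $\{a_1,a_2\} \leftrightarrow \{a_3,a_4\}$ and the column-type flip $\{a_1,a_3\} \leftrightarrow \{a_2,a_4\}$, so it suffices to treat one representative from each orbit of ordered pairs of tiles with identified anchors. Throughout, the recurring tool is that any overlap of $q$-occurrences other than the designed occurrence of $b$ would force a forbidden self-overlap of the primitive root $r$; hence the finitely many surviving alignments can only be the trivial one, in which $x = y$.
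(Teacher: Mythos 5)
Your argument rests on a ``rigidity principle'' that is false, and its proposed justification misreads what primitivity means. You claim that any overlap of two occurrences of $q$ other than along the designated border $b$ ``would exhibit a nontrivial self-overlap of $r$ and hence, via Lemma~\ref{lemma:lothaire}, contradict the primitivity of $r$''. But primitivity of $r$ (not being a proper power $t^{k \times \ell}$) does not forbid self-overlaps of $r$: the one-dimensional word $aba$ is primitive yet has border $a$, and, crucially, the whole construction you are working inside \emph{assumes} that the primitive root $r$ has a non-empty diagonal border $b$ --- which is exactly a non-trivial self-overlap of $r$. Nor is it true that overlaps can only happen along $b$: the root $r$ may well have several distinct diagonal borders, and when $q = r^{n \times m}$ with $n \geq 2$, two occurrences of $q$ shifted by $\width(r)$ overlap consistently along $r^{(n-1) \times m}$, which is not an occurrence of $b$ at all. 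Since the rigidity principle fails, the finite case analysis and the propagation argument built on it have no foundation, and the concluding claim that ``the finitely many surviving alignments can only be the trivial one'' is unsupported; the worry you yourself raise about ``accidental alignments'' is precisely what remains unaddressed.

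The reason the lemma is true is more global, and the contradiction is with the \emph{minimality} of the primitive root, not with primitivity of $r$ directly, nor with two tiles assigning different letters somewhere. In the paper's proof, superposing two distinct tiles with coinciding anchors produces three occurrences $q_1, q_2, q_3$ of $q$ with $q_1$ covered by $q_2$ and $q_3$, all aligned; reading them as one-dimensional words over the alphabet of lines $\bigL_{\Sigma,\width(q)}$ gives $q = xx' = x'x$ with $\height(x') = \height(b)$, so by Lemma~\ref{lemma:lothaire} $q$ is a power of a word $s$ with $\height(s) \leq \height(b)$. Repeating the same argument in the perpendicular direction, over the alphabet of columns $\bigC_{\Sigma,\height(s)}$, yields a root $t$ of $q$ with $\width(t) \leq \width(b)$ and $\height(t) \leq \height(b)$, hence $|t| \leq |b| < |r|$: this contradicts the fact that the primitive root is the smallest root of $q$. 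Note that no single pairwise overlap ever yields a contradiction --- it is the ``one occurrence covered by two aligned occurrences'' pattern, exploited once horizontally and once vertically, that forces a too-small root. Any repair of your proof would have to locate that pattern in each of the six cases, at which point you would be reproducing the paper's argument rather than shortcutting it.
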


\noindent
This essentially means that situations from
Figure~\ref{fig:overlap-others} cannot occur.

\begin{figure}[ht]
  \centering
  \includegraphics[width=0.8\textwidth]{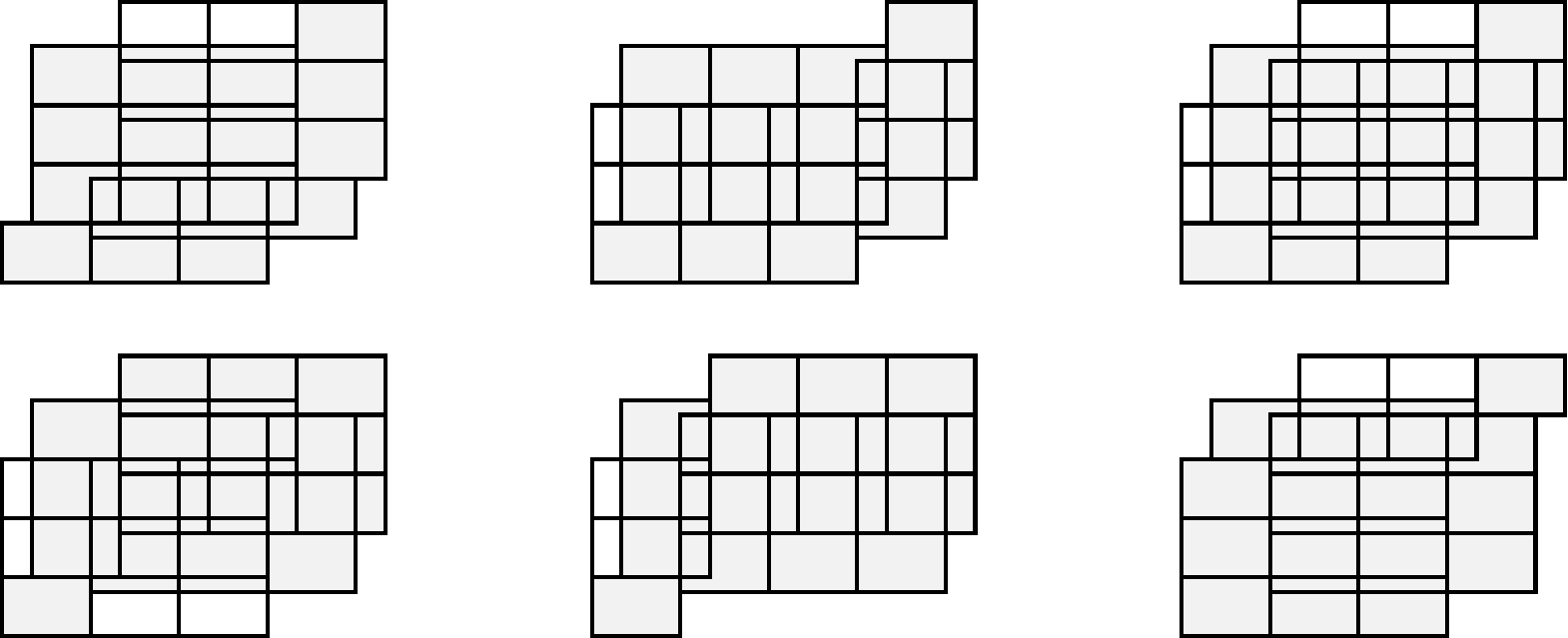}
  \caption{All other possible overlappings.}
  \label{fig:overlap-others}
\end{figure}

\begin{proof}
  There are six possibilities for the set $\{x,y\}$. All proofs are
  the same, up to some symmetry, so we only provide a proof when
  $x=\alpha$ and $y=\beta$ (illustrated by the top left-hand case of
  Figure~\ref{fig:overlap-others}). In what follows, $q$ refers to the
  rectangular word used for the construction of the tiles, $r$ to its
  primitive root and $b$ to a diagonal border of $r$.

  There are three occurrences of $q$, named $q_1$, $q_2$ and $q_3$,
  such that $q_1$ is covered by $q_2$ and $q_3$ and all three are
  horizontally aligned. In other words, $q_1$ is an occurrence of $q$
  in $q_2 \cdot q_3$ viewed as words over $\bigL_{\Sigma, \width(q)}$.
  (See for instance the top second column of $q$'s in the figure).
  View $q_1, q_2$ and $q_3$ as one-dimensional words over the alphabet
  $\bigL_{\Sigma,\width(q)}$. There exist words $x$ and $x'$ over
  $\bigL_{\Sigma,\width(q)}$ such that $q_1 = xx'$ and
  $q_2 = q_3 = x'x$ (where words are concatenated from bottom to top).

  By Lemma~\ref{lemma:lothaire}, $x$ and $x'$ (and $q$) are powers of
  a same word $s$ over $\bigL_{\Sigma,\width(q)}$. Notice that
  $\height(x') = \height(b)$ and $\height(x) = \height(q) -
  \height(b)$. It follows that $\height(s)$ divides $\height(x)$ and
  $\height(q) - \height(b)$.

  Observe that $s$ is a vertical prefix of both $q$ and $x$. Thence
  one can find three occurrences of $s$, named $s_1$, $s_2$ and $s_3$,
  such that $s_1$ is covered by $s_2$ and $s_3$ and all three are
  vertically aligned. In other terms, $s_q$ is an occurrence of $q$ in
  $s_2 \cdot s_3$ viewed as a word over $\bigC_{\Sigma,\height(q)}$.
  (See for instance the second line of $q$'s in the figure).

  Now view $s$ as a one-dimensional word on the alphabet
  $\bigC_{\Sigma,\height(s)}$. There exist words $y, y'$ such that
  $s_1 = yy'$, $s_2 = s_3 = y'y$ and $\width(y')=\width(b)$. By
  Lemma~\ref{lemma:lothaire}, we deduce that there exists a word $t$
  over $\bigC_{\Sigma,\height(s)}$ such that $y$ and $y'$ (and $s$)
  are powers of $t$.
  
  Let $k \leq 1$ be the integer such that $q = s^k$ (for vertical
  concatenation) and let $\ell \geq 1$ be the integer such that
  $s = t^\ell$ (for horizontal concatenation). We have that
  $q = t^{\ell \times k}$. Therefore $t$ is a root of $q$ such that
  $\width(t) \leq width( y') = width( b )$ and
  $\height(t) = \height(s) \leq \height(b)$. Thus
  $\width(s) \times \height(s) \leq \width(b) \times \height(b)$ which
  is a contradiction with the definition of $b$. Indeed, recall that
  $b$ is a border (hence a proper block) of the primitive root of $q$,
  which is the smallest (in number of letters) root of $q$.
\end{proof}

In the proof of next lemma, Lemma~\ref{lemma:cannot-overlap} helps to
establish a correspondence between the letters of the $\bigZ^2$-word
$\mu(\bw)$ and the ``tiling'' consisting of occurrences of $\alpha,
\beta, \delta$ and $\gamma$. We need this correspondence to prove that
some $\mu(\bw)$ can always be made aperiodic.

\begin{lemma}
  Let $q$ be a rectangular word, $r$ its primitive root and $b$ one
  non-empty diagonal border of $r$. Let $\bw$ be an aperiodic,
  suitable $\bigZ^2$-word. Then $\mu(\bw)$ is an aperiodic,
  $q$-coverable $\bigZ^2$-word.
\end{lemma}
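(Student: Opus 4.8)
The plan is to establish two separate claims: that $\mu(\bw)$ is $q$-coverable, and that $\mu(\bw)$ is aperiodic. Coverability is the easier half. By construction, every tile $\alpha,\beta,\gamma,\delta$ consists entirely of occurrences of $q$ arranged so that each tile is itself $q$-coverable, and the concatenation rules for $\mu$ glue adjacent tiles along a shared occurrence of the border $b$ (hence along a shared occurrence of $q$). So first I would argue that the occurrences of $q$ inside the individual tiles, together with the overlapping occurrences straddling tile boundaries, cover every position of $\mu(\bw)$. The weak convexity (no-holes) property already established for $\mu$ guarantees that $\dom(\mu(\bw)) = \bigZ^2$ when $\bw$ is a suitable $\bigZ^2$-word, so there are no uncovered positions; every letter lies in some tile, and within that tile it is covered by an occurrence of $q$.

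The substantive half is aperiodicity, and this is where I would use Lemma~\ref{lemma:cannot-overlap}. The strategy is to show that a vector of periodicity of $\mu(\bw)$ would descend to a vector of periodicity of $\bw$, contradicting the assumed aperiodicity of $\bw$. The key is to reconstruct $\bw$ canonically from $\mu(\bw)$. Lemma~\ref{lemma:cannot-overlap} says that two \emph{distinct} tiles cannot overlap when their anchor points coincide; combined with the fact that each tile is a rigid pattern of $q$-occurrences, this should let me recover, from the picture $\mu(\bw)$ alone, the positions of all anchor points and the identity of the tile sitting at each one. In other words, I would argue that the "tiling" of $\mu(\bw)$ into copies of $\alpha,\beta,\gamma,\delta$ is \emph{unique}: the decomposition is forced by the geometry, independently of how $\bw$ was chosen. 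This uniqueness is exactly the correspondence the text announces before the lemma.

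Once the tiling is canonical, periodicity transfers cleanly. Suppose $\mu(\bw)$ has a vector of periodicity $(k,\ell)\neq(0,0)$. Because the tile decomposition is unique, this translation must send anchor points to anchor points and tiles to identical tiles. The anchor-point lattice generated by the horizontal step $(4\width(q),\height(b))$ and vertical step $(\width(b),4\height(q))$ is exactly the image under $\mu$ of the integer lattice $\bigZ^2$ of positions in $\bw$; so a symmetry of $\mu(\bw)$ preserving this lattice and the tile-labels corresponds to a translation $(k',\ell')$ of $\bw$ preserving its letters, i.e. a vector of periodicity of $\bw$. I would check that $(k,\ell)\neq(0,0)$ forces $(k',\ell')\neq(0,0)$ (this uses that the lattice map is injective, which follows from the explicit anchor-coordinate formula, valid once one verifies that the two step vectors are linearly independent — equivalently $16\,\width(q)\height(q)\neq\width(b)\height(b)$, which holds since $b$ is a proper border). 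That contradicts the aperiodicity of $\bw$.

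The main obstacle I expect is proving the uniqueness of the tile decomposition rigorously. Lemma~\ref{lemma:cannot-overlap} rules out overlaps between anchor-aligned distinct tiles, but to conclude uniqueness I must also rule out a "shifted" decomposition whose anchors do not coincide with the true ones — that is, I need to show the anchor lattice is intrinsically detectable from $\mu(\bw)$, not merely that the given tiling is consistent. I would handle this by locating anchors via a local pattern that occurs in $\mu(\bw)$ only at genuine anchor positions (for instance, the specific local configuration of $q$-occurrences around an anchor, which Lemma~\ref{lemma:cannot-overlap} shows cannot be mimicked by a different tile pair), and then arguing that any periodicity vector must respect this distinguished set of positions. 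The routine coordinate bookkeeping translating a lattice symmetry back to a periodicity vector of $\bw$ I would leave as a short verification.
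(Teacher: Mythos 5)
Your coverability half is fine and is exactly the paper's (one line: each letter of $\mu(\bw)$ lies in some tile, and tiles are unions of occurrences of $q$). The gap is in the aperiodicity half: you reduce everything to the claim that the tile decomposition of $\mu(\bw)$ is \emph{unique}, i.e.\ that the anchor set and tile labels are intrinsically recoverable from the picture alone, and you propose to get this from Lemma~\ref{lemma:cannot-overlap}. That lemma cannot deliver it. It only forbids an overlap between occurrences of two \emph{distinct} tiles whose anchor points \emph{coincide}; it says nothing about spurious occurrences of tiles (or of anchor-like local configurations) sitting at positions shifted off the genuine anchor lattice, which is precisely what you must exclude to show that ``a shifted decomposition'' cannot exist. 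Uniqueness of the decomposition is a recognizability-type statement (in the spirit of Moss\'e's theorem for substitutions); it may well be true here, but it needs a genuinely new argument, and your proposal both defers this hardest step and misattributes it to a lemma that does not cover it.

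The paper's proof shows the detour through uniqueness is unnecessary: it never reconstructs the tiling from $\mu(\bw)$, but works with the decomposition \emph{given by the construction}. Assume $\mu(\bw)$ has a non-zero period $\vect{p}$ and consider the positions $t_i = a + i\vect{p}$ for a fixed anchor $a$. Each position has an offset inside the tile covering it, and there are at most $16\,\width(q)\,\height(q)$ possible offsets, so by pigeonhole some $t_i,t_j$ have equal offsets; hence the anchors of their covering tiles $T_i,T_j$ differ by $(j-i)\vect{p}$, which is again a period. Now Lemma~\ref{lemma:cannot-overlap} is invoked exactly in the anchor-coincident situation it covers: translating $T_i$ by $(j-i)\vect{p}$ produces an occurrence of its tile type anchored at the anchor of $T_j$, so the two types must agree, and the same argument applied to right-, left-, top- and bottom-neighbours (whose anchors sit at fixed offsets from the anchors of $T_i$ and $T_j$) propagates this equality across the whole tiling. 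Thus the given tiling is invariant under a non-zero lattice vector, which pulls back to a non-zero periodicity vector of $\bw$, the desired contradiction. If you replace your uniqueness step by this pigeonhole-plus-propagation argument, the rest of your outline (the lattice bookkeeping, the determinant $16\,\width(q)\height(q) - \width(b)\height(b) \neq 0$) goes through as you wrote it.
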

\begin{proof}
  By construction, $\mu(\bw)$ is $q$-coverable for all $\bw$. Suppose
  that $\mu(\bw)$ has a non-zero vector of periodicity
  $\vect p \in \bigZ^2$. Let us prove that, under this assumption,
  $\bw$ is periodic.

  Let $a \in \bigZ^2$ be the coordinates of the anchor point of some
  tile in $\mu(\bw)$. For any $i \in \bigZ$, let $t_i = a + i \times
  \vect p$. Since tiles have at most $16 \times \width(q) \times
  \height(q)$ letters, by pigeonhole principle, there are two pairs of
  coordinates $t_i$ and $t_j$ which have the same offset to the anchor
  points of their respective tiles (i.e. the tiles covering their
  respective positions). Hence the difference between these anchor
  points is a multiple of the vector of periodicity $\vect p$.

  Let $T_i$ (resp. $T_j$) be the tile covering position $t_i$ (resp.
  $t_j$). Since $T_i$ is the $(j - i) \times \vect p$-translation of
  $T_j$, they are both occurrences of a same tile. Moreover, the
  right-neighbours of $T_i$ and $T_j$ are both occurrences of a same
  tile, otherwise we would have a configuration forbidden by
  Lemma~\ref{lemma:cannot-overlap}. Likewise, the top-neighbour,
  bottom-neighbour and left-neighbour of $T_i$ and $T_j$ are also
  equal. By iterating this argument over the neighbours' neighbours,
  and so on, we conclude that the tiling itself is periodic. Hence,
  $\bw$ is periodic.
\end{proof}

This ends the proof of Theorem~\ref{theorem:construct}. From any
rectangular word $q$ with at least one non-empty diagonal border in
its primitive root, we can build $\mu(\bw)$ for any aperiodic,
suitable $\bigZ^2$-word $\bw$. The picture $\mu(\bw)$ will be
$q$-coverable and aperiodic.

\subsection{Uniform Recurrence, Uniform Frequencies and Coverability}

In this subsection, we extend Theorem~\ref{theorem:construct} (which
characterizes periodicity in terms of covers) to uniform recurrence
and uniform frequencies. To do so, we exploit
Theorem~\ref{theorem:construct} itself as well as the function $\mu$
from its proof.

Let $q$ be a block whose primitive root has a non-empty diagonal
border. Let $\mu$ be as in the proof of
Theorem~\ref{theorem:construct}. Recall that $\mu$ takes as an
argument a picture over the alphabet $\{a_1, a_2, a_3, a_4\}$.
Consider the word $\bw$ such that:
\begin{itemize}
\item $\bw[0,0] = a_4$;
\item $\bw[0,j] = a_2$ for $j \in \bigZ^*$;
\item $\bw[i,0] = a_3$ for $i \in \bigZ^*$;
\item $\bw[i,j] = a_1$ for $i \in \bigZ^*$ and $j \in \bigZ^*$.
\end{itemize}
Observe that $\bw$ is suitable, hence $\mu(\bw)$ exists. Moreover,
$\mu(\bw)$ is not uniformly recurrent: any block which contains
$\mu(a_4)$ only occurs once. Therefore $\mu(\bw)$ is a $q$-coverable
picture not uniformly recurrent.

Now consider $t$ a $\bigZ-$word (one-dimensional) over $\{1,2\}$, such
that no factor of $t$ has uniform frequencies in $t$. Then define
$\bw'$ an infinite picture over $\{a_1, a_2, a_3, a_4\}$ by
$\bw'[i,j] = a_{t[i]}$, for all $i,j \in \bigZ$. Observe that, in
$\mu(\bw')$, the frequency of any block containing exactly one
occurrence of $\mu(a_2)$ is the frequency of $a_2$ in $t$. Therefore,
those factors do not have uniform frequencies.

\begin{proposition}
  Let $q$ be a block. There exists a $q$-coverable infinite picture
  which is not uniformly recurrent (resp. has no uniform frequencies)
  if and only if the primitive root of $q$ has a non-empty diagonal
  border.
\end{proposition}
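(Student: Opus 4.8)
The plan is to prove the biconditional by assembling pieces that are already available in the excerpt, so that almost all of the work reduces to bookkeeping rather than new ideas. For the ``if'' direction, I would simply observe that the two explicit constructions appearing immediately before the Proposition statement do the job. Assuming the primitive root of $q$ has a non-empty diagonal border, the word $\bw$ defined by $\bw[0,0]=a_4$, $\bw[0,j]=a_2$, $\bw[i,0]=a_3$, $\bw[i,j]=a_1$ is suitable, so $\mu(\bw)$ exists, is $q$-coverable by construction of $\mu$, and fails uniform recurrence because any block containing the single occurrence of $\mu(a_4)$ occurs exactly once. This settles the ``not uniformly recurrent'' half. For the ``no uniform frequencies'' half, I would invoke the second construction: taking $t$ a one-dimensional $\bigZ$-word over $\{1,2\}$ in which some factor has no uniform frequency (such words exist classically), the picture $\bw'$ with $\bw'[i,j]=a_{t[i]}$ is suitable, and in $\mu(\bw')$ the frequency of a block containing exactly one copy of $\mu(a_2)$ equals the frequency of the symbol $2$ in $t$, which does not exist.

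For the ``only if'' direction, the key point is that both properties are \emph{stronger} than periodicity in the contrapositive sense, so I can reuse the ``only if'' part of Theorem~\ref{theorem:construct}. Concretely, I would argue by contraposition: suppose the primitive root of $q$ has \emph{no} non-empty diagonal border. Then by the ``only if'' part of Theorem~\ref{theorem:construct}, every $q$-coverable $\bigZ^2$-word is periodic. The remaining step is to check that a periodic $\bigZ^2$-word is automatically uniformly recurrent and has uniform frequencies for all its blocks. This is a standard fact: a periodic picture (two non-colinear vectors of periodicity) has a finite fundamental domain, so every block that occurs, occurs along a full lattice of translates with bounded gaps (uniform recurrence), and the density of occurrences of any block converges to a limit equal to its count in one fundamental domain divided by the area of that domain (uniform frequency).

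The main obstacle, and the only genuinely non-routine step, is making the periodic-implies-regular argument precise in the $\bigZ^2$ setting. I would make it explicit that periodicity as \emph{defined} in the excerpt means possessing two non-colinear vectors of periodicity, which indeed yields a full-rank sublattice $\Lambda \subseteq \bigZ^2$ of periods and hence a finite fundamental domain; from there the counting for uniform frequencies is a clean limit computation, and uniform recurrence follows from bounded gaps between lattice translates. One subtlety to flag carefully is that Theorem~\ref{theorem:construct} is stated for ``non-periodic'' in exactly this two-vector sense, so the contrapositive chain is valid verbatim; no separate treatment of the ``singly periodic'' case is needed, since a picture with only one vector of periodicity is not periodic under the excerpt's definition and would already have been excluded by Theorem~\ref{theorem:construct}. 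I would therefore keep the write-up short, citing Theorem~\ref{theorem:construct} for the hard direction and the two preceding explicit constructions for the easy direction, and devote the only real paragraph of new argument to the periodic $\Rightarrow$ (uniformly recurrent and uniform frequencies) implication.
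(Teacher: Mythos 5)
Your proposal is correct and takes essentially the same route as the paper: the paper's own proof likewise handles the ``only if'' direction by contraposition via Theorem~\ref{theorem:construct} (no diagonal border in the primitive root forces all $q$-coverable pictures to be periodic, hence uniformly recurrent with uniform frequencies) and handles the ``if'' direction by citing the two constructions $\mu(\bw)$ and $\mu(\bw')$ given just before the statement. The only difference is that you spell out the periodic-implies-regular step (full-rank lattice of periods, finite fundamental domain), which the paper asserts without proof; this is a welcome but routine addition.
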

\begin{proof}
  If $q$ does not have a primitive root with a diagonal border, then
  by Theorem~\ref{theorem:construct} all $q$-coverable pictures are
  periodic and therefore uniformly recurrent (resp. have uniform
  frequencies). Otherwise, the constructions above gives $\mu(\bw)$
  (resp. $\mu(\bw')$) a non-uniformly recurrent, and not even
  recurrent (resp. without uniform frequencies) $q$-coverable word.
\end{proof}

This remark is a negative result answering our initial question. As
soon as the cover $q$ we choose is non-trivial (has at least one
non-periodic coverable word), we have $q$-coverable word without
frequencies and without uniform recurrence. By contrast, we
conjectured in Section~\ref{sec:qp-ent0} that, for almost any $q$, all
$q$-coverable words have zero topological entropy. To sum up,
coverability is independent from uniform recurrence and uniform
frequencies for all ``non-trivial'' covers, and it likely implies zero
topological entropy for all ``almost all'' covers. Not only
coverability implies very little symmetry properties on words, but the
covers themselves bear very little information.

As a consequence, we move on a stronger notion of coverability, based
on the same idea, but with (hopefully) better properties.

\section{Multi-Scale Coverability in Two Dimensions}
\label{sec:msc}

In~\cite{MarcusMonteil2006Arxiv}, Monteil and Marcus called
\emph{multi-scale quasiperiodicity} any $\bigN$-word having infinitely
many quasiperiods. In our context, we want to exclude cases where
coverability is obtained on groups of one-dimensional lines (or
columns) packed all over $\bigZ^2$. Hence we call a $\bigZ^2$-word
\textit{multi-scale coverable} if, for each $n \in \bigN$, it has a
$k \times \ell$-cover with both $k \geq n$ and $\ell \geq n$. This is
actually a generalization of one-dimensional multi-scale
quasiperiodicity (in $1$D, all quasiperiods are prefixes of the
multi-scale quasiperiodic word, so the quasiperiods must be longer and
longer).

In~\cite{MarcusMonteil2006Arxiv}, Monteil and Marcus prove that
multi-scale coverable right-infinite words have uniform frequencies,
zero topological entropy and are uniformly recurrent. In this section,
we generalize implications of topological entropy and uniform
frequencies to the two-dimensional case. Then we see that uniform
recurrence is a bit more subtle.

\subsection{Topological Entropy}
\label{sec:ms-entropy}

Let $\bw$ be a $\bigZ^2$-word. Recall that $c_\bw(n,m)$ is the number
different of blocks of size $n \times m$ which occur in $\bw$ and that
the topological entropy of $\bw$ is the following quantity:
\begin{equation}
  \label{eq:entropy}
  H(]bw) = \lim_{n \to +\infty} \frac{\log_{|\Sigma|} c_w(n,n)}{n^2}
\end{equation}

\begin{proposition}
  \label{proposition:msqp-ent0}
  Any multi-scale coverable, $\bigZ^2$-word $\bw$ has zero topological
  entropy.
\end{proposition}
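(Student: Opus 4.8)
The plan is to bound the block complexity $c_{\bw}(n,n)$ by a polynomial in $n$, which immediately forces the entropy limit to vanish. The key observation is that multi-scale coverability gives us, for the very parameter $n$ in question, a cover $q$ whose dimensions $k \times \ell$ both exceed $n$; I would then show that a large cover tightly constrains how an $n \times n$-block can look, because only a bounded number of occurrences of such a large $q$ can meet a small window.

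First I would fix $n$ and invoke multi-scale coverability to obtain a cover $q$ of size $k \times \ell$ with $k, \ell \geq n$ (in fact I would choose the guaranteed cover for the parameter $n$, or a convenient larger one). Consider an arbitrary $n \times n$-block $B$ occurring in $\bw$ at coordinates $(x,y)$. Since $\bw$ is $q$-coverable, every letter of $B$ lies in some occurrence of $q$; in particular the occurrence $\occ(x,y)$ covering the bottom-left corner of $B$ is a single occurrence of $q$ whose domain, being of width $k \geq n$ and height $\ell \geq n$, already contains all of $B$. Here I would lean on the fact that an $n \times n$ window sits inside one $k \times \ell$ rectangle once $k,\ell \geq n$, so $B$ is determined by $q$ together with the relative offset of $B$ inside that occurrence.

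The counting then goes as follows. The block $B$ is a subblock of a single fixed occurrence of $q$, so $B$ is completely determined by the coordinates of its bottom-left corner relative to the bottom-left corner of that occurrence. Those relative coordinates range over $\{0, \dots, k-1\} \times \{0, \dots, \ell-1\}$ at most (the offsets must be small enough that the $n \times n$ window fits, but the crude bound $k\ell$ suffices). Hence $c_{\bw}(n,n) \leq k \ell$. The delicate point is that $k$ and $\ell$ depend on $n$: multi-scale coverability only guarantees \emph{some} cover larger than $n$ in both dimensions, and a priori that cover could be enormous, so I must argue that we can keep $k$ and $\ell$ under polynomial (or at least subexponential) control, or else phrase the complexity bound directly in terms of the chosen cover. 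I expect this interplay between the scale $n$ and the size of the available cover to be the main obstacle; the cleanest fix is to note that for the fixed $n$ we only need \emph{one} cover $q_n$ of dimensions $\geq n$, and since $B$ lies inside a single occurrence of $q_n$, we get $c_{\bw}(n,n) \leq |q_n|$, but this is useless unless $|q_n|$ grows slowly.

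To escape this, I would reverse the quantifiers: rather than counting against the smallest admissible cover, I would argue that once $\bw$ has a cover $q_n$ with both dimensions at least $n$, the number of distinct $n \times n$-blocks is bounded by the number of distinct $n \times n$-subblocks of $q_n$, which is at most $(k-n+1)(\ell-n+1) \leq k\ell = |q_n|$. The escape is that multi-scale coverability furnishes arbitrarily large covers, so for the entropy computation I may pass to a \emph{subsequence} of scales $n$ along which a single cover works, or observe that the limit defining $H(\bw)$ exists and can be evaluated along any cofinal sequence of $n$'s; choosing $n$ to match the dimensions of successive covers keeps $|q_n|$ comparable to $n^2$, so that $\log_{|\Sigma|} c_{\bw}(n,n) = O(\log n^2) = O(\log n)$, and dividing by $n^2$ drives the limit to $0$. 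I would finish by noting that the complexity function is monotone enough that controlling it along this cofinal sequence controls the full limit, yielding $H(\bw) = 0$.
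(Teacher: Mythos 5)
Your key step fails. You assert that, because the cover $q$ has width $k \geq n$ and height $\ell \geq n$, the occurrence $\occ(x,y)$ covering the bottom left-hand corner of an $n \times n$-block $B$ ``already contains all of $B$''. Coverability only guarantees that the position $(x,y)$ lies \emph{somewhere} inside an occurrence of $q$; that occurrence may extend almost entirely downwards and to the left of $(x,y)$ --- in the extreme case its top right-hand corner is exactly $(x,y)$, and then it contains no letter of $B$ other than the corner itself. So $B$ is in general covered by several occurrences of $q$ in various relative positions, and your bound $c_\bw(n,n) \leq (k-n+1)(\ell-n+1) \leq |q_n|$ does not follow. Nor can the step be repaired by choosing a better occurrence: it is simply not true that every block of a multi-scale coverable word occurs inside some cover. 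The paper's own example ${}^{\omega}(ab)\,a\,(ab)^{\omega}$ is multi-scale coverable, yet the factor $aa$ occurs exactly once and inside no cover; this is precisely why multi-scale coverability fails to imply uniform recurrence over $\bigZ$ (and $\bigZ^2$). Your claim, if true, would show that every block occurs in some cover (strong multi-scale coverability in the paper's terminology), hence would prove uniform recurrence, contradicting that example.

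What survives of your plan is the final quantifier manoeuvre: the entropy limit exists, so it suffices to bound $c_\bw(m,m)$ along the cofinal sequence of cover sizes, and the paper does exactly this. But the quantity to count is the number of possible \emph{coverings} of a large square, not the number of subblocks of $q$. For a cover $q$ of size $n \times m$ with $n \leq m$ and an $m \times m$-square $s$ of $\bw$: every occurrence of $q$ meeting $s$ has at least one corner inside $s$, at most $4$ occurrences per line of $s$ are needed (one per type of corner), so at most $4m$ occurrences suffice to determine $s$, each determined by at most $m$ possible positions of its corner on its line. This gives $c_\bw(m,m) \leq m^{4m}$ --- superpolynomial, unlike your target bound, but still sufficient, since $\log(m^{4m})/m^2 = 4\log(m)/m \to 0$.
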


\begin{proof}
  Consider a covering pattern $q$ of $\bw$ with size $n \times m$.
  Suppose without loss of generality that $n \leq m$. Let $s$ be a $m
  \times m$-square of $\bw$. The square $s$ is covered with
  occurrences of $q$ (which may spill out of $s$). The relative
  position of $s$ and of occurrences of $q$ completely defines $s$.

  We need at most $4m$ occurrences of $q$ to define a covering of $s$.
  Indeed, each occurrence of $q$ must have at least one of its corners
  in $s$. If some occurrence of $q$ has its bottom right-hand corner
  in $s$, then no other occurrence of $q$ may have their bottom
  right-hand corners on the same line of $s$. Otherwise, one of these
  occurrences would supersede the other one, which would be
  ``useless'' in the covering. Proceed the same way for the other
  corners and deduce that at most $4m$ occurrences of $q$ ($4$ per
  line) uniquely define $s$.
  
  Each of these occurrences is uniquely determined by its position of
  its corner on a line of $s$. There are at most $m$ possibilities for
  each. Therefore, there are at most $m^{4m}$ $q$-coverings which
  define at most $m^{4m}$ squares of size $m \times m$.

  This bound on $c_\bw(m,m)$ allows us to compute the entropy of
  $\bw$. Observe that:
  \begin{equation}
    \label{eq:entropy-0}
    \lim_{m \to \infty} \frac{ \log m^{4m} }{ m^2 } =
    \lim_{m \to \infty} \frac{ 4m \log m }{ m^2 } \to
    0
  \end{equation}

  Since there are infinitely many covering patterns of $\bw$ with
  growing sizes, there are infinitely many integers $m$ such that
  $c_\bw(m,m) \leq m^{4m}$. Hence equation~\eqref{eq:entropy-0} shows
  that then topological entropy of $\bw$ converges to zero.
\end{proof}

Note that since the Kolmogorov complexity is bounded by the
topological entropy (see~\cite{Brudno1982Trudy}), this result also
shows that the Kolmogorov complexity of multi-scale coverable words is
zero as well.

\subsection{Uniform Frequencies}

In this subsection, we prove the following theorem:
\begin{theorem} \label{theorem:2d-freqs} %
  Multi-scale coverable pictures have uniform frequencies.
\end{theorem}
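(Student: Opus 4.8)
The plan is to fix an arbitrary block $u$ and prove that the density of $u$ in large windows of $\bw$ converges to a value that does not depend on where the window is placed; this is exactly what uniform frequencies (unique ergodicity) means, and in particular it yields the centered limit $f_u(\bw)$. Writing $d_M(p)$ for the density of $u$ in the $M \times M$ window anchored at $p$, I would introduce $\bar f = \limsup_{M \to \infty} \sup_p d_M(p)$ and $\underline f = \liminf_{M \to \infty} \inf_p d_M(p)$. One always has $\underline f \le \bar f$, and uniform frequency of $u$ is equivalent to the single equality $\bar f = \underline f$: the common value is then $f_u(\bw)$, and the convergence is automatically uniform in $p$. So the whole theorem reduces, block by block, to proving $\bar f = \underline f$. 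It is worth stressing that a \emph{single} cover is not enough here: over $\{a,b\}$ an $aba$-covered word may read $\dots abaaba\dots$ in one region and $\dots ababa\dots$ in another, so the local density of $b$ genuinely ranges over $[1/3,1/2]$. What rescues us is that multi-scale coverability provides covers of arbitrarily large size, and such fluctuations are controlled by the size of the cover.

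The heart of the argument is therefore the following estimate, which I would isolate as a lemma: for \emph{every} cover $q$ of $\bw$ of size $k \times \ell$,
\[
  \bar f - \underline f \;\le\; \frac{C(u)}{\min(k,\ell)},
\]
where $C(u)$ depends only on the dimensions of $u$. Granting this, the theorem is immediate: multi-scale coverability supplies covers with $\min(k,\ell)$ arbitrarily large, so the right-hand side tends to $0$, forcing $\bar f = \underline f$. Since $u$ was arbitrary, $\bw$ has uniform frequencies.

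To prove the lemma I would show that any two sufficiently large windows have $u$-densities differing by at most $C(u)/\min(k,\ell)$, up to a boundary term of order $\max(k,\ell)/M$ that disappears in the $\limsup$/$\liminf$. First I would record that, since $q$ covers $\bw$, its occurrences are syndetic and consecutive occurrences overlap precisely along (non-empty) borders of $q$, whose dimensions are bounded by those of $q$. Next I would partition each window according to the map $\occ$, assigning every position—and hence every occurrence of $u$—to the occurrence of $q$ covering its lower-left corner. Occurrences of $u$ lying in the interior of a single $q$-occurrence are read off from $q$ alone, while occurrences of $u$ straddling the junction of two overlapping occurrences are charged to boundary corrections. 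Because each junction is a border of $q$, the straddle zones have width $O(\width(u))$ and $O(\height(u))$, so their total relative contribution is of order $1/\min(k,\ell)$; this is the source of $C(u)$.

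The main obstacle is exactly this combinatorial bookkeeping in two dimensions. In one dimension the overlaps between consecutive occurrences are simple gaps indexed by the borders of $q$, and the straddle corrections are trivial to amortize. In two dimensions, occurrences of $q$ overlap simultaneously along edges and at corners, so organizing them into a clean partition with well-controlled junctions is delicate; moreover one must rule out that the many junctions inside a window accumulate their corrections into something larger than $O(1/\min(k,\ell))$. I expect the right tools to be the local neighbour structure of occurrences (the $\abov$/$\righ$ relations) together with the border and primitive-root analysis developed in the previous sections: replacing $q$ by its primitive root and using the bound on the size of borders should be what keeps the overlapping regions—and hence the straddle corrections—uniformly small relative to $\min(k,\ell)$. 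Making that accounting rigorous is the technical crux of the proof.
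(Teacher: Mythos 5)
Your reduction of the theorem to the single equality $\bar f = \underline f$ for each block $u$ is fine, but the lemma that carries all the weight --- for every cover $q$ of size $k\times\ell$, $\bar f - \underline f \le C(u)/\min(k,\ell)$ --- is false, so there is a genuine gap. The failure already occurs in one dimension. Take $q = b^n a^n b^n$ and the $\bigZ$-word $w = {}^{\omega}(b^n a^n b^n)\,(b^n a^n)^{\omega}$: it is $q$-covered (on the left, occurrences of $q$ concatenated with no overlap; on the right, occurrences overlapping along the border $b^n$), yet the density of the letter $b$ tends to $2/3$ in far-left windows and to $1/2$ in far-right windows, so $\bar f - \underline f \ge 1/6$ while $|q| = 3n$ is arbitrarily large. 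Stacking identical rows of $w$ turns this into a $\bigZ^2$-word covered by a $3n\times 3n$ block with the same order-one fluctuation. The flaw in your bookkeeping is that the cells of your $\occ$-partition are not full occurrences of $q$ but occurrences minus whatever neighbouring occurrences claim; hence the number of $u$'s in a cell, and the cell's area, are \emph{not} ``read off from $q$ alone'' --- they depend on the overlaps, which are borders of $q$ and can be a constant fraction of $q$ no matter how large $q$ is (in the example, right-half cells are copies of $a^n b^n$, left-half cells copies of $b^n a^n b^n$). Equivalently, the number of $q$-occurrences per unit area varies from region to region by amounts that do not shrink as $q$ grows; the occurrences of $u$ straddling junctions are indeed only a perimeter-order correction, but they were never the problem. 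The hoped-for rescue via primitive roots also does not exist: a primitive word can have a border longer than half of itself (e.g.\ $abababa$ with border $ababa$), so primitivity gives no bound on overlap sizes. Note finally that the counterexample $w$ is merely coverable, not multi-scale coverable; but your proof of the estimate uses nothing beyond the single cover $q$, so if that proof were valid it would apply to $w$. Restricting the estimate to multi-scale coverable words makes it true only as a consequence of the theorem itself, i.e.\ it becomes circular.

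This is exactly why the paper's proof has a different architecture and never tries to compute a window's density from one covering. It instead exploits the fact that any $2\width(q)\times 2\height(q)$ block of a $q$-covered picture contains a full occurrence of $q$, and plays \emph{two} kinds of frequencies against each other: if $f_u(B_n)$ diverges, Lemma~\ref{lemma:2d-cases} produces a threshold $t$, arbitrarily large covers on one side of $t$ and arbitrarily large blocks on the other; Lemmas~\ref{lemma:2d-upper-class} and~\ref{lemma:2d-lower-class} then split a deviant block into a full occurrence of a cover plus a large remainder whose frequency deviates even more (from $t \pm \varepsilon$ to $t \pm \frac{11}{10}\varepsilon$, then $t \pm (\frac{11}{10})^2\varepsilon$, and so on), and iterating drives some frequency below $0$ or above $1$, a contradiction. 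Any repair of your approach would have to make different covers (or covers versus arbitrary blocks) interact in this way; the single-cover estimate you isolated cannot be patched, because it is simply not true.
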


This answers an open question from our preliminary work on the
subject~\cite{GamardRichomme2015Lata}, and we generalize a result from
inifnite words to infinite pictures. We believe that this proof is
easily adaptable to higher dimensions ($3$-dimensional words, etc.).
Moreover, the proof on words in~\cite{MarcusMonteil2006Arxiv} was
expressed in terms of sub-shifts and used ergodic theory. By contrast,
our proof uses purely combinatorial means, and hence is accessible to
readers unfamiliar with this theory. For a full expos{\'{e}} about
ergodic theory in the context of one-dimensional words,
see~\cite{CANT7}.

\medskip
\noindent
The authors thank Ana{\"{e}}l~Grandjean for his precious help with
this proof.

\medskip
First, let us recall some notation. If $u$ and $v$ are finite blocks,
then $|u| = \width(u) \times \height(u)$ and $|v|_u$ is the number of
occurrences of $u$ in $v$. We note $f_u(v)$ the \emph{frequency} of
$u$ in $v$, which is the following quantity:
\begin{equation*}
  f_u(v) = \frac{|v|_u}{|v|}
\end{equation*}
Moreover, $u[x,y]$ is the letter at position $(x,y)$ of $u$ and
$u[ (x,y) \dots (x+w-1,y+h-1) ]$ is the block of $u$ whose bottom
left-hand corner is at position $(x,y)$ and whose size is $(w,h)$. Let
$\bw$ be an infinite picture. Define $B_n(\bw)$ as
$\bw[(-n,-n), \cdots, (+n,+n)]$ and $f_u(\bw)$ as:
\begin{equation*}
  f_u(\bw) = \lim_{n \to +\infty} f_u(B_n)
\end{equation*}
if this quantity exists.

We say that $\bw$ \emph{has frequencies} (or has \emph{uniform
  frequencies}, or is \emph{uniquely ergodic}) if $f_u(\bw)$ exists
for each block $u$ of $\bw$. Our purpose is to show that any
multi-scale coverable picture has frequencies.

The proof is structured in several lemmas. Lemmas~\ref{lemma:2d-fact},
\ref{lemma:2d-exists-big-things} and~\ref{lemma:2dbis-upper-bound} are
technical, their sole purpose is to make calculations work.
Lemma~\ref{lemma:2d-cases} states that, if $\bw$ is a multi-scale
coverable picture without frequencies, then there exist either:
\begin{enumerate}
\item infinitely many covers with high frequencies and infinitely many
  blocks with low frequencies; or
\item infinitely many covers with low frequencies and infinitely many
  blocks with high frequencies.
\end{enumerate}
Lemma~\ref{lemma:2d-upper-class} states that, if a picture has
infinitely many covers with high frequencies and infinitely many
blocks with low frequencies, then it has infinitely many blocks with
\emph{even lower} frequencies. Similarly,
Lemma~\ref{lemma:2d-lower-class} states that if a picture has
infinitely many covers with low frequencies and infinitely many blocks
with high frequencies, then it has infinitely many blocks with
\emph{even higher} frequencies. Finally, the proof of
Theorem~\ref{theorem:2d-freqs} is as follows: by
Lemma~\ref{lemma:2d-cases}, any multi-scale picture without
frequencies is either in case~$1$ or in case~$2$. In case~$1$, apply
Lemma~\ref{lemma:2d-upper-class} many times, until getting blocks with
frequencies higher than $1$: a contradiction. In case~$2$, apply
Lemma~\ref{lemma:2d-lower-class} many times, until getting blocks with
frequencies lower than $0$: a contradiction again.

\medskip\noindent Let us start. \medskip\noindent

If $\bw$ is an infinite picture, let $L_{\geq K}(\bw)$ the set of
blocks of $\bw$ whose width is larger than $K$ and whose height is
also larger than $K$.

For the purposes of the proof, we shall need to extend definitions of
$|v|$ and $|v|_u$ to cases where $v$ is a finite union of blocks
(instead of a single block). In that case, $|v|$ is the number of
letters in $v$ and $|v|_u$ the number of complete occurrences of $u$
in $v$.

\begin{lemma}
  \label{lemma:2d-fact} %
  Let $v$ be a finite block which can be decomposed into $v_1$ and
  $v_2$, two unions of blocks such that $\dom v_1 \cap \dom v_2 = \emptyset$.
  Let $u$ be a block. Then,
  \begin{equation*}
    f_u(v_1) \times \frac{|v_1|}{|v|} \leq
    f_u(v_1) \times \frac{|v_1|}{|v|} +
    f_u(v_2) \times \frac{|v_2|}{|v|} \leq
    f_u(v) \leq
    f_u(v_1) \times \frac{|v_1|}{|v|} + \frac{|v_2|}{|v|}
  \end{equation*}
\end{lemma}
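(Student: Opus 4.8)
The plan is to clear the denominator $|v|$ and reduce the whole chain to two counting inequalities about occurrences, namely
$|v_1|_u + |v_2|_u \leq |v|_u \leq |v_1|_u + |v_2|$, using the dictionary $f_u(v_i)\,|v_i| = |v_i|_u$ and $f_u(v)\,|v| = |v|_u$. The leftmost inequality of the statement is then immediate, since $f_u(v_2)\,|v_2| = |v_2|_u \geq 0$, so adding this term only increases the middle quantity. First I would record this dictionary and observe that everything now reduces to comparing the number of complete occurrences of $u$ in $v$ with the corresponding numbers in $v_1$ and $v_2$; no other information about $f_u$ is needed.

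For the lower bound $|v|_u \geq |v_1|_u + |v_2|_u$, I would argue that every occurrence of $u$ contained entirely in $\dom v_1$, and every occurrence contained entirely in $\dom v_2$, is in particular an occurrence contained entirely in $\dom v = \dom v_1 \cup \dom v_2$. Since $\dom v_1 \cap \dom v_2 = \emptyset$, an occurrence lying inside $v_1$ shares no cell with one lying inside $v_2$, so these two families are disjoint, their counts add, and their sum is bounded above by $|v|_u$. This part is routine: it uses only that ``being an occurrence inside a region'' is monotone in the region and that disjoint domains force disjoint occurrence-sets.

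For the upper bound $|v|_u \leq |v_1|_u + |v_2|$, I would split the occurrences of $u$ in $v$ into those contained entirely in $v_1$, which number exactly $|v_1|_u$, and those that are not; each occurrence of the latter kind must use at least one cell of $\dom v_2$. The hard part will be bounding this second family by $|v_2|$. The natural route is to charge each such boundary occurrence injectively to a distinguished cell of $v_2$ that it covers (for instance, identifying an occurrence with its anchor position and charging it to a canonical corner cell falling in $\dom v_2$), and then to check that distinct occurrences receive distinct charges, so that the family injects into the $|v_2|$ cells of $v_2$. I expect this injectivity step to be the main obstacle: a single cell of $v_2$ can a priori be covered by several overlapping occurrences of $u$, so making the charge simultaneously well-defined and injective is exactly where the geometry of how $v_2$ sits inside $v$ must enter, and it is the only place where more than bookkeeping is required. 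Once the count $|v_2|$ is secured, dividing both counting inequalities by $|v|$ and substituting $|v_i|_u = f_u(v_i)\,|v_i|$ recovers the displayed chain verbatim.
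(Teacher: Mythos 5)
Your reduction to the counting inequalities $|v_1|_u + |v_2|_u \leq |v|_u \leq |v_1|_u + |v_2|$ and your treatment of the lower bound are correct and coincide with the paper's. The genuine gap is exactly where you flagged it: the injective charging for the upper bound is never constructed, and in fact it cannot be constructed at the stated level of generality, because the inequality $|v|_u \leq |v_1|_u + |v_2|$ is false for arbitrary disjoint unions of blocks. Take $u = aa$ (a $2 \times 1$ block), $v = aaa$ (a $3 \times 1$ block), $v_2$ the middle cell, and $v_1$ the union of the two end cells. Then $|v|_u = 2$ while $|v_1|_u = 0$ and $|v_2| = 1$: both occurrences of $u$ meet $v_2$ in the same single cell, so no injective charge into $\dom v_2$ can exist, and indeed $f_u(v) = 2/3 > 1/3 = f_u(v_1)\,|v_1|/|v| + |v_2|/|v|$. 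So your suspicion that ``the geometry of how $v_2$ sits inside $v$ must enter'' is not a mere technical obstacle to be overcome by cleverness; some positional hypothesis on the decomposition is needed for the statement to hold at all.

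It is worth comparing with how the paper argues, because the paper hides the same geometric issue in the other half of the split. The paper charges every occurrence of $u$ in $v$ to its own top right-hand corner; this map is injective by definition (an occurrence is determined by its position), so the occurrences whose corner falls in $v_2$ number at most $|v_2|$, with no geometric input whatsoever. The geometry is instead buried in the complementary claim that the occurrences whose corner falls in $v_1$ number ``precisely $|v_1|_u$'': this holds only when an occurrence anchored in $v_1$ cannot stick out of $v_1$ --- e.g.\ when $v_1$ is a single block occupying the bottom left-hand corner of $v$ and anchors are top right-hand corners, which is the configuration in Step two of Lemma~\ref{lemma:2d-upper-class} --- and it fails for general disjoint unions (in the $aaa$ example, one occurrence has its corner in $v_1$ but is not contained in $v_1$). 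To repair your argument, replace ``charge each boundary occurrence to a canonical cell of $v_2$ it covers'' by ``charge each occurrence to its own corner'', and add the positional hypothesis (satisfied in each of the paper's applications) that an occurrence anchored in $v_1$ lies inside $v_1$; under that hypothesis your split ``entirely in $v_1$'' versus ``meets $v_2$'' coincides with the split by corner location, and both your proof and the paper's close.
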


\begin{proof}
  This is equivalent to the following inequation:
  \begin{equation*}
    |v_1|_u \leq |v_1|_u  + |v_2|_u \leq |v|_u \leq |v_1|_u + |v_2|
  \end{equation*}
  The first two inequality follows from $\dom v_1 \subseteq \dom v$
  and $\dom v_2 \subseteq \dom v$. For the last inequality, consider
  that each occurrence of $u$ in $v$ has its top right-hand corner
  which is either in $v_1$ or in $v_2$. The number of occurrences with
  their corner in $v_1$ is precisely $|v_1|_u$. In the ``worst'' case,
  there is an occurrence of $u$ per letter of $v_2$.
\end{proof}

We shall also need to express the relation between width, height and
areas of bigger and bigger blocks.

\begin{lemma}
  \label{lemma:2d-exists-big-things}
  Let $x,y,z,t$ be real numbers such that $x>0, y>0$. Let
  $(u_n)_{n \in \bigN}$ be a sequence of finite pictures such that,
  for all $i$, $\width(u_i) > i$ and $\height(u_i) > i$. Then, there
  exists an in\-te\-ger $N$ such that, for all $n > N$:
  \begin{equation*}
    t < \frac{\width(u_n) \times \height(u_n)}{
      x\width(u_n) + y\height(u_n) + z}
  \end{equation*}
\end{lemma}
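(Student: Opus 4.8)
The plan is to show that the right-hand side tends to $+\infty$ as $n \to \infty$; since any quantity diverging to $+\infty$ eventually exceeds the fixed real $t$, this immediately produces the required threshold $N$. Throughout, write $W_n = \width(u_n)$ and $H_n = \height(u_n)$ for brevity. The hypotheses give $W_n > n$ and $H_n > n$, so both $W_n \to +\infty$ and $H_n \to +\infty$.

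First I would deal with the denominator, which is where the only genuine care is needed. Since $x > 0$ and $y > 0$, the quantity $xW_n + yH_n$ diverges to $+\infty$, so $xW_n + yH_n + z$ is positive for all sufficiently large $n$, say for every $n > N_0$. For such $n$ the fraction is both well-defined and positive, regardless of the signs of $z$ and $t$.

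Next I would invert the fraction and show that its reciprocal tends to $0$. For $n > N_0$ one has
\begin{equation*}
  \frac{xW_n + yH_n + z}{W_n H_n}
  = \frac{x}{H_n} + \frac{y}{W_n} + \frac{z}{W_n H_n}.
\end{equation*}
Each of the three summands tends to $0$ as $n \to +\infty$, because $W_n$ and $H_n$ both diverge; hence the whole reciprocal tends to $0$ while remaining positive for $n > N_0$. This is exactly the statement that the original fraction tends to $+\infty$. Consequently there is an integer $N \ge N_0$ beyond which the fraction exceeds $t$, which is the claimed inequality.

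The argument is entirely elementary and I do not anticipate any real obstacle. The one point worth flagging is that $z$ and $t$ may be negative, so the denominator need not be positive for small $n$ and the inequality need not hold there; this is precisely why the statement only asserts the bound for $n$ past some threshold, and it is absorbed by taking $N \ge N_0$ as above.
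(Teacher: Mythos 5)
Your proof is correct, but it follows a genuinely different route from the paper's. The paper exploits monotonicity: it claims that, because $x>0$ and $y>0$, the maps $w \mapsto \frac{wh}{xw+yh+z}$ and $h \mapsto \frac{wh}{xw+yh+z}$ are non-decreasing, bounds the fraction below by its value on the diagonal, $\frac{n^2}{(x+y)n+z}$, and concludes because that one-variable function is unbounded. You instead invert the fraction and decompose the reciprocal as $\frac{x}{H_n}+\frac{y}{W_n}+\frac{z}{W_nH_n}$, each summand vanishing as $n\to\infty$, so the fraction diverges to $+\infty$ and eventually exceeds $t$. Your route is arguably the more robust of the two: the paper's monotonicity claim is stated unconditionally but is not unconditionally true (the derivative of $w\mapsto \frac{wh}{xw+yh+z}$ has the sign of $h(yh+z)$, so non-decreasingness can fail when $z<0$ and $h$ is small), and the paper never addresses the possibility that the denominator is negative or zero for small $n$, whereas you isolate exactly this issue with your threshold $N_0$ and then need no monotonicity at all. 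Moreover, divergence of the reciprocal directly yields the ``for all $n>N$'' form of the conclusion, while the paper's closing appeal to unboundedness of $n\mapsto \frac{n^2}{(x+y)n+z}$ by itself gives only arbitrarily large $n$ and relies on the monotonicity to be upgraded. Both arguments are elementary and short; yours trades the slick one-variable reduction for airtight sign handling.
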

\begin{proof}
  Observe that, since $x > 0$ and $y > 0$, the following functions are
  non-decreasing:
  \begin{align*}
    w \mapsto \frac{wh}{xw + yh + z}
    && h \mapsto \frac{wh}{xw + yh + z} 
  \end{align*}
  Hence, as for all $i$, $\width(u_i) > i$ and $\height(u_i) > i$, we
  have for arbitrarily large $n$:
  \begin{equation*}
    \frac{n^2}{(x+y) \times n + z} <
    \frac{n \times \height(u_n)}{n + y \times \height(u_n) + z} <
    \frac{\width(u_n) \times \height(u_n)}{x \width(u_n) + y
      \height(u_n) + z}
  \end{equation*}
  The lemma follows from the fact that the function
  $n \mapsto n^2/(n \times (x + y) + z)$ has no upper bound.
\end{proof}

We are now ready for the proof that each multi-scale coverable picture
has uniform frequencies. We section it into three technical
Lemmas:~\ref{lemma:2d-cases}, \ref{lemma:2d-upper-class}
and~\ref{lemma:2d-lower-class}.

\begin{lemma} \label{lemma:2d-cases}
  Let $\bw$ be a multi-scale coverable picture. Suppose that $u$ is a
  block of $\bw$ and $(f_u(B_n))_{n \in \bigN}$ does not converge.
  Then there exists:
  \begin{itemize}
  \item a real number $\varepsilon > 0$;
  \item a real number $t \in [0;1]$;
  \item an infinite set $\cv$ of blocks of $\bw$;
  \item an infinite set $\cq$ of covers of $\bw$;
  \end{itemize}
  such that either $f_u(v) < t - \varepsilon < t < f_u(q)$ for all
  $q \in \cq, v \in \cv$, or $f_u(v) > t + \varepsilon > t > f_u(q)$
  for all $q \in \cq, v \in \cv$. Moreover, $L_{\geq K}(\bw) \cap \cv$
  and $L_{\geq K}(\bw) \cap \cq$ are non-empty, for all $K$.
\end{lemma}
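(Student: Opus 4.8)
The plan is to argue from the failure of convergence of $(f_u(B_n))_n$ to extract the required constants, sets, and the dichotomy. Since the sequence $(f_u(B_n))_n$ lives in the compact interval $[0,1]$, its failure to converge means it has at least two distinct limit points, say $\liminf = \ell < L = \limsup$. First I would fix $t$ to be a value strictly between these two limit points (for instance $t = (\ell+L)/2$) and choose $\varepsilon > 0$ small enough that $\ell + \varepsilon < t - \varepsilon$ and $t + \varepsilon < L - \varepsilon$, so that $t$ is genuinely separated from both limit points with room to spare. By definition of $\liminf$ and $\limsup$, infinitely many of the $B_n$ satisfy $f_u(B_n) < t - \varepsilon$ and infinitely many satisfy $f_u(B_n) > t + \varepsilon$.

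**Next I would produce the infinite sets of blocks and covers.** The blocks $B_n = \bw[(-n,-n),\cdots,(+n,+n)]$ are themselves blocks of $\bw$ of size $(2n+1)\times(2n+1)$, so they live in $L_{\geq K}(\bw)$ for $n$ large. This already gives one of the two families: the infinitely many $B_n$ with $f_u(B_n) < t-\varepsilon$ (resp.\ $> t+\varepsilon$) can serve as the set $\cv$ of blocks, and these are arbitrarily large, so $L_{\geq K}(\bw)\cap\cv \neq \emptyset$ for all $K$. The subtle point is obtaining the \emph{complementary} family of \emph{covers} with frequency on the \emph{other} side of $t$. Here I would invoke multi-scale coverability: $\bw$ has covers $q_1, q_2, \dots$ with $\width(q_i), \height(q_i)$ both tending to infinity. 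The key observation is that a cover of $\bw$, being an arbitrarily large block that tiles the whole picture, has a frequency $f_u(q_i)$ that should approximate the global frequency of $u$ — so its frequency should approach one of the limit points, \emph{not} the artificially-placed $t$. I would argue that among these ever-larger covers, infinitely many have $f_u(q_i)$ on the opposite side of $t$ from the chosen $\cv$, giving the set $\cq$ with $L_{\geq K}(\bw)\cap\cq\neq\emptyset$ for all $K$.

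**The hard part will be** rigorously pinning which side the covers land on, and ensuring the two families straddle $t$ in a \emph{consistent} orientation (all covers above and all blocks below, or vice versa), since the lemma demands a single coherent case rather than a mixture. The cleanest route is probably a case split: since $(f_u(B_n))_n$ does not converge, either there are infinitely many covers $q$ with $f_u(q) > t$, or infinitely many with $f_u(q) < t$ (at least one must hold, as there are infinitely many covers and $t$ is a single value — and if both held we could simply pick the case matching a limit point). In the first case, pair these high-frequency covers with the infinitely many $B_n$ satisfying $f_u(B_n) < t - \varepsilon$ to realize $f_u(v) < t - \varepsilon < t < f_u(q)$; in the second case, pair low-frequency covers with $B_n$ satisfying $f_u(B_n) > t + \varepsilon$ to realize $f_u(v) > t + \varepsilon > t > f_u(q)$. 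I expect the delicate step to be verifying that the chosen $\varepsilon$ simultaneously separates $t$ from the blocks (by construction, with margin $\varepsilon$) and strictly from the covers; the latter may require shrinking $\varepsilon$ or perturbing $t$ slightly so that the strict inequalities $t < f_u(q)$ (resp.\ $f_u(q) < t$) hold for all members of $\cq$, which is a routine but necessary bookkeeping adjustment once the two limit points and the interleaving structure of the sequence are fixed.
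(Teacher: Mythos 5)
Your proposal is correct and takes essentially the same route as the paper's proof: extract two limit points of $(f_u(B_n))$ in the compact set $[0;1]$, set $t$ to their midpoint and $\varepsilon$ to a quarter of the gap, pigeonhole the infinitely many (arbitrarily large) covers onto one side of $t$ to get $\cq$, and take as $\cv$ the tail of the $B_n$-subsequence converging to the limit point on the opposite side. Your heuristic middle step (that a cover's frequency ``should approximate the global frequency'') is unnecessary and unprovable here, since no global frequency exists, but your fallback case split is exactly the paper's argument, and the strictness bookkeeping you flag (covers with $f_u(q)=t$, fixed by perturbing $t$ or shrinking $\varepsilon$) is a genuine loose end that the paper itself glosses over by writing $f_u(q)\geq t$ in one of its two cases.
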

\begin{proof}
  Observe that $(f_u(B_n))$ takes its values in $[0;1]$, a compact
  set. Since it does not converge, it has two subsequences converging
  to two different limits, say $\ell_1$ and $\ell_2$ with
  $\ell_1 < \ell_2$. Set $t = (\ell_1 + \ell_2)/2$ and
  $\varepsilon = (\ell_2 - \ell_1)/4$. Define:
  \begin{align*}
    B^- &= \{v \; | \; \text{ $v$ is a block of $\bw$ and } f_u(v) < t \} \\
    B^+ &= \{v \; | \; \text{ $v$ is a block of $\bw$ and } f_u(v) \geq t \}
  \end{align*}

  Remark that $\{B^-, B^+\}$ is a partition of the set of blocks of
  $\bw$ and that both $B^-$ and $B^+$ are infinite (thanks to
  existence of the subsequences). Moreover, $B^-$ and $B^+$ even
  contain squares of arbitrarily large sizes (still thanks to the
  subsequences).

  By pigeonhole principle, there is either infinitely many covers in
  $B^-$ or infinitely many covers in $B^+$. Suppose there are
  infinitely many covers in $B^-$ and call $\cq$ the set of these
  covers. Then we have to set $\cv$. By our previous remarks, there is
  an infinite subsequence of $(B_n)$, call it $B_{\alpha(n)}$, such
  that $f_u(B_{\alpha(n)})$ converges to $\ell_2 = t + 2\varepsilon$.
  By definition of convergence, there exists $N$ such that for all
  $n > N$, we have $f_u(B_{\alpha(n)}) > t + \varepsilon$. Let
  $\cv = \{ B_{\alpha(n)} | n > N \}$.

  Symmetrically, suppose there are infinitely many covers in $B^+$ and
  call $\cq$ the set of these covers. To set $\cv$, observe that there
  is an infinite sequence of blocks with arbitrarily large widths and
  heights whose frequencies converge to $t - 2\varepsilon$. By
  definition of convergence, there is an infinite sequence of blocks
  with arbitrarily large widths and heights whose frequencies are less
  than $t - \varepsilon$. Call the set of images of this sequence
  $\cv$.
\end{proof}

\begin{lemma} \label{lemma:2d-upper-class}
  Let $\bw$ denote a multi-scale coverable picture and $u$ a block of
  $\bw$ without frequency. Suppose there exists some $t \in [0;1]$ and
  some $\varepsilon > 0$ such that:
  \begin{equation*}
    \forall K \in \bigN, \;
    \exists v \in L_{\geq K}(\bw) \text{ and }
    f_u(v) \leq t - \varepsilon
  \end{equation*}
  and that:
  \begin{equation*}
    \forall K \in \bigN, \;
    \exists q \in L_{\geq K}(\bw) \text{ and q is cover of $\bw$ and }
    f_u(q) > t
  \end{equation*}
  Then we have:
  \begin{equation*}
    \forall K' \in \bigN, \;
    \exists v' \in L_{\geq K'}(\bw) \text{ and }
    f_u(v') \leq t - \frac{11}{10}\varepsilon
  \end{equation*}
\end{lemma}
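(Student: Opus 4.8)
The plan is to manufacture, for every target size $K'$, a single block $v'$ with $\width(v') , \height(v') > K'$ and $f_u(v') \le \tfrac{11}{10}$-shifted below $t$, by playing the low block guaranteed by the first hypothesis against a high cover guaranteed by the second. First I would fix a high cover $q$ of $\bw$ with $f_u(q) > t$, and then invoke the first hypothesis to pick a low block $v$ with $f_u(v) \le t - \varepsilon$ whose width and height are chosen much larger than those of $q$ and than $K'$; Lemma~\ref{lemma:2d-exists-big-things} is exactly the tool that lets me demand $\width(v)$ and $\height(v)$ so large that every boundary ratio appearing below is under control.

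Next I would exploit $q$-coverability locally. Since $\bw$ is $q$-coverable, the block $v$ together with a thin surrounding frame is covered by occurrences of $q$; I collect all occurrences of $q$ that meet $v$ and let $V$ be the minimal block containing them, so that $V = v \sqcup F$ where $F$ is a frame whose width and height are below $\width(q) + \height(q)$ on each side. Because $\width(v) , \height(v)$ dwarf those of $q$, Lemma~\ref{lemma:2d-exists-big-things} gives $|F|/|V| \to 0$. Now I apply Lemma~\ref{lemma:2d-fact} to the decomposition $V = v \sqcup F$: it pins $f_u(V)$ between $f_u(v)\,|v|/|V|$ and $f_u(v)\,|v|/|V| + |F|/|V|$. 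The heart of the argument is that every occurrence covering $V$ carries $u$-density deviating from $t$ by at least $\varepsilon$ on one side, while $v$ deviates on the other; a pigeonhole over these covering occurrences then isolates one occurrence whose part lying outside the others, a genuine sub-block $v'$, must absorb the whole deficit and hence deviates strictly more than $\varepsilon$.

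Finally I would quantify. There are three sources of error to keep in check: the relative area $|F|/|V|$ of the frame, the double counting incurred where occurrences of $q$ overlap, and the occurrences of $u$ that straddle the boundary between the isolated sub-block and its neighbours. Using Lemma~\ref{lemma:2d-exists-big-things} I would force each of these below $\tfrac{1}{30}\varepsilon$ in frequency, so that their combined effect is less than $\tfrac{1}{10}\varepsilon$; this upgrades the bare deficit $\varepsilon$ to $\tfrac{11}{10}\varepsilon$ and yields $f_u(v') \le t - \tfrac{11}{10}\varepsilon$ with $\width(v') , \height(v') > K'$, as required.

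I expect the main obstacle to be the overlap accounting in the middle step: the hypothesis only controls the $u$-frequency of each occurrence of $q$, and because occurrences of $q$ overlap, the frequency of an occurrence does not transfer verbatim to a disjoint sub-block. Making the pigeonhole honest, that is, ensuring that the $u$-occurrences attributed to the isolated sub-block $v'$ are neither double-counted across overlapping covers nor lost at the boundary, is precisely where the convexity inequalities of Lemma~\ref{lemma:2d-fact} and the growth estimate of Lemma~\ref{lemma:2d-exists-big-things} have to be combined with care; everything else is bookkeeping.
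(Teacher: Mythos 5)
Your overall strategy --- playing a low-frequency block $v$ against high-frequency covers $q$ and locating the deficit by a pigeonhole --- is the right one, and your size bookkeeping via Lemma~\ref{lemma:2d-exists-big-things} matches the paper's Step one. But the central step has a genuine gap: you pigeonhole over \emph{all} occurrences of $q$ meeting $v$ and claim that the part of one occurrence ``lying outside the others'' is ``a genuine sub-block'' $v'$. Neither half of that claim is justified, and in general both fail. The region of one occurrence of $q$ not covered by the other occurrences is a rectilinear polygon (an L-, U- or frame-shaped set, say), not a rectangle, so it is not an element of $L_{\geq K'}(\bw)$ at all; and even when it happens to be a rectangle, nothing in your argument prevents it from being a sliver of width $1$, whereas the lemma demands $\width(v'), \height(v') > K'$ --- you assert these dimension bounds in your last sentence but never establish them. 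Since the whole point of the lemma is to produce a \emph{block} that is large \emph{in both directions} (this is what lets the main theorem iterate it), this is not overlap bookkeeping but the crux of the statement.

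The paper avoids this trap by never covering $v$ with a family of occurrences of $q$. Instead it (i) pads $v$ to a block $\beta$ whose sides are multiples of $2\width(q)$ and $2\height(q)$, (ii) cuts $\beta$ into a \emph{disjoint} grid of cells of size $2\width(q)\times 2\height(q)$ and applies Lemma~\ref{lemma:2d-fact} (whose hypotheses require disjoint domains, so the averaging is honest) to find one cell $b_{i,j}$ with $f_u(b_{i,j}) \leq t - \tfrac{9}{10}\varepsilon$, (iii) invokes coverability only once, to find a \emph{single} full occurrence of $q$ inside $b_{i,j}$ (one exists because the cell is twice as wide and twice as tall as $q$), and (iv) decomposes the complement $b_{i,j}\setminus q$ into exactly four rectangles $m_1,\dots,m_4$, merging those thinner than $K'$ into $q$ to form $q'$ (at most one horizontal and one vertical piece can be thin, and the merge degrades $f_u(q) > t$ only to $f_u(q') > t - \tfrac{\varepsilon}{10}$). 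The pigeonhole is then over at most four genuinely rectangular pieces, each automatically of width and height exceeding $K'$, and the frequency contrast between $b_{i,j}$ and $q'$ forces one of them below $t - \tfrac{11}{10}\varepsilon$. To rescue your variant you would have to reproduce exactly these devices --- disjointness before averaging, a single interior occurrence of $q$ rather than a covering family, and the thin-piece merge --- at which point you have the paper's proof.
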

\begin{proof}
  Let $K'$ be some integer; let us find an appropriate block $v'$. The
  proof is in five steps. First we give a summary of the steps.
  \begin{enumerate}
  \item Use Lemma~\ref{lemma:2d-exists-big-things} to choose $q \in
    \cq$ and $v \in \cv$ ``big enough'' for the following steps to
    work well.
  \item Let $\beta$ be a block such that $v \subseteq \beta$ and that
    $\width(\beta)$ is a multiple of $2\width(q)$ and $\height(q)$ is
    a multiple of $2\height(q)$. By hypothesis, $v$ has a ``low''
    frequency; we check that $\beta$ has a ``low'' frequency as well.
  \item Cut $\beta$ into blocks of equal size
    $2\width(q) \times 2\height(q)$. Since the frequency of $\beta$ is
    ``low'', one of the small blocks (call it $b$) must have a ``low''
    frequency as well.
  \item As $b$ has dimensions $2\width(q) \times 2\height(q)$, it must
    contain a full occurrence of $q$. Cut $b$ in two parts: $q$ and
    $m$. By hypothesis, $q$ has a ``high'' frequency, while $b$ as a
    whole has a ``low'' frequency. Therefore, $m$ must have a ``very
    low'' frequency. However, $m$ is not a square, so we extract a
    ``big enough'' square with a ``very low'' frequency.
  \item Do the final calculations to check that everything before is
    correct.
  \end{enumerate}

  \paragraph{Step one}
  Let $q$ be a cover of $\bw$ with $f_u(q) > t$, and let
  $B = 2K' \times (\width(q) + \height(q))$. By hypothesis, we can
  choose such covers with arbitrarily large widths and heights, thus
  use Lemma~\ref{lemma:2d-exists-big-things} to choose $q$
  such that:
  \begin{align}
    \width(q) & > 10K' \label{eq:2d-width-condition} \\
    \height(q) & > 10K' \label{eq:2d-height-condition}
  \end{align}
  \begin{equation*}
    \frac{|q|}{B} > max(10, \frac{1}{\frac{t}{t - \varepsilon/10} - 1})
  \end{equation*}
  Therefore, $q$ satisfies the following relations:
  \begin{align}
    t \times |q| &>
                   (t - \frac{\varepsilon}{10})(|q| + B)
                   \label{eq:2d-step4-helper} \\
    |q| & > 10 \times B \label{eq:2d-step5-helper}
  \end{align}
  We will use Equations~\eqref{eq:2d-width-condition},
  \eqref{eq:2d-height-condition}, \eqref{eq:2d-step4-helper}
  and~\eqref{eq:2d-step5-helper} in Step four.

  Now let $v$ be a block of $\bw$ with $f_u(v) \leq t - \varepsilon$,
  and let
  $A = 2\width(q) \times \height(v) + 2\height(q) \times \width(v) +
  4|q|$.
  Use Lemma~\ref{lemma:2d-exists-big-things} to choose $v$ such that:
  \begin{equation}
    \label{eq:2d-step2-helper} %
    \frac{|v|}{A} > \frac{10}{\varepsilon}
  \end{equation}
  We will use Equation~\eqref{eq:2d-step2-helper} in Step two.
  
  \paragraph{Step two}
  Let $\beta$ be one smallest block of $\bw$ such that:
  \begin{itemize}
  \item $v$ occurs in the bottom left-hand corner of $\beta$;
  \item $\width(\beta) = n \times 2\width(q)$ for some $n \in \bigN$;
  \item $\height(\beta) = m \times 2\height(q)$ for some
    $m \in \bigN$.
  \end{itemize}
  Since $\beta$ is minimal, we have
  $\width(\beta) - \width(v) < 2\width(q)$ and
  $\height(\beta) - \height(v) < 2\height(q)$ (see
  Figure~\ref{fig:anatomy-beta}). So we have
  $|v| \leq |\beta| \leq |v| + A$ (on Figure~\ref{fig:anatomy-beta},
  $A$ is the maximal size of the gray area). In particular,
  $|\beta|_u \leq |v|_u + A$ by Lemma~\ref{lemma:2d-fact}, , hence (by
  $f_u(v) < t-\varepsilon$ and Equation~\eqref{eq:2d-step2-helper}):
  \begin{align*}
    f_u(\beta)
    & \leq f_u(v) \times \frac{|v|}{|\beta|} + \frac{A}{|\beta|} \\
    & \leq f_u(v) + \frac{A}{|v|} \\
    & \leq t - \varepsilon + \frac{A}{|v|} \\
    & \leq t - \varepsilon + \frac{\varepsilon}{10} \\
    & = t - \frac{9\varepsilon}{10}
  \end{align*}

  So we have:
  \begin{equation}
    \label{eq:2d-step2} %
    f_u(\beta) \leq t - \frac{9\varepsilon}{10}
  \end{equation}

  \begin{figure}[h]
    \centering
    \includegraphics{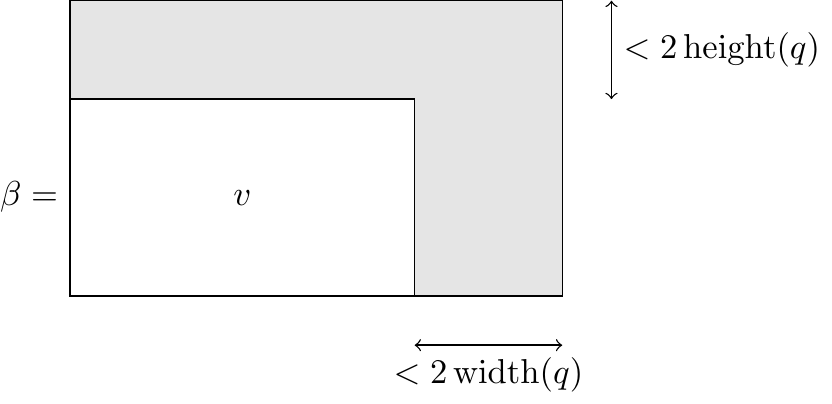}
    \caption{Anatomy of $\beta$.}
    \label{fig:anatomy-beta}
  \end{figure}

  \paragraph{Step three}
  The width and height of $\beta$ are multiples of twice the width and
  height of $q$, respectively. So we can cut $\beta$ into blocks of
  size $2 \width(q) \times 2 \height(q)$. Call $b_{i,j}$ those blocks,
  for $1 \leq i \leq n$ and $1 \leq j \leq m$.
  
  Since $f_u(\beta) \leq t - \frac{9\varepsilon}{10}$
  (Equation~\eqref{eq:2d-step2}), there exists some $i$ and $j$ such
  that $f_u(b_{i,j}) \leq t - \frac{9\varepsilon}{10}$. Suppose not.
  Then, by Lemma~\ref{lemma:2d-fact} applied several times:
  \begin{equation*}
    f_u(\beta) \geq \sum_{1 \leq i \leq n} \sum_{1 \leq j \leq m}
    \frac{|b_{i,j}|}{|\beta|} f_u(b_{i,j}) > (t -
    \frac{9\varepsilon}{10})(\sum_i \sum_j \frac{|b_{i,j}|}{|\beta|}) =
    t - \frac{9\varepsilon}{10} \geq f_u(\beta)
  \end{equation*}
  a contradiction (one inequality is strict). From now on, note $i$
  and $j$ some integers such that:
  $f_u(b_{i,j}) \leq t - \frac{9\varepsilon}{10}$.
  
  \paragraph{Step four}
  De\-com\-po\-se $b_{i,j}$ into five parts: an oc\-cur\-ren\-ce of
  $q$ and four blocks $m_1, m_2, m_3, m_4$ as on
  Figure~\ref{fig:anatomy}.

  \begin{figure}[h]
    \centering
    \includegraphics{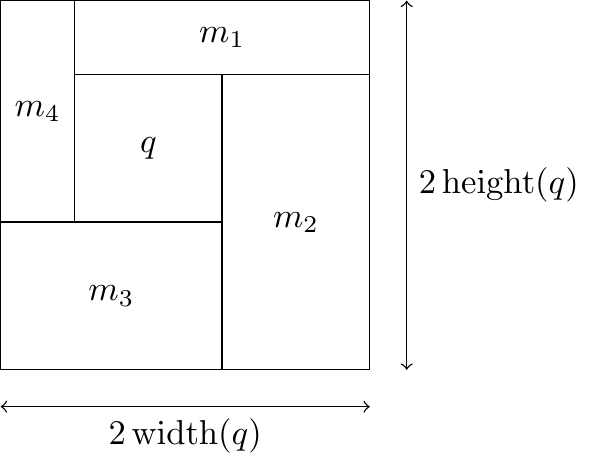}
    \caption{Anatomy of $b_{i,j}$.}
    \label{fig:anatomy}
  \end{figure}

  By definition of the $m_i$'s and
  Equations~\eqref{eq:2d-width-condition}
  and~\eqref{eq:2d-height-condition}, we have:
  \begin{align*}
    \width(m_1) \geq \width(q) &> 10K' \\
    \width(m_3) \geq \width(q) &> 10K' \\
    \width(m_2) + \width(m_4) = \width(q) & > 10K' \\
    \height(m_2) \geq \height(q) &> 10K' \\
    \height(m_4) \geq \height(q) &> 10K' \\
    \height(m_1) + \height(m_3) = \height(q) & > 10K'
  \end{align*}
  Let $m_0$ denote the empty block. We might have either
  $\width(m_2) < K'$ or $\width(m_4) < K'$, but not both. Let
  $\sigma \in \{2,4\}$ such that $\width(m_\sigma) < K'$; by default,
  let $\sigma = 0$. Likewise, we might have $\height(m_1) < K'$ or
  $\height(m_3) < K'$, but not both. Let $\tau \in \{1,3\}$ such that
  $\height(m_\tau) < K'$; by default, let $\tau = 0$. Now set
  $q' = q \cup m_\sigma \cup m_\tau$.
  
  Observe that
  $|q'| = |q| + |m_\sigma| + |m_\tau| \leq |q| + 2 \width(q) \times K'
  + 2 \height(q) \times K'$.
  So we have by Equation~\eqref{eq:2d-step5-helper}:
  \begin{equation*}
    |q| \leq |q'| \leq |q| + B \leq \frac{11}{10}|q|
  \end{equation*}
  and as $|q|/|b_{i,j}| = 1/4$:
  \begin{equation}
    \label{eq:2d-q'-size}
    \frac{1}{4} \leq \frac{|q'|}{|b_{i,j}|} \leq \frac{11}{40}
  \end{equation}
  Moreover, we have $|q'|_u \geq |q|_u$ (as $q'$ contains $q$).
  Therefore, by $f_u(q) > t$ and Equation~\eqref{eq:2d-step4-helper}:
  \begin{equation}
    \label{eq:2d-step4}
    f_u(q') \geq f_u(q) \times \frac{|q|}{|q'|} > t \frac{|q|}{|q|
      + B} > t - \frac{\varepsilon}{10}
  \end{equation}

  \paragraph{Step five}
  We decompose $b_{i,j}$ into $q', m'_1, m'_2, m'_3, m'_4$ as follows.
  If $\width(m_i) > K'$ and $\height(m_i) > K'$, then $m'_i = m_i$;
  otherwise, $m'_i$ is the empty block. Without loss of generality,
  suppose that $m'_1$ is not empty and is minimal for $f_u$ among the
  non-empty $m'_i$'s. Let $M = |m'_1| + |m'_2| + |m'_3| + |m'_4|$.
  Then we have, by Lemma~\ref{lemma:2d-fact} applied $4$ times and
  minimality of $m'_1$ for $f_u$:
  \begin{align}
    |b_{i,j}| & = |q'| +  M \label{eq:2d-bqm-complete} \\
    f_u(b_{i,j}) & \geq \frac{M}{|b_{i,j}|} \times f_u(m'_1) +
                   \frac{|q'|}{|b_{i,j}|} \times f_u(q')
  \end{align}

  \noindent
  Keep these relations in mind and recall that we have from previous
  steps that:
  \begin{align*}
    f_u(b_{i,j})
    & \leq t - \frac{9\varepsilon}{10}
    && \text{ (Step three) } \\
    f_u(q')
    & > t - \frac{\varepsilon}{10}
    && \text{ (Equation~\eqref{eq:2d-step4}) } \\
    \frac{1}{4}
    & \leq \frac{|q'|}{|b_{i,j}|} \leq \frac{11}{40}
    && \text{ (Equation~\eqref{eq:2d-q'-size}) } \\
    \frac{29}{40}
    &\leq \frac{M}{|b_{i,j}|} \leq \frac{3}{4}
    && \text{ (Equations~\eqref{eq:2d-q'-size}
       and~\eqref{eq:2d-bqm-complete}) }
  \end{align*}
  
  Now, assume by contradiction that
  $f_u(m'_1) \geq t - \frac{11\varepsilon}{10}$ and recall that
  $\varepsilon > 0$. Then we have:
  \begin{align*}
    t - \frac{9\varepsilon}{10} \geq f_u(b_{i,j})
    & \geq \frac{M}{|b_{i,j}|}
      \times (t - \frac{11\varepsilon}{10}) +
      \frac{|q'|}{|b_{i,j}|} \times (t - \frac{\varepsilon}{10}) \\
    & \geq t\times\frac{M + |q'|}{|b_{i,j}|} -
      \varepsilon(\frac{3}{4} \times \frac{11}{10} +
      \frac{11}{40} \times \frac{1}{10}) \\
    & \geq t - \frac{341\varepsilon}{400} \gg t -
      \frac{360\varepsilon}{400} = t - \frac{9\varepsilon}{10}
  \end{align*}
  a contradiction. Hence we get
  $f_u(m'_1) < t - \frac{11\varepsilon}{10}$. Set $v' = m'_1$ and the
  lemma is proved.
\end{proof}

Now we need a very similar lemma, but with ``bigger and bigger''
frequencies instead of ``smaller and smaller''. However, before this,
we need an additional technical fact about frequencies.

\begin{lemma}
  \label{lemma:2dbis-upper-bound} %
  Let $u$ denote a finite block. Let $v$ be a finite block which can
  be decomposed into disjoint blocks $v_1, \dots, v_n$ such that
  $\width(v_i) > \width (u)$ and $\height(v_i) > \height(u)$ for all
  $1 \leq i < n$. Then we have:
  \begin{equation*}
    f_u(v) \leq \sum_{i=1}^{n} f_u(v_i) \times \frac{|v_i|}{|v|} +
    \frac{\width(v_i)\times\height(u) + \height(v_i)\times\width(u)}{|v|}
  \end{equation*}
\end{lemma}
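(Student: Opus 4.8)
The plan is to reduce the inequality to a statement about raw occurrence counts and then prove it by a top right-hand corner counting argument, exactly in the spirit of the proof of Lemma~\ref{lemma:2d-fact}. Since $f_u(v_i) \times \frac{|v_i|}{|v|} = \frac{|v_i|_u}{|v|}$, multiplying the claimed inequality through by $|v|$ shows that it is equivalent to
\begin{equation*}
  |v|_u \leq \sum_{i=1}^{n} |v_i|_u + \sum_{i=1}^{n}
  \bigl( \width(v_i)\height(u) + \height(v_i)\width(u) \bigr).
\end{equation*}
So I would work entirely with integer occurrence counts from here on.

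To prove this, I would assign to each occurrence of $u$ in $v$ the unique block $v_i$ containing the top right-hand corner of that occurrence; this is well-defined because the $v_i$ have pairwise disjoint domains and cover $v$, so every position of $v$ — in particular the corner of each occurrence — lies in exactly one $v_i$. Hence $|v|_u = \sum_i N_i$, where $N_i$ is the number of occurrences of $u$ whose top right-hand corner lies in $v_i$. Each occurrence extends only downwards and to the left of its corner, so it is \emph{fully contained} in $v_i$ precisely when its corner is at horizontal distance at least $\width(u)-1$ from the left edge and at vertical distance at least $\height(u)-1$ from the bottom edge of $v_i$; these fully contained occurrences are exactly those counted by $|v_i|_u$.

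It then remains to bound the remaining (straddling) occurrences with corner in $v_i$. Such an occurrence has its corner either in the left strip of width $\width(u)-1$ or in the bottom strip of height $\height(u)-1$ of $v_i$, so a union bound gives at most $(\width(u)-1)\height(v_i) + \width(v_i)(\height(u)-1) \leq \width(u)\height(v_i) + \width(v_i)\height(u)$ such positions, where counting a corner lying in both strips twice only loosens the estimate. Writing $N_i = |v_i|_u + (\text{straddling occurrences with corner in } v_i)$ and summing over $i$ yields the displayed inequality, after which dividing by $|v|$ recovers the lemma.

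The only delicate point — and the place I would be most careful — is the bookkeeping around strict versus non-strict inequalities: the two strips have widths $\width(u)-1$ and $\height(u)-1$, and I enlarge these to $\width(u)$ and $\height(u)$ to match the stated error term, while the disjointness hypothesis on the $v_i$ is what guarantees that no occurrence is ever counted in two different $N_i$'s. The hypothesis $\width(v_i) > \width(u)$ and $\height(v_i) > \height(u)$ is not strictly needed for the upper bound (double-counting only helps us), but it keeps the two strips from exhausting $v_i$, which is what makes the split into ``fully contained'' and ``straddling'' occurrences meaningful.
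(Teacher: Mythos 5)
Your proof is correct and takes essentially the same approach as the paper's: assign each occurrence of $u$ in $v$ to the unique $v_i$ containing a designated corner, split these into occurrences fully contained in $v_i$ (exactly those counted by $|v_i|_u$) and straddling ones whose corners lie in two boundary strips of total area at most $\width(v_i)\times\height(u) + \height(v_i)\times\width(u)$, then sum over $i$ and divide by $|v|$. The only difference is that you track the top right-hand corner where the paper tracks the bottom left-hand one, which is immaterial.
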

\begin{proof}
  Each occurrence of $u$ in $v$ has its bottom left-hand corner in
  some $v_i$. It is either entirely contained in this $v_i$, or it
  overlaps it. There are respectively $|v_i|_u$ and at most
  $\width(v_i)\times\height(u) + \height(v_i)\times\width(u)$ of
  these. Divide by $|v|$ to obtain the result.
\end{proof}

The next lemma is almost identical to
Lemma~\ref{lemma:2d-upper-class}, except that we get bigger and bigger
frequencies instead of lower and lower frequencies. Since the proofs
are very similar (we reverse inequalities and adapt everything to make
it work), we only highlight the differences with the proof of
Lemma~\ref{lemma:2d-upper-class}.

\begin{lemma} \label{lemma:2d-lower-class} %
  Let $\bw$ denote a multi-scale coverable picture and and $u$ a block
  of $\bw$ without frequency. Suppose there exists some $t \in [0;1]$
  and some $\varepsilon > 0$ such that:
  \begin{equation*}
    \forall K \in \bigN,
    \; \exists v \in L_{\geq K}(\bw) \text{ and } f_u(v) \geq t + \varepsilon
  \end{equation*}
  and that:
  \begin{equation*}
    \forall K \in \bigN, \;
    \exists q \in L_{\geq K}(\bw) \text{ cover of $\bw$ and } f_u(q) < t
  \end{equation*}
  Then we have:
  \begin{equation*}
    \forall K' \in \bigN, \;
    \exists v' \in L_{\geq K'}(\bw) \text{ and }
    f_u(v') \geq t + \frac{11}{10}\varepsilon
  \end{equation*}
\end{lemma}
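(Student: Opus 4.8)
The plan is to run the same five-step argument as in the proof of Lemma~\ref{lemma:2d-upper-class}, with every inequality reversed and with Lemma~\ref{lemma:2dbis-upper-bound} taking the place of Lemma~\ref{lemma:2d-fact} at the one point where the direction of passage from a block to its parts changes. Fix $K'$. In Step one I would again invoke Lemma~\ref{lemma:2d-exists-big-things} to pick a cover $q$ of $\bw$ with $f_u(q) < t$ and a block $v$ with $f_u(v) \geq t + \varepsilon$, both of width and height above $10K'$; the size inequalities mirroring Equations~\eqref{eq:2d-step4-helper}, \eqref{eq:2d-step5-helper} and~\eqref{eq:2d-step2-helper} are now oriented so that $q$ large relative to $B$ forces $f_u(q')$ to stay \emph{below} $t + \frac{\varepsilon}{10}$ rather than above $t - \frac{\varepsilon}{10}$. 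In Step two I pad $v$ into a block $\beta$ whose sides are multiples of $2\width(q)$ and $2\height(q)$; using the lower-bound half of Lemma~\ref{lemma:2d-fact} together with $f_u(v) \leq 1$ (which is exactly what keeps the boundary correction down to $\frac{\varepsilon}{10}$) I obtain $f_u(\beta) \geq t + \frac{9\varepsilon}{10}$.

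Step three is the only genuinely new part. I cut $\beta$ into blocks $b_{i,j}$ of size $2\width(q) \times 2\height(q)$ and must deduce that \emph{one} of them has high frequency. The lower bound of Lemma~\ref{lemma:2d-fact} is useless in this direction; instead I apply Lemma~\ref{lemma:2dbis-upper-bound}, whose hypotheses $\width(b_{i,j}) > \width(u)$ and $\height(b_{i,j}) > \height(u)$ hold because $q$ was taken large in Step one. Summed over the grid, its boundary term collapses to $\frac{\height(u)}{2\height(q)} + \frac{\width(u)}{2\width(q)}$, which I can drive below any prescribed multiple of $\varepsilon$ by one further largeness requirement on $q$. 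Consequently, if every $b_{i,j}$ had frequency strictly below $t + \frac{9\varepsilon}{10}$, the upper bound would contradict $f_u(\beta) \geq t + \frac{9\varepsilon}{10}$ up to that negligible loss, so some $b_{i,j}$ carries frequency essentially $t + \frac{9\varepsilon}{10}$, playing exactly the role of the original Step three.

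Steps four and five are then symmetric to the original. I decompose the chosen $b_{i,j}$ into $q$ and four margins, fold the thin margins into $q'$ so that $\frac14 \leq |q'|/|b_{i,j}| \leq \frac{11}{40}$ as before, and use $f_u(q) < t$ to bound $f_u(q') < t + \frac{\varepsilon}{10}$. Because $b_{i,j}$ has high frequency while the bounded-fraction piece $q'$ has low frequency, one of the fat margin squares $m'_i$ must absorb the excess and satisfy $f_u(m'_i) \geq t + \frac{11}{10}\varepsilon$; setting $v' = m'_i$ closes the step. The main obstacle is bookkeeping rather than ideas: I must verify that the overlap term introduced by Lemma~\ref{lemma:2dbis-upper-bound} in Step three really costs only $O(\width(u)/\width(q) + \height(u)/\height(q))$ and is therefore negligible for large $q$, and that after reversing every inequality the accumulated slack still lands the conclusion on $t + \frac{11}{10}\varepsilon$ rather than eroding the $\frac{\varepsilon}{10}$ margins on which the five steps rely.
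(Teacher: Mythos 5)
Your proposal follows the paper's own proof: same five steps, same padding of $v$ into $\beta$, same grid of $2\width(q)\times 2\height(q)$ blocks, same folding of thin margins into $q'$, and your Step three is exactly the paper's, down to the boundary term $\frac{\height(u)}{2\height(q)}+\frac{\width(u)}{2\width(q)}$ controlled by one extra largeness condition on $q$ (the paper's $|q|/C > 1000/\varepsilon$ with $C = 2\width(q)\height(u)+2\height(q)\width(u)$).

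There is, however, one concrete omission: you assert that Lemma~\ref{lemma:2dbis-upper-bound} replaces Lemma~\ref{lemma:2d-fact} ``at the one point where the direction of passage from a block to its parts changes,'' but there are \emph{two} such points, not one. Step five also changes direction. To conclude that some fat margin ``absorbs the excess,'' you need an upper bound of the form $f_u(b_{i,j}) \leq f_u(m'_1)\frac{M}{|b_{i,j}|} + f_u(q')\frac{|q'|}{|b_{i,j}|} + (\text{correction})$, where $m'_1$ is the margin maximizing $f_u$; without a correction term this inequality is simply false, because occurrences of $u$ straddling the internal frontiers between $q'$ and the $m'_i$'s are counted in $|b_{i,j}|_u$ but in none of the parts. (Lemma~\ref{lemma:2dbis-upper-bound} does not apply as stated either, since $q'$ is a non-rectangular union and some $m'_i$ may be empty.) The paper handles this with a dedicated fact, $|b_{i,j}|_u \leq |q'|_u + \sum_i |m'_i|_u + D$ with $D = 10(\width(q)+\height(q))\times|u|$ bounding the straddling occurrences by total frontier length, together with one further Step-one condition $|q|/D > 100/\varepsilon$ making $D/|b_{i,j}| < \varepsilon/400$ negligible in the final contradiction ($t + \frac{342\varepsilon}{400} < t + \frac{356\varepsilon}{400}$). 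The repair is the same kind of bookkeeping you already performed in Step three, so your plan survives, but the correction must be stated and its largeness condition fixed in advance when $q$ is chosen.
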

\begin{proof}
  The proof is very similar to the proof of
  Lemma~\ref{lemma:2d-upper-class}. The five big steps are the same.
  We basically have to reverse each inequality, and replace
  $t - x\varepsilon$ with $t + x\varepsilon$ for all values of $x$.
  We use upper bounds instead of lower bounds, which sometimes
  slightly changes the details of the calcuations.
  
  \paragraph{Step one}
  Let $B = 2K' \times (\width(q) + \height(q))$,
  $C = 2\width(q)\height(u) + 2\height(q)\width(u)$ and
  $D = 10(\width(q) + \height(q)) \times |u|$. Use
  Lemma~\ref{lemma:2d-exists-big-things} to choose $q$ such that:
  \begin{align*}
    \width(q) >& 10K' \\
    \height(q) >& 10K' \\
    \frac{|q|}{C} >& \frac{1000}{\varepsilon} \\
    \frac{|q|}{B} >& \frac{\varepsilon}{10} \\
    \frac{|q|}{D} >& \frac{100}{\varepsilon}
  \end{align*}
  Therefore $q$ satisfies the following relations:
  \begin{align}
    \frac{C}{|q|} <& \frac{\varepsilon}{1000}
                      \label{eq:2dbis-step3-helper} \\
    \frac{B}{|q|} <& \frac{10}{\varepsilon}
                              \label{eq:2dbis-step4-helper} \\
    \frac{D}{4|q|} <& \frac{\varepsilon}{400}
                      \label{eq:2dbis-step5-helper}
  \end{align}
  Equations~\eqref{eq:2dbis-step3-helper},
  \eqref{eq:2dbis-step4-helper} and~\eqref{eq:2dbis-step5-helper} will
  be used in steps three, four and five, respectively.

  Let
  $A = 2\width(q) \times \height(v) + 2\height(q) \times \width(v) +
  4|q|$.
  Use Lemma~\ref{lemma:2d-exists-big-things} again to choose $v$ such
  that we have:
  \begin{align*}
    \width(v) & > 2\width(q) \\
    \height(v) & > 2\height(q) \\
    \frac{|v|}{A}
    & > \frac{t + 9\varepsilon/10}{\varepsilon/10}
  \end{align*}
  Therefore, $v$ satisfies the following relation:
  \begin{align}
    (t+\varepsilon) \times |v|
    >& (t + \frac{9\varepsilon}{10})(|v| + A)
       \label{eq:2dbis-step2-helper}
  \end{align}
  Equation~\eqref{eq:2dbis-step2-helper} will be used in step two.
  Observe that $A < 3|v|$.

  \paragraph{Step two}
  Let $\beta$ be one smallest block of $\bw$ such that:
  \begin{itemize}
  \item $v$ occurs in the bottom left-hand corner of $\beta$;
  \item $\width(\beta) = 2 n \width(q)$ for some $n \in \bigN$;
  \item $\height(\beta) = 2 m \height(q)$ for some $m \in \bigN$.
  \end{itemize}
  As in proof of Lemma~\ref{lemma:2d-upper-class}, $|\beta| \leq
  |v|+A$ and so (recall that $A < 3|v|$) we have $|\beta| < 4|v|$.

  As $v \subseteq \beta$, we have $|\beta|_u \geq |v|_u$. Hence, by
  $f_u(v) \geq t + \varepsilon$ and
  Equation~\eqref{eq:2dbis-step2-helper}:
  \begin{equation}
    \label{eq:2dbis-step2} %
    f_u(\beta) \geq f_u(v) \times \frac{|v|}{|\beta|} \geq (t + \varepsilon)
    \times \frac{|v|}{|v| + A} > t + \frac{9\varepsilon}{10}
  \end{equation}

  \paragraph{Step three}
  Decompose $\beta$ into blocks $b_{i,j}$, such that
  $\width(b_{i,j}) = 2\width(q)$ and $\height(b_{i,j}) = 2\height(q)$,
  for $1 \leq i \leq n$ and $1 \leq j \leq m$. In particular,
  $nm = |\beta| / 4|q|$. Since
  $f_u(\beta) > t + \frac{9\varepsilon}{10}$, there exists $i,j$ such
  that $f(b_{i,j}) > t + \frac{89\varepsilon}{100}$. Suppose not.
  Then, by Lemma~\ref{lemma:2dbis-upper-bound}:
  \begin{align*}
    f_u(\beta) 
    \leq & \sum_{i = 1}^n \sum_{j = 1}^m
           f_u(b_{i,j}) \times \frac{|b_{i,j}|}{|\beta|} +
           \frac{C}{|\beta|} \\
    \leq & \sum_{i = 1}^n \sum_{j = 1}^m
           (t + \frac{89\varepsilon}{100}) \times \frac{1}{nm} + 
           \frac{C}{|v|} \\
    \leq & t + \frac{89\varepsilon}{100} + \sum_{i = 1}^n \sum_{j = 1}^m
           \frac{C}{|v|} \\
    \leq & t + \frac{89\varepsilon}{100} +
           \frac{|\beta|}{4|q|} \times \frac{C}{|v|} \\
    \leq & t + \frac{89\varepsilon}{100} +
           \frac{4|v|}{4|q|} \times \frac{C}{|v|} \\
    < & t + \frac{9\varepsilon}{10}
       && \text{(Equation~\eqref{eq:2dbis-step3-helper})}
  \end{align*}
  a contradiction, as $f_u(\beta) > t + \frac{9\varepsilon}{10}$. From
  now on, let $i$ and $j$ denote some integers such that:
  \begin{equation}
    \label{eq:2dbis-step3} %
    f_u(b_{i,j}) > t + \frac{89\varepsilon}{100}
  \end{equation}
  
  \paragraph{Step four}
  Decompose $b_{i,j}$ into $q, m_1, m_2, m_3, m_4$ and define $q'$ as
  in the proof of Lemma~\ref{lemma:2d-upper-class}. We have
  $|q'| \geq |q|$. Then, by Lemma~\ref{lemma:2d-fact}, $f_u(q) < t$ and
  Equation~\eqref{eq:2dbis-step4-helper}:
  \begin{equation}
    \label{eq:2dbis-step4} %
    f_u(q') < f_u(q) \times \frac{|q|}{|q'|} + \frac{B}{|q'|} < t
    \times \frac{|q|}{|q'|} + \frac{B}{|q|} < t + \frac{\varepsilon}{10}
  \end{equation}
  Moreover, Equation~\eqref{eq:2d-q'-size} is still valid.

  \paragraph{Step five} Decompose $b_{i,j}$ into
  $q', m'_1, m'_2, m'_3, m'_4$ as in Lemma~\ref{lemma:2d-upper-class}.
  Let $M = |m'_1| + |m'_2| + |m'_3| + |m'_4|$. We have:
  \begin{equation}
    \label{eq:2dbis-mi-sizes} %
    |b_{i,j}| = 4|q| = |q'| + M
  \end{equation}
  Without loss of generality, suppose $m'_1$ is not empty and is
  maximal for $f_u$ among the $m'_i$'s.

  Now we argue that:
  \paragraph{Fact}
    \begin{equation*}
      |b_{i,j}|_u \leq |q'|_u +
      |m'_1|_u + |m'_2|_u + |m'_3|_u + |m'_4|_u + D
    \end{equation*}

  \begin{proof}
    The previous equation means that $D$ is an upper bound on the
    number of occurrences of $u$ in $b_{i,j}$ which overlap over
    several components of $b_{i,j}$.

    First, view $b_{i,j}$ as decomposed over
    $\{q, m_1, m_2, m_3, m_4\}$ as on Figure~\ref{fig:anatomy}. Then
    the total length of horizontal frontiers is
    $\width(m_4) + \width(m_2) + 2\width(q)$, and the total length
    of vertical frontiers is
    $\height(m_1) + \height(m_3) + 2\height(q)$. Hence, with this
    decomposition, the number of occurrences of $u$ overlapping over
    several components is bounded by
    \begin{align*}
      & \height(u)(\width(m_4) + \width(m_2) + 2\width(q)) + \\
      & \width(u)(\height(m_1) + \height(m_3) + 2\height(q)) \\
      =
      & \height(u) \times 3\width(q) + \width(u) \times 3\height(q)
    \end{align*}
    Re\-call that the decomposition $\{q', m'_1, m'_2, m'_3, m'_4\}$
    is $\{q, m_1, m_2, m_3, m_4\}$ where some components have been
    merged, and others turned to the empty set. Hence the sum of
    lengths of frontiers is shorter. Hence, the number of overlapping
    occurrences of $u$ is smaller. So the given bound is also a bound
    for the number of overlapping occurrences of $u$ for $b_{i,j}$
    decomposed as $\{q', m'_1, m'_2, m'_3, m'_4\}$.
  \end{proof}

  To end the proof of Lemma~\ref{lemma:2d-upper-class}, recall that
  $M = |m'_1| + |m'_2| + |m'_3| + |m'_4|$. We can deduce from the
  previous fact that:
  \begin{equation}
    \label{eq:2dbis-final-eq} %
    f_u(b_{i,j}) \leq f_u(m'_1) \times
    \frac{M}{|b_{i,j}|} + f_u(q') \times \frac{|q'|}{|b_{i,j}|} + \frac{D}{|b_{i,j}|}
  \end{equation}
  
  Keep Equations~\eqref{eq:2dbis-mi-sizes}
  and~\eqref{eq:2dbis-final-eq} in mind, and recall from previous
  steps that:
  \begin{align*}
    f_u(b_{i,j}) & > t + \frac{89\varepsilon}{100}
    && (\text{Equation~\eqref{eq:2dbis-step3}}) \\
    f_u(q') & < t + \frac{\varepsilon}{10}
    && (\text{Equation~\eqref{eq:2dbis-step4}}) \\
    \frac{1}{4} & \leq \frac{|q'|}{|b_{i,j}|} \leq \frac{11}{40}
    && (\text{Equation~\eqref{eq:2d-q'-size}}) \\
    \frac{29}{40} & \leq \frac{M}{|b_{i,j}|} \leq \frac{3}{4}
    && (\text{Equation~\eqref{eq:2d-q'-size} and~\eqref{eq:2d-bqm-complete}}) \\
    \frac{D}{|b_{i,j}|} &< \frac{\varepsilon}{400}
    && (\text{Equation~\eqref{eq:2dbis-step5-helper}})
  \end{align*}
  Finally assume by contradiction that
  $f_u(m'_1) \leq t + \frac{11\varepsilon}{10}$ and recall that
  $\varepsilon > 0$. Then:
  \begin{align*}
    t + \frac{89\varepsilon}{100} < f_u(b_{i,j})
    & < (t + \frac{11\varepsilon}{10})
      \times \frac{M}{|b_{i,j}|} +
      (t + \frac{\varepsilon}{10}) \times \frac{|q'|}{|b_{i,j}|} +
      \frac{D}{|b_{i,j}|} \\
    & < t \times \frac{M + |q'|}{|b_{i,j}|} +
      \varepsilon(\frac{11}{10} \times \frac{3}{4} +
      \frac{1}{10} \times \frac{11}{40} + \frac{1}{400}) \\
    & < t + \frac{342\varepsilon}{400} \ll t +
      \frac{356\varepsilon}{400} = t + \frac{89\varepsilon}{100}
  \end{align*}
  Hence we get $f_u(m'_1) > t + \frac{11\varepsilon}{10}$. Set
  $v' = m'_1$ and the lemma is proved.
\end{proof}

We are now ready for the main proof.

\begin{proof}[Proof of Theorem~\ref{theorem:2d-freqs}]
  Suppose that $\bw$ is a multi-scale coverable picture and that $u$
  is a block of $\bw$ without frequencies. By
  Lemma~\ref{lemma:2d-cases} there exist real numbers $t$ and
  $\varepsilon$, an infinite set of blocks $\cv$ and an infinite set
  of covers $\cq$ such that either $f(v) > t + \varepsilon > t > f(q)$
  for all $q \in \cq$, $v \in \cv$, or
  $f(v) < t < t - \varepsilon < f(q)$ for all $q \in \cq$,
  $v \in \cv$. Moreover, $\cv \cap L_{\geq K}(\bw)$ and
  $\cq \cap L_{\geq K}(\bw)$ are non-empty for all $K \in \bigN$.
  
  Suppose we have $f(v) < t - \varepsilon < t < f(q)$ for all
  $q \in \cq$, $v \in \cv$. In this situation,
  Lemma~\ref{lemma:2d-upper-class} states that there exists an
  infinite set of blocks $\cv'$ such that
  $f(v') \leq t - \frac{11}{10}\varepsilon$ for all $v' \in \cv'$ (and
  $\cv' \cap L_{\geq K}(\bw)$ is non-empty for all $K \in \bigN$). We
  can apply the same lemma again and get blocks whose frequency is
  $\leq t - (\frac{11}{10})^2\varepsilon$. Then we can apply the lemma
  again and again, until we get
  $f(v') \leq t - (\frac{11}{10})^n\varepsilon < 0$ for some $n$. At
  that point, we get blocks with negative frequencies: a
  contradiction.

  If $f(v) > t+\varepsilon > t > f(q)$ for all $v \in \cv, q \in \cq$,
  the proof follows the same idea, except that we use
  Lemma~\ref{lemma:2d-lower-class}. We get blocks with higher and
  higher frequencies, until we find a block with a frequency bigger
  than $1$: a contradiction again.
\end{proof}

\subsection{Uniform Recurrence}
\label{sec:ms-ur}

Recall that a $\bigZ^2$-word $\bw$ is \emph{uniformly recurrent} when
all its blocks occur infinitely often with bounded gaps. In
$\bigN$-words, multi-scale coverability implies uniform recurrence.
However, this result does not hold for $\bigZ$-words. Here is an
example of a $\bigZ$-word which is multi-scale coverable, but not
uniformly recurrent:
\begin{equation*}
  ^\omega(ab)a(ab)^\omega = \dots{}babababa\,a\,babababa\dots
\end{equation*}
Any word matching the $aba(ba)^*$ regular expression is a covering
pattern of this word. However, the pattern $aa$ only occur once, hence
it is not uniformly recurrent.

It is easy to generalize these results to pictures.
\begin{proposition}
  \label{proposition:n2-ur}
  Any multi-scale, $\bigN^2$-word $\bw$ is uniformly recurrent.
\end{proposition}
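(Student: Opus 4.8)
The plan is to mimic the one-dimensional argument for $\bigN$-words, whose engine is the observation that every cover of a $\bigN$-word is a prefix. The two-dimensional analogue, which I would establish first, is that \emph{every cover of a $\bigN^2$-word is a ``corner block''}, i.e.\ occurs at the origin. Indeed, the position $(0,0)$ must lie inside some occurrence of the cover $q$; but that occurrence has its bottom left-hand corner at $(-i,-j)$ for some $0 \le i < \width(q)$ and $0 \le j < \height(q)$, and since $\bw$ lives on $\bigN^2$ we are forced to have $i=j=0$. Hence $q = \bw[(0,0) \dots (\width(q)-1, \height(q)-1)]$. This is exactly what fails in the $\bigZ^2$-case, and it explains why the statement is restricted to $\bigN^2$.

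Next I would reduce, at each scale, to working with a single cover. Fix $k$. There are only finitely many $k \times k$-blocks of $\bw$, and each of them occurs somewhere, so there is an integer $R$ such that the corner $\bw[(0,0) \dots (R-1,R-1)]$ already contains every $k \times k$-block of $\bw$. Using multi-scale coverability, I pick a cover $q$ with $\width(q) > R$ and $\height(q) > R$. Since $q$ is a corner block, it contains this $R \times R$ corner, and therefore a copy of every $k \times k$-block of $\bw$.

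The final step is a bounded-gap (windowing) argument. The point is that, because $q$ covers $\bw$, every sufficiently large square window must contain a \emph{complete} occurrence of $q$. Concretely, I would set $\ell = 2\max(\width(q),\height(q))$ and, given an arbitrary window $\bw[(X,Y) \dots (X+\ell-1,Y+\ell-1)]$, exhibit a position near its centre whose covering occurrence of $q$ is forced to lie entirely inside the window: the box of admissible bottom left-hand corners of that occurrence fits inside the window precisely when $\ell \ge 2\max(\width(q),\height(q))$. That occurrence, being a copy of $q$, contains all $k \times k$-blocks of $\bw$. Hence for every $k$ there is an $\ell$ such that all $k \times k$-blocks appear in all $\ell \times \ell$-blocks, which is exactly the definition of uniform recurrence.

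The conceptual crux is the corner-block observation; once it is in hand, the rest is routine. I expect the only genuinely fiddly point to be the windowing inequality guaranteeing a whole occurrence of $q$ inside an $\ell \times \ell$ square, where one must track the margins of width $\width(q)$ and height $\height(q)$ carefully to be sure a centred position and its full covering occurrence both stay inside the window.
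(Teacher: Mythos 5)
Your proof is correct and takes essentially the same approach as the paper's: the corner-block observation (every cover of an $\bigN^2$-word occurs at the origin), the fact that arbitrarily large covers must then contain every small block, and the bounded-gap property of occurrences of a cover. The paper's version is simply terser, leaving your second and third steps (the uniform choice over all $k \times k$-blocks and the explicit windowing bound $\ell = 2\max(\width(q),\height(q))$) implicit.
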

\begin{proof}
  This is an adaptation of the proof from~\cite{MarcusMonteil2006Arxiv}.
  Consider a rectangle $r$ occurring in $\bw$. Since $\bw$ has
  arbitrarily large covering patterns and all these patterns occur at
  the origin, one of these patterns contains $r$ entirely. Hence $r$
  occurs whenever the covering patterns occurs, and the latter occurs
  infinitely many times with bounded gaps.
\end{proof}

Now let us see an example of coverable $\bigZ^2$-words which is not
uniformly recurrent (or even recurrent). Consider $q =
\begin{smallmatrix}
  b & b & a \\
  b & b & b \\
  a & b & b
\end{smallmatrix}$
and the word displayed on Figure~\ref{fig:non-ur}. The central block:
\begin{equation*}
  \begin{matrix}
    b & b & b & b & a \\
    b & b & b & b & b \\
    b & b & a & b & b \\
    b & b & b & b & b \\
    a & b & b & b & b \\
  \end{matrix}
\end{equation*}
occurs only once, hence this picture is not uniformly recurrent.

To get uniform recurrence back, we propose a notion of \emph{strong}
multi-scale coverable. A word (or a picture) is \emph{strongly}
multi-scale coverable if and only if any factor (or block) occuring
also occurs in a cover. Observe that multi-scale coverability and its
strong counterpart are equivalent on $\bigN$-words.

\begin{proposition}
  An infinite picture $\bw$ is strongly multi-scale coverable if and
  only if it is multi-scale coverable and uniformly recurrent.
\end{proposition}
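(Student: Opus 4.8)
The plan is to prove both implications directly from the definitions, leaning on two elementary facts about covers. The first is that a cover $q$ of $\bw$ is itself a block of $\bw$ (it has occurrences), and that its occurrences recur with bounded gaps: since every position of $\bw$ lies inside an occurrence of $q$, an occurrence covering the centre of any square window of side $2\max(\width(q),\height(q))$ lies entirely inside that window, so every sufficiently large square block of $\bw$ contains a full occurrence of $q$. The second fact is that, since $\Sigma$ is finite, for each fixed $k$ there are only finitely many $k \times k$-blocks of $\bw$.

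For the forward direction (strongly multi-scale coverable implies multi-scale coverable and uniformly recurrent), I would first note that strong multi-scale coverability already entails multi-scale coverability: an infinite picture has square blocks of every size, and a $k \times k$-block occurring inside a cover $q$ forces $\width(q) \geq k$ and $\height(q) \geq k$, so covers of arbitrarily large width and height exist. For uniform recurrence, I would fix $k$ and enumerate the finitely many $k \times k$-blocks $B_1, \dots, B_p$ of $\bw$. Each $B_i$ occurs inside some cover $q_i$ at a fixed relative offset; by the bounded-gap fact above, occurrences of $q_i$—and hence of $B_i$—recur in every window whose side exceeds some constant $R_i$ depending only on the dimensions of $q_i$. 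Setting $\ell = \max_i R_i$ (a maximum over finitely many values) then gives an $\ell$ such that every $\ell \times \ell$-block of $\bw$ contains each $B_i$, which is exactly the definition of uniform recurrence.

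For the converse (multi-scale coverable and uniformly recurrent implies strongly multi-scale coverable), I would take an arbitrary block $B$ of $\bw$, extend it (using totality of $\bw$) to a square $k \times k$-block $B'$ of $\bw$ containing it, and apply uniform recurrence to obtain an $\ell$ such that $B'$ occurs in every $\ell \times \ell$-block of $\bw$. Multi-scale coverability then supplies a cover $q$ with $\width(q) \geq \ell$ and $\height(q) \geq \ell$; as $q$ is a block of $\bw$, it contains an $\ell \times \ell$-sub-block, inside which $B'$, and therefore $B$, occurs. Hence every block of $\bw$ occurs in a cover, as required.

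The step requiring the most care is the bounded-gap claim for occurrences of a cover, which underlies the forward direction; making the constant explicit as above disposes of it. Apart from that, the argument is bookkeeping, and the main subtlety is keeping the quantifiers straight: finiteness of the set of $k \times k$-blocks is what lets me pass from ``each block recurs with its own gap bound'' to ``a single uniform $\ell$ works for all of them''.
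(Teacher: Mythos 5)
Your proof is correct and takes essentially the same route as the paper's: both directions rest on the facts that a cover's occurrences recur with gaps bounded by its own dimensions, and that any sufficiently large cover must contain an occurrence of a block that recurs with bounded gaps. You simply make explicit what the paper leaves implicit (the bounded-gap constant, the use of finiteness of the set of $k \times k$-blocks to get one uniform $\ell$, and the extension of a non-square block to a square one), which is sound bookkeeping rather than a different argument.
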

\begin{proof}
  Strong multi-scale coverability implies multi-scale coverability:
  each occurring block must occur within a cover, and there are
  arbitrarily large (in width and height) blocks, so there must be
  arbitrarily large (in width and height) covers.
  
  Strong multi-scale coverability also implies uniform recurrence,
  almost by definition: any block occurs in a cover, which in turn
  occurs infinitely often with bounded gaps.

  Finally, multi-scale coverability and uniform recurrence imply
  strong multi-scale coverability. Indeed, let $\bw$ be a multi-scale
  coverable picture which is also uniformly recurrent, and let $B$ be
  a block of $\bw$. Since $\bw$ is uniformly recurrent, there exists
  $n \in \bigN$ such that any $n \times n$-block of $\bw$ contain an
  occurrence of $B$. By definition of multi-scale coverability, there
  are covers which are bigger than $2n \times 2n$ (both in width and
  in height); such covers must contain an occurrence of $B$.
\end{proof}

\section{Conclusion}
\label{sec:conclu}

Coverability is a local rule. Our aim was to determine whether this
notion enforces some global properties on covered words. Although this
is not the case in one dimension, Theorem~\ref{theorem:qp-ent0} shows
that, under some natural hypotheses on the cover (natural in the sense
they take into account the two dimensions), this enforcement is
possible in two dimensions. However, many other questions have to be
considered to better understand the power of coverability, especially
when considering that covers are rectangular blocks.

Our approach could be linked to considerations from dynamical systems
and tilings. For instance, a natural question is: are self-similarity
and multi-scale coverability linked? Our study already states that, as
in the one-dimensional case, multi-scale coverability implies other
properties, such as existence of frequencies and zero topological
entropy. A difference with the one-dimensional case is that multiscale
coverability does not impy uniform recurrence.
Proposition~\ref{proposition:n2-ur} and discussion before explain that
this difference does not come directly from the change of dimension
but much more from the fact that we consider the full plane $\bigZ^2$
instead of the quarter of plane $\bigN^2$.

One could ask why considering $\bigZ^2$-words as generalization of
$\bigN$-words. This stems from the fact that $\bigZ^2$-words are much
more relevant in the area of tilings. Let us observe that most
results, such as Proposition~\ref{proposition:coverable-nothing},
Proposition~\ref{proposition:msqp-ent0} and
Theorem~\ref{theorem:2d-freqs}, can be directly adapted to
$\bigN^2$-words. However, it is much more difficult for
Theorem~\ref{theorem:qp-ent0} and Theorem~\ref{theorem:construct}, as
the condition on border is not adequate for $\bigN^2$-words.

\section*{Acknowledgements}

We acknowledge Ana{\"{e}}l Grandjean for his help with the proof of
Theorem~\ref{theorem:2d-freqs} and many interesting discussions during
the preparation of this paper.

\section*{References}

\bibliographystyle{abbrv}
\bibliography{biblio}{}

\end{document}